\newcites{methods}{References}
\newcites{SI}{Supplementary References}
\tikzset{
	node distance=40pt,
	open tensor/.style={rectangle,draw,align=center,minimum size=25pt},
	fixed tensor/.style={fill,circle,inner sep=0pt,outer sep=0pt,minimum size=5pt}
}
\newcommand{\ignore}[1]{}
\newcommand{\ket}[1]{\left|#1\right\rangle}
\newcommand{\bra}[1]{\left\langle#1\right|}
\newcommand{\braket}[2]{\left\langle #1| #2 \right\rangle}
\newcommand{\ketbra}[2]{ \left| #1 \right\rangle\left\langle #2 \right|}
\theoremstyle{definition}
\newtheorem{theorem}{Theorem}
\newtheorem{corollary}{Corollary}
\newtheorem{definition}{Definition}
\newtheorem{lemma}{Lemma}
\newtheorem{claim}{Claim}
\newtheorem{remark}{Remark}
\newtheorem{fact}{Fact}
\begin{document}
	
	\title{Quantum Circuit-depth Lower Bounds For Homological Codes} 
	
	\author{Dorit Aharonov ~\& Yonathan Touati} 
	\affil{School of Computer Science and Engineering, 
		The Hebrew University, \\ Jerusalem,  Israel.}
	
	\maketitle
	
	\begin{abstract}
		We provide an $\Omega(log(n))$ lower bound for 
		the depth of any quantum circuit generating the unique groundstate of 
		Kitaev's spherical code. No circuit-depth lower bound was known 
		before on this code in the general case where the gates can connect 
		qubits even if they are far away; 
		It is a known hurdle in computional complexity to handle general circuits, 
		and indeed the proof requires introducing 
		new techniques beyond those used to prove the $\Omega(\sqrt{n})$ 
		lower bound which holds in the geometrical case \cite{TQO}. 
		The lower bound is tight (up to constants) since a MERA circuit 
		of logarithmic depth exists\cite{MERA}.  
		To the best of our knowledge, this is the first time a quantum 
		circuit-depth lower bound is given for a unique 
		ground state of a {\it gapped} local Hamiltonian. Providing a lower bound in 
		this case seems more challenging, since such 
		systems exhibit exponential 
		decay of correlations \cite{hastingskoma} and standard lower bound techniques 
		\cite{aharonovNisanKitaev} do not apply. 
		We prove our lower bound by introducing the new
		notion of $\gamma$-separation, and analyzing 
		its behavior using algebraic topology arguments.

		We extend out methods also to a wide class of polygonal complexes 
		beyond the sphere, and prove a circuit-depth lower bound whenever the 
		complex does not have a small "bottle neck" (in a sense which we define). 
		Here our lower bound on the circuit depth is only $\Omega(logloglog(n))$. 
		We conjecture that 
		the correct lower bound is at least $\Omega(loglog(n))$, but this 
		seems harder to achieve due to the possibility of hyperbolic geometry. 
		For general simplicial complexes the lack of geometrical 
		restriction on the gates becomes considerably more problematic than 
		for the sphere, and we need to thoroughly modify the original argument in 
		order to get a meaningful bound. 
		
		To the best of our knowledge, this is the first time the class of 
		trivial quantum states is separated from the class of unique groundstates of 
		gapped local Hamiltonians, improving our understanding of 
		the heirarchy of global entanglement and topological order; 
		we provide a survey of the current status of this heirarchy for completeness. 
		We hope the tools developed here will be useful in various contexts in which 
		quantum circuit depth lower bounds are of interest, 
		including the study 
		of topological order, quantum computational complexity and 
		quantum algorithmic speed-ups. 	
	\end{abstract}

	\section{Introduction}
	Since the early days of quantum mechanics, physicists have tried to 
	understand and quantify entanglement in quantum systems. 
	Whereas quantifying two-body entanglement is rather well understood, using 
	Von Neuman entropy of subsytems, the question is far more involved 
	when many-body entanglement is concerned. 
	In \cite{Hastingstrivial} 
	Hastings suggested a very interesting definition which captures 
	perhaps the first step in understanding this question: 
	He suggested to use, as a first order approximation of global 
	entanglement or {\it topological order}, 
	the requirement that a state be {\it non-trivial}, 
	where a quantum state is trivial if it can be 
	generated (or approximated) efficiently by a constant depth quantum circuit: 
	
	\begin{definition} {\bf Trivial pure states (roughly)} 
		\label{def:trivialStates}
		A family $\{\ket{\psi_n}\}$ of pure states on n qubits is called trivial if for every $n$, $\ket{\psi_n}=U_n\ket{0}^n$ where $U_n$ is a quantum circuit of bounded depth $d_n=O(1)$ made of quantum gates acting on at most $O(1)$ 
		qubits each. 
	\end{definition}
	
	One can include a notion of approximation in a natural way \cite{Hastingstrivial}. 
	If one is interested in simulating local observable on such states, 
	as is often the case in physics as well as in quantum complexity, then the term
	"trivial" is well justified. 
	This is because sampling local 
	observables of such a state $\ket{\psi}$ can be done efficiently with a classical computer, given the classical 
	description of the shallow quantum circuit that generated it.  The way to do this is to notice that the size of the {\it light cone} of each output particle, namely the number 
	of qubits in the input and throughout the evolution affecting its result, is $O(1)$. 
	From a more physics-like point of view, 
	trivial quantum states are those states that are adiabatically connected, via a 
	path of gapped Hamiltonians, to a tensor product state\cite{ChenGuWen}; 
	this is the simplest type of a 
	quantum states from the point of view of entanglement, and so
	this means that there is a constant time adiabatic 
	evolution which generates this state starting from a tensor product state. 
	In Physics language, this adiabatic connection without closing the gap 
	is a way to partition the set of groundstates to {\it equivalence classes}, 
	called ``phases''; and so trivial states are viewed as belonging to the 
	same ``phase'' as tensor product states, constituting
	the simplest possible states from the point of view 
	of entanglement.  In some sense, all other 
	states are viewed as having {\it topological order}
	(e.g., \cite{FreedmanHastings}). 
	
	We note that surprisingly, even such simple 
	quantum states as trivial states can exhibit extremely interesting 
	correlations, when considering measuring {\it all} 
	of their particles 
	\cite{DivincenzoTerhal, BravyiKonig, IQC}; The correlations that can 
	be generated 
	even in such simple quantum evolutions can be computationally hard classically.
	Yet, in the most common situation in quantum complexity, 
	in which local measurements are of interest, 
	states in this class can indeed be regarded as
	``simple'' from the point of view 
	of entanglement. 
	
	We take this point of view, and ask 
	whether a state (rather, a family of states) 
	is trivial or not. 
	From a physics point of view, 
	the fact that a state is trivial provides 
	strong intuition about how limited its global entanglement is, 
	and is a basic question in the study of topological order 
	\cite{Hastingstrivial, FreedmanHastings}. 
	The question is of importance of course 
	also in quantum computation theory (see \cite{B}): 
	proving non-triviality of states means proving 
	a {\it lower bound} on the minimal-depth of the circuit generating the state, 
	and such lower bounds are a central topic of interest when understanding 
	the complexity of quantum states. 
	As a notable example, a study of Freedman and Hastings \cite{FreedmanHastings} 
	connects between the triviality of states 
	and the major open problem of resolving the 
	quantum PCP conjecture\cite{Detectabilitylemma,QPCP}. 
	One version of this conjecture, called the 
	{\it gapped} qPCP version, states that 
	deciding whether the ground energy of a local Hamiltonian with $M$
	local terms is $0$ or at least a constant fraction of $M$, is QMA hard.   
	Freedman and Hastings noted that assuming $qPCP$ holds 
	(and assuming the class $QMA$ not equal to $NP$), 
	one needs to at least be able to 
	point at a family of local Hamiltonians whose low energy states 
	are {\it all} non-trivial; 
	the existence of such  family is called the 
	NLTS conjecture \cite{FreedmanHastings}
	(and it is still wide open, despite some recent progress \cite{EldarHarrow}). 
	Thus, whether or not a family of quantum states is trivial, is of interest 
	in both complexity and physical context.   
	
	{~}
	
	\noindent{\bf Simple arguments for quantum circuit depth lower bounds} 
	There are very few known techniques for 
	proving circuit lower bounds
	of quantum states, and very few classes of non-trivial states known. 
	The easiest argument in this direction 
	was given already in \cite{aharonovNisanKitaev} (see also \cite{Chen2})  
	to show that the $\ket{CAT^+}$ 
	state $\frac{1}{\sqrt{2}}(|0^n\rangle +|1^n\rangle)$ is non-trivial 
	(albeit not in 
	this terminology), and requires $\Omega(log(n))$ circuit-depth.  
	The argument goes by noting that any two qubits 
	in $\ket{CAT^+}$  are correlated (i.e., their reduced density matrix is far 
	from a tensor product); while a 
	a constant depth circuit correlates any given qubit with at 
	most a {\it constant} 
	number of other qubits. 
	This simple correlation based argument, however, is rather limited, since 
	states in which any two qubits are correlated are rather special. 
	When correlations are more complicated (in particular, multipartite) 
	this argument will not work. 
	
	We note that the above correlation-based argument can 
	be adapted to provide an example of 
	a non-trivial state which is also a unique groundstate 
	of a local-Hamiltonian; while it is easy to see that $\ket{CAT^+}$  itself
	is not the unique 
	groundstate of any local Hamiltonian (since it agrees locally with 
	$\ket{CAT^-}$ ), 
	using the by-now-standard method of the circuit-to-Hamiltonian 
	construction \cite{KitaevQMA}(as suggested to us by Itai Arad \cite{arad})
	one can construct a state whose restriction to a non-negligible 
	subsystem is very close to $\ket{CAT^+}$ e, and yet 
	the overall state 
	is a unique groundstate of a local Hamiltonian (See Appendix B). 
	
	Another rather simple argument implying non-triviality, 
	works for quantum error correcting code 
	states. The argument uses the following simple fact:  
	\begin{fact}\label{fact:trivialareunique}
		{\bf trivial quantum states are unique ground states of 
			gapped local Hamiltonians}
		Let $\ket{\psi_n}$ be the output state of a constant depth circuit on the tensor state $\ket{0}^n$, then $\ket{\psi_n}$ is the unique ground state of a gapped local Hamiltonian $H=\sum_{i=1}^{m}H_i$. 
	\end{fact}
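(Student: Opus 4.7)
The plan is to conjugate the simplest possible gapped local Hamiltonian by the trivializing circuit $U$, and then verify that the resulting terms remain local, with a locality bound independent of $n$.

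Concretely, I would start from the onsite Hamiltonian $H_0 = \sum_{i=1}^{n} \ketbra{1}{1}_i$, which has $\ket{0^n}$ as its unique groundstate with eigenvalue $0$, and spectral gap exactly $1$ (the next eigenvalue is $1$, since eigenvalues of $H_0$ are just Hamming weights). By assumption, $\ket{\psi_n} = U_n \ket{0}^{\otimes n}$ for a circuit $U_n$ of depth $d_n = O(1)$ made of gates each acting on at most $k = O(1)$ qubits. Define
\[
H \;=\; U_n H_0 U_n^\dagger \;=\; \sum_{i=1}^{n} H_i, \qquad H_i \;:=\; U_n \ketbra{1}{1}_i U_n^\dagger.
\]
Since $H$ is obtained from $H_0$ by a unitary conjugation, it has the identical spectrum, so it is gapped with gap $\geq 1$ and has $\ket{\psi_n}$ as its unique groundstate (of eigenvalue $0$).

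The only thing left is locality. Here I would invoke the standard \emph{light-cone} argument: each single-qubit operator $\ketbra{1}{1}_i$, when conjugated by $U_n$, becomes an operator supported only on those qubits that lie in the backwards light cone of qubit $i$. Each layer of the circuit can spread the support of an operator by a factor of at most $k$ (since a gate touching a qubit already in the support pulls in at most $k-1$ new qubits). Iterating this over $d_n$ layers gives that $H_i$ is supported on at most $k^{d_n} = O(1)$ qubits, a bound independent of $n$. Hence $H = \sum_i H_i$ is a sum of local terms of bounded locality, as required.

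I do not anticipate a real obstacle here: the proof is essentially a one-line unitary conjugation combined with the light-cone observation. The only subtlety worth flagging is that the locality parameter of $H$ depends on $d_n$ and $k$, so to match the definition of a local Hamiltonian one needs $d_n$ and the gate arity to be uniformly bounded in $n$ — which is exactly what Definition~\ref{def:trivialStates} guarantees. One should also note, for completeness, that each $H_i$ is a projector (since $\ketbra{1}{1}_i$ is), so $H$ is a sum of local projectors with the advertised gap and unique groundstate.
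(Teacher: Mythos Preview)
Your proposal is correct and takes essentially the same approach as the paper: the informal sketch in the introduction uses exactly the projectors $\ketbra{1}{1}_i$ conjugated by $U_n$, while the formal proof in Appendix~A uses the cosmetically different choice $H_n^i = U_n Z^i U_n^\dagger$ (yielding gap $2$ instead of $1$), together with the same light-cone bound $|supp(U P U^\dagger)| \le c^d |supp(P)|$ that you invoke.
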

	We will give a formal proof of this fact in Appendix A,
	however it would be beneficial to 
	explain the argument here since its underlying
	idea constitutes a starting point for many results in our paper. 
	The argument starts by noting that by definition, a trivial state 
	can be written as $U\ket{0^n}$ for some constant depth $U$.  
	The idea is to notice that $\ket{0^n}$ is the 
	unique groundstate of the gapped local Hamiltonian 
	$H=\sum_i\ket{1}\bra{1}_i$ which projects each of the qubits on its $\ket{1}$ state; 
	to generate a local Hamiltonian whose unique groundstate is $U\ket{0^n}$
	we need to change the basis of $H$ by $U$, namely consider terms of the form   
	$U(\ket{1}\bra{1}_i\otimes I_{[n]\backslash \{i\}})U^\dagger$. The key point is that 
	this term acts non trivially only on a constant number of qubits, which 
	are within the {\it light cone} on the $i$'th qubit, since all gates in $U$
	outside of the light cone of the $i$th qubit, cancel with their corresponding 
	inverse gate in $U^{-1}$. For a detailed proof see Appendix A.  
	
	Let us now see how to deduce from this easy fact,
	that any state in any quantum error correcting code whose distance $d$ is 
	more than a constant, is non-trivial. 
	Recall that any two states in such a QECC 
	must have the same reduced density matrix on any set of $k$ qubits,
	for $k< d$ (See Fact \ref{fact:qecc}). 
	Hence, if one of the states in a QECC of distance $d>k$ 
	is a groundstate of a $k-$local Hamiltonian, 
	so are all the others, and hence none of them can be the 
	{\it unique} groundstate of that $k$-local Hamiltonian.
	By Fact \ref{fact:trivialareunique}, such a state cannot be a trivial state.  
	This argument was essentially the same as the argument 
	given in \cite{B}, except that  
	they use the Lieb-Robinson bound (which is the physics continuous analog of 
	the light cone notion), and this  
	makes the argument somewhat less transparent for computer scientists. 
	
	To the best of our knowledge, these two simple 
	arguments are the only currently known arguments 
	for proving non-triviality (or more generall, circuit-depth lower bounds) 
	of groundstates of local Hamiltonians. 
	
	{~}
	
	\noindent{\bf Geometrical non-triviality} 
	While proving general non-triviality of groundstates in general seems 
	rather difficult, the task becomes more accessible when 
	we add a geometrical 
	restriction on the gates of the quantum circuit $U$. 
	Namely, we will consider \emph{geometric trivial} states, defined in the same 
	fashion as triviality of states except for
	considering only quantum ciruits $U$ whose 
	gates act on qubits lying within a constant distance from each other, where    
	the metric is given by the interaction graph of the Hamiltonian (i.e., 
	two qubits are 
	connected if there is a term in the Hamiltonian acting on them).
	
	\begin{definition}{\bf Geometrically Trivial pure states} \label{def:GeomTrivSt}:  
		A family $\{\ket{\psi_n}\}$ of pure states defined on $n$ qubits sitting 
		on the edges of a graph is called {\it geometrically trivial} 
		if for every $n$, $\ket{\psi_n}=U_n\ket{0}^n$ where $U_n$ is a quantum circuit of bounded depth $d_n=O(1)$ made of quantum gates acting on at most 
		$O(1)$ qubits each, where all of the qubits acted upon by the same gate 
		are within $O(1)$ distance from each other in the natural graph metric.
	\end{definition}
	
	Under this geometrically restricted circuit setting, 
	Bravyi \cite{BravyiComment} provided a beautiful argument 
	that the toric code states cannot be generated by constant 
	depth quantum circuit. This indeed follows 
	already from the fact that the toric code is a QECC of distance
	$\Omega(\sqrt{n})$, 
	by the argument sketched above; 
	however, Bravyi's argument is very different, 
	and relies on a beautiful topological consideration.
	
	Importantly for the focus of this paper, 
	Bravyi's argument (which was never published and
	so we provide it here for completeness; see Subsection \ref{sec:geocube}) 
	can be applied also in the case of Kitaev's code on the 
	{\it sphere}, which is defined similarly to the toric code but has dimension 1. For that reason, 
	the above argument relying on 
	large distance QECC
	does not hold whereas Bravyi's argument does and directly implies {\it geometrical} non-triviality of the spherical code state.   
	Freedman and Hastings \cite{FreedmanHastings} also used topological 
	tools to argue geometrical non-triviality of certain other quantum 
	groundstates. 
	
	The above proofs do work for {\it unique} ground states of {\it gapped} 
	local Hamiltonians, but importantly,
	they are restricted to {\it geometrical} non-triviality 
	(and in fact, clearly derive lower bounds which are too strong to apply 
	when the geometrical restriction on the gates 
	is relaxed).  
	
	{~} 
	
	\noindent{\bf Moving towards {\it general} non-triviality of unique groundstates of 
		gapped Hamiltonians}  
	In light of all the above, one might ask:  
	Could it be true that all unique groundstates of gapped local Hamiltonians 
	are trivial, if one allows gates which are non-geometrically restricted?  
	In other words, could it be that Fact \ref{fact:trivialareunique} 
	is "if and only if"? 
	Hastings and Koma \cite{hastingskoma} proved that ground states of 
	gapped Hamiltonians 
	exhibit a phenomenon of decay of correlations; this suggests that the 
	entanglement in such groundstates is severely limited, and further 
	suggests that the possibility that 
	such states are always trivial, at least cannot be ruled out immediately, since the argument sketched above of correlations implying non-triviality cannot be applied. Moreover, the argument of non-triviality for quantum error correcting code states cannot be applied either in this case. 
	
	In this paper we would indeed like to prove general circuit lower bounds
	in such cases.   
	For a start, one would like to be able to prove a circuit-depth lower bound 
	for the celebrated state of Kitaev's spherical code, in the non-geometrical setting. 
	
	The task of providing general lower bounds is notoriously hard in computational 
	complexity; it seems that general methods for providing such lower bounds do 
	not 
	exist. Apart from the two simple arguments sketched above, which are suited for 
	non-geometrical lower bounds, we are aware of only one 
	technique for proving such quantum circuit depth lower bounds, given by Eldar 
	and Harrow\cite{EldarHarrow}. 
	Eldar and Harrow provide non-geometrical circuit depth lower bounds 
	for a certain class of quantum states, using a very innovative 
	technique related to 
	expansion of probability distributions; however 
	their methods inherently do not apply for {\it unique} ground 
	states of local Hamiltonians, 
	since, just like in the CAT state and in the QECC codes arguments above, at the 
	bottom of their argument lies the existence of {\it two} states which 
	are globally orthogonal but locally similar; such an argument cannot 
	be used when the groundstate is unique.   
	
	\subsection{Results}
	In this paper, we prove quantum circuit depth lower bounds for a large class of 
	quantum states, for general 
	non-geometrically restricted circuits. In particular, we prove an 
	$\Omega(log(n))$ lower bound for Kitaev's 
	spherical code state; we then extend the results to 
	various groundstates of codes defined on polygonal complexes (as long as they don't have 
	small bottlenecks). To this end we develop various tools, 
	mostly borrowed from algebraic topology. 
	
	To our knowledge, this work gives the first known example of  
	non-trivial quantum states that are unique ground 
	states of gapped Hamiltonians, in which the decay of correlations 
	due to Hastings and Koma \cite{hastingskoma} holds. 
	
	{~} 
	
	\noindent{\bf Kitaev's spherical code.} 
	We start with Kitaev's surface codes. Instead of a surface code on a sphere, 
	we work with the code set on a large $2$-dimensional 
	cube whose $6$ faces are each tesselated to squares 
	using an $n$ by $n$ grid, and
	the corresponding local Hamiltonian is the usual star and plaquettes Hamiltonian. 
	It is well known that in this case the groundstate is unique. 
	Bravyi's \cite{B} topology-based proof of non-triviality of the toric code state, holds also 
	in this case. 
	
	\begin{theorem}(Adapted from \cite{BravyiComment})
		\label{thm:NonGTrivCubeState} 
		Let $\{\ket{\psi_i}\}_{i=1}^\infty$ be the family of cube states on $N_i$ qubits. Furthermore, let $U_i$ be a quantum circuit of depth $d_i$ using geometrically local gates on at most $c=O(1)$ qubits, such that $U_i\ket{0^{N_i}}=\ket{\psi_i}$. 
		Then, $d_i = \Omega(\sqrt{N_i})$.
	\end{theorem}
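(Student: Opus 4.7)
I would follow a Bravyi-style topological light-cone argument adapted to the sphere. The starting structural fact is that the cube is homeomorphic to $S^{2}$, so $H_{1}(S^{2})=0$; consequently every closed primal loop $\gamma$ of edges on the cube bounds a disk of plaquettes $R$, and the Wilson operator $Z_{\gamma}=\prod_{e\in\gamma}Z_{e}$ equals $\prod_{p\in R}B_{p}$ up to sign. In particular $Z_{\gamma}$ is a stabilizer of the unique ground state $\ket{\psi}$, and the analogous statement holds for any dual loop $\gamma^{*}$ and the corresponding $X$-string via the star operators $A_{v}$.

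Assume toward contradiction that $U$ is a depth-$d$ circuit of geometrically local $c$-qubit gates with $d=o(\sqrt{N})$ and $U\ket{0^{N}}=\ket{\psi}$, and let $r=O(cd)$ be the light-cone radius on the cube. In the Heisenberg picture each dressed operator $\tilde{Z}_{e}:=U^{\dagger}Z_{e}U$ is supported in a ball of radius $r$ around $e$, and $\tilde{Z}_{\gamma}=\prod_{e\in\gamma}\tilde{Z}_{e}=U^{\dagger}Z_{\gamma}U$ fixes the product state $\ket{0^{N}}$. I would pick $\gamma$ to be a primal cycle of length $L=\Theta(\sqrt{N})$, for instance an equator wrapping once around four faces of the cube, so that $\gamma$ splits the cube into two disks of area $\Theta(N)$ each. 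Because $d=o(\sqrt{N})$, the $r$-neighborhood of $\gamma$ is a thin tube that leaves a disk $D$ of area $\Theta(N)$ uncovered on one side.

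The core of the argument then combines three facts: $(i)$ $\tilde{Z}_{\gamma}$ fixes $\ket{0^{N}}$ and is supported entirely outside $D$; $(ii)$ the topological identity $Z_{\gamma}=\prod_{p\in R}B_{p}$ (with $R$ the disk containing $D$) lifts after conjugation by $U$ to $\tilde{Z}_{\gamma}=\prod_{p\in R}\tilde{B}_{p}$, with a right-hand side whose support covers the whole of $D$; and $(iii)$ $\ket{0^{N}}$ factorizes as $\ket{0}_{\text{near }\gamma}\otimes\ket{0}_{D}$ across the light-cone cut separating the tube from $D$. I would then bring in an auxiliary dual Wilson operator $X_{\gamma^{*}}$ chosen so that its dressed form crosses the cut exactly once, and exploit the anticommutation of $Z_{e}$ and $X_{e}$ on the single shared qubit to turn the combination of $(i)$--$(iii)$ into a constraint that cannot hold in a product state, forcing $d=\Omega(\sqrt{N})$.

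The main obstacle I anticipate is precisely this last step. Since $H_{1}(S^{2})=0$, there is no global logical operator to anchor the argument as one has on the torus through non-contractible loops; the contradiction must instead be squeezed out of the interplay of contractible loops at different length scales together with their light-cone footprints. The bookkeeping of which dressed plaquettes sit on which side of the cut, and how the narrow boundary strips where neighboring light cones overlap behave, is where the technical work concentrates. Bravyi's underlying insight is that a single long Wilson loop already carries enough global information, via the disk it bounds, to defeat any shallow geometrically local circuit; turning this intuition into a quantitative statement about the action on $\ket{0^{N}}$ is the crux of the proof.
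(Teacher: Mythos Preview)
Your proposal has the right ingredients (light cones, Wilson operators, an equator) but the roles you assign to them do not lead to a contradiction, and this is not merely a bookkeeping issue. The paper's argument is structured quite differently, and the difference is essential on the sphere.

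In the paper, $\gamma$ is an \emph{open} path between two edges $e,f$ on opposite faces of the cube, so $\gamma_Z$ is \emph{not} a stabilizer. One then defines the effective support $B=B_{[\gamma],U}=L^{\uparrow}\bigl(\bigcap_{\gamma'\in[\gamma]}L^{\downarrow}(\gamma')\bigr)$, where the intersection is over all paths $\gamma'$ homologous to $\gamma$ (here, all paths from $e$ to $f$). A light-cone argument in the geometric setting (Lemma~\ref{lem:SausageIntersection}) shows $B\subseteq\widehat{Ball}(e,2cd)\cup\widehat{Ball}(f,2cd)$. The key Lemma~\ref{lem:ComutOp} then says that any stabilizer $P$ supported on $E\setminus B$ satisfies $P\gamma_Z\ket{\psi}=\gamma_Z P\ket{\psi}$. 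Now the equator enters, but as a closed \emph{copath} $\zeta$ on the dual lattice, midway between $e$ and $f$; the operator $\zeta_X$ is a product of star operators, hence a stabilizer, and its support avoids $B$ when $d=o(\sqrt{N})$. Since every path from $e$ to $f$ crosses $\zeta$ an odd number of times, $\zeta_X\gamma_Z=-\gamma_Z\zeta_X$ as operators, contradicting the lemma.

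Your setup takes $\gamma$ to be a \emph{closed} primal loop. Then $Z_\gamma$ is already a stabilizer, so $\tilde Z_\gamma\ket{0^N}=\ket{0^N}$ is automatic and carries no tension. The identity $\tilde Z_\gamma=\prod_{p\in R}\tilde B_p$ is just two ways of writing the same operator; that the right-hand side has large support while the left-hand side has small support is not a contradiction. Your proposed auxiliary open dual string $X_{\gamma^*}$ anticommutes with $Z_\gamma$, but after conjugation this only tells you that $\tilde X_{\gamma^*}\ket{0^N}$ lies in the $-1$ eigenspace of $\tilde Z_\gamma$, which is perfectly consistent. There is no analogue of Lemma~\ref{lem:ComutOp} available for an open dual string, because $X_{\gamma^*}$ is not a stabilizer and you have no control over where the homology-class intersection $A_{[\gamma^*],U}$ sits. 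The missing idea is precisely to swap the roles: make the open object the $Z$-path whose effective support you can localize near its endpoints, and make the closed object the $X$-stabilizer that any such path must cross.
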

	
	However, this method assumes that the circuits have the same 
	geometry as the grid, and so this result only shows
	geometrical non-triviality (see definition \ref{def:GeomTrivSt}). 
	
	Our first result is to show that the state
	is non-trivial regardless of how far from each other
	the qubits on which the c-local gates act are. 
	In fact, we prove a logarithmic lower bound on any generating 
	quantum circuit of the cube states:
	
	\begin{theorem}
		\label{thm:nonTrivCubeState}
		Let $\{\ket{\psi_i}\}_{i=1}^\infty$ be the family of cube states on $N_i$ qubits, and for every $i\geq 1$, let $U_i$ be a quantum circuit of depth $d_i$ using gates of locality $c=O(1)$, such that $U_i\ket{0^{N_i}}=\ket{\psi_i}$. Then, $d_i=\Omega(log(N_i))$. 
	\end{theorem}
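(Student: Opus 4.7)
The plan is to lift Bravyi's geometrical argument (Theorem \ref{thm:NonGTrivCubeState}) from the geometric setting to the general one by replacing the notion of ``geometrically local gate'' with the notion of the circuit's light cones. Given a depth-$d$ circuit $U$ of $c$-local gates with $U\ket{0^{N}}=\ket{\psi}$, I would first form for every qubit $q$ its backward light cone $L(q)$; by locality $|L(q)|\leq c^d$. Combining this with the argument behind Fact \ref{fact:trivialareunique}, I would rewrite $\ket{\psi}$ as the unique ground state of $H'=\sum_{i} U\ketbra{1}{1}_i U^\dagger$, whose terms are supported inside light cones of size $c^d$. We now have two Hamiltonians with the same unique ground state: the physical star/plaquette Hamiltonian on the cube, and a ``dual'' Hamiltonian whose interaction graph is determined by the circuit's light-cone structure.

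Next, I would introduce the promised notion of $\gamma$-separation to link the two structures. Roughly, call a set $S$ of qubits $\gamma$-separated by $U$ if $S$ can be partitioned into pieces that lie in disjoint light cones of $U$, while being topologically/algebraically independent in the cube complex in a prescribed sense (e.g., carrying a nontrivial relative $1$-chain in some sub-region of the cube). The goal is to show two incompatible statements: (a) for the cube state, \emph{some} such $S$ must exist whose $\gamma$-separation would force the ground space structure of $H'$ to contain multiple states locally indistinguishable on $S$, contradicting uniqueness; and (b) the existence of such $S$ in turn requires light cones of diameter at least $\Omega(n)$ in cube distance, so $c^d = \Omega(n)$, i.e., $d = \Omega(\log n)$.

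For (a), my plan is to emulate Bravyi's topological argument at the level of light-cone ``coarse grainings.'' Concretely, because the sphere has trivial $H_1$, every cycle on the cube bounds a $2$-chain, and the stabilizer structure of the code makes it easy to move cycles freely inside the ground space; one can then choose a well-placed non-contractible-in-a-sub-region loop and show that if the circuit had depth $d$ with small light cones, one could implement a loop-shifting operation locally in two inequivalent ways. The difference between the two implementations would commute with $H'$ on the ground state yet act non-trivially on $\ket{\psi}$, contradicting uniqueness. For (b), algebraic topology on the cube (seen as a polygonal $2$-complex on $S^2$) gives quantitative bounds on how large a region must be before it supports non-contractible loops relative to its boundary; this is where the $\log(N)$ scaling enters.

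The main obstacle, in my view, will be (a). Unlike the torus, the sphere has no homological obstruction at all, so one must argue quantitatively rather than topologically in the strict sense, using cycles that are non-trivial only \emph{relative} to a subregion whose size is controlled by the light-cone radius. Setting up $\gamma$-separation so that it (i) is forced to exist by the ground-state structure of $H'$, (ii) transfers faithfully between the cube geometry and the circuit's light-cone geometry, and (iii) yields an operator with the right commutation properties to contradict uniqueness, is the delicate part. Once the definition is chosen correctly, I expect the rest of the proof to be an accounting of chain-level identities on the cube complex combined with the standard light-cone cancellation used in Fact \ref{fact:trivialareunique}.
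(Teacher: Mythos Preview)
Your proposal gestures at the right ingredients but has two concrete gaps that prevent the argument from closing.

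First, your version of $\gamma$-separation and the contradiction you aim for in (a) are off-target. You propose to find a set $S$ whose structure ``forces the ground space of $H'$ to contain multiple states locally indistinguishable on $S$''; but that is exactly the QECC-distance argument the introduction explains cannot apply to a \emph{unique} ground state. The paper's $\gamma$-separation is different and much simpler: fix far-apart edges $e,f$ and $\gamma\in\Gamma_{e,f}$; a set $X\subseteq E$ is $\gamma$-separating if every $\gamma'\in[\gamma]$ meets $X$. Its role is not to produce extra ground states but to localise the \emph{lower effective support} $A=\bigcap_{\gamma'\in[\gamma]}L^\downarrow(\gamma')$: for every $g\in A$ the upper light cone $L^\uparrow(g)$ (of size $\le c^{d}$) is automatically $\gamma$-separating, and after passing to a copath-connected piece, any $\gamma$-separator of size $r$ on the cube must sit within co-distance $r$ of $e$ or $f$ (removing a small ball leaves the cube connected). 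This pins $A\subseteq L^\downarrow\bigl(\widehat{Ball}(e,c^{d})\cup\widehat{Ball}(f,c^{d})\bigr)$, and the Euclidean area bound $|\widehat{Ball}(\cdot,r)|=O(r^2)$ then gives $|A|=O(c^{3d})$.

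Second, your step (b) conflates light-cone \emph{size} with \emph{diameter in the cube metric}: without geometric locality a light cone of size $c^{d}$ can touch qubits arbitrarily far apart on the cube, so ``$c^{d}=\Omega(n)$ because the light-cone diameter is $\Omega(n)$'' is a non sequitur. What is actually needed is a \emph{lower} bound on $|A|$ to play against the upper bound above, and your outline never isolates a single quantity bounded from both sides. The paper gets this from Lemma~\ref{lem:LargeB}: $B=L^\uparrow(A)$ must contain some path in $[\gamma]$, for otherwise the coboundary $\mu$ of the component of $B$ containing $e$ yields an operator $\mu_X$ supported off $B$ that simultaneously commutes with $\gamma_Z$ on $\ket\psi$ (Lemma~\ref{lem:ComutOp}) and anticommutes with it (odd intersection count), forcing $\ket\psi=0$. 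Hence $|B|\ge d(e,f)$ and $|A|\ge d(e,f)/c^{d}$. Combining the two bounds with $d(e,f)\ge n=\Theta(\sqrt{N})$ gives $d=\Omega(\log N)$.
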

	
	We believe this result was commonly assumed to be true, however, a proof did not exist. In fact, the proof is far from being 
	straight forward; 
	The non-geometrical gates pose considerable complications, and the proof turns out to be non-trivial (see overview of proofs). 
	
	We next proceed to prove a similar result for a much more general class of states. 
	We note that on one hand, the cube states is a family of states which admits a simple geometric description - 
	and therefore is a good candidate to illustrate some of the ideas in our more general proof. On the other hand, 
	it also has the interesting property of yielding states which are unique ground 
	states of gapped local Hamiltonians, making it resistant to the two simple methods 
	sketched in the introduction, for proving lower bounds on general quantum circuit-depth; the cube states are subject to the exponential decay of correlations \cite{hastingskoma} and exhibit no obvious long range correlations; they are also  not members of any 
	QECC with non-constant distance.  
	
	{~}
	
	\noindent{\bf Extension to polygonal complexes} 
	We then generalize this result to a wide class of quantum states, 
	defined as codes on what we call "closed surface complexes" (CSCs), 
	where the code states are stabilized by the usual star and plaquette 
	operators on these complexes. 
	This is done under a
	topological condition on the complex, which we call
	$r$-{\it simply connectedness}.

	We start by showing that the geometrical non-triviality 
	topology based proof of \cite{B} extends  
	to geometrical non-triviality in the case of CSCs as well: 
	
	\begin{theorem}\label{thm:NonGTrivSStates}
		Let $\{C_i\}_{i=1}^\infty$ be a family of (possibly non degenerate) surface codes on $N_i$ qubits defined on the closed surface complexes ${G_i(V_i, E_i, F_i)}_{i=1}^\infty$ (see definition \ref{def:CSC}) such that $deg(G)=O(1)$, $\widehat{deg}(G)=O(1)$ and for all $i\geq 1$, let $\ket{\psi}_i\in C_i$. Furthermore, let $U_i$ be a quantum circuit of depth $d_i$ using geometrically local gates on at most $c=O(1)$ qubits, such that $U_i\ket{0}^{N_i}=\ket{\psi_i}$. 
		Assume that  $G_i$ is $f$-simply connected for some $f=O(log(N_i))$, then, $d_i = \Omega(log(N_i))$.
	\end{theorem}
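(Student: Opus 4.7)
The plan is to adapt Bravyi's topological cleaning argument (Theorem~\ref{thm:NonGTrivCubeState}) from the cube to a general closed surface complex $G_i$, using the $f$-simple-connectedness assumption to supply the topological ingredient that the flat grid geometry of the cube provided automatically. I would argue by contradiction: suppose $d_i = o(\log N_i)$, so that the light-cone radius $R := c\,d_i$ also satisfies $R = o(\log N_i)$ and drops strictly below the $f$-simple-connectedness scale $f = O(\log N_i)$ for large $N_i$. The bounded-degree hypotheses $\deg(G)=O(1)$ and $\widehat{\deg}(G)=O(1)$ ensure that $R$-neighborhoods in $G_i$ grow by only a constant factor per unit of depth, so the standard Heisenberg-picture bound that $U_i^\dagger O U_i$ is supported in the $R$-neighborhood of $\mathrm{supp}(O)$ remains well controlled at this scale.

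Next I would pick a loop $\gamma$ on $G_i$ of length comparable to but strictly below $f$. Because $G_i$ is $f$-simply-connected and a closed surface, $\gamma$ bounds two disks $D_1, D_2 \subseteq G_i$, so $Z_\gamma = \prod_{p \in D_1} B_p = \prod_{p \in D_2} B_p$ is a stabilizer, and hence $Z_\gamma \ket{\psi_i} = \ket{\psi_i}$. Conjugating back through $U_i$ gives $U_i^\dagger Z_\gamma U_i \ket{0^{N_i}} = \ket{0^{N_i}}$, and the two plaquette representations above together with the light-cone bound force the support of $U_i^\dagger Z_\gamma U_i$ to lie inside $N_R(D_1) \cap N_R(D_2)$, which for $R \ll f$ is only a thin tube of width $O(R)$ around $\gamma$ and contains no full disk of $G_i$.

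From here I would extract a contradiction by a support-reduction (``cleaning'') argument. Starting from $U_i^\dagger Z_\gamma U_i = \prod_{p \in D_1} U_i^\dagger B_p U_i$, I would peel off plaquette conjugates one at a time along a foliation of $D_1$ by shrinking contractible subloops, maintaining the eigen-equation on $\ket{0^{N_i}}$ at each step, until the residual operator is supported inside a single cell of $G_i$. Any operator of such small support that fixes the product state $\ket{0^{N_i}}$ must act trivially on it, but $\gamma$ was chosen so that $D_1$ and $D_2$ are both full topological disks of $G_i$, which prevents the pulled-back $Z_\gamma$ from being literally reducible to the identity in this way. This contradiction yields $d_i = \Omega(\log N_i)$ as claimed.

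The main obstacle I expect lies in the cleaning step: on the flat $n\times n$ grid of the cube one can literally slide a plaquette product across a square disk and the combinatorics are trivial, whereas on a general CSC the polygonal cells, varying vertex degrees, and possible non-manifold gluings force a more delicate choice of cleaning order, and the conjugates $U_i^\dagger B_p U_i$ need not organize into clean annular shells. The precise role of the hypothesis $f = O(\log N_i)$ is to furnish a disk filling in which every intermediate loop of the foliation remains contractible at scale below $f$, so that the induction never stalls before the support is fully eliminated; verifying this intermediate contractibility under the expanded light-cone is where I anticipate the original cube argument to require the most substantial modification.
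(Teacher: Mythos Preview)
Your proposal has a genuine gap: the ``cleaning'' step does not produce a contradiction. Once you choose a \emph{closed} loop $\gamma$ that bounds a disk, $Z_\gamma = \prod_{p\in D_1} B_p$ is already a stabilizer, so $U_i^\dagger Z_\gamma U_i$ fixing $\ket{0^{N_i}}$ is simply true and carries no tension. Peeling off the conjugated plaquettes $U_i^\dagger B_p U_i$ one by one preserves the eigen-equation for the trivial reason that each of them individually stabilizes $\ket{0^{N_i}}$; when you have peeled them all off you are left with the identity, which is exactly what one expects and not a contradiction. The assertion that ``any operator of small support that fixes $\ket{0^{N_i}}$ must act trivially on it'' is either vacuous (it does fix that vector) or false (if you mean it must equal $I$: take $Z_1$). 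Nothing in your outline forces a nontrivial operator to equal the identity, nor does it ever set up two operators with incompatible commutation relations. The closed-loop choice is precisely what removes the source of contradiction.

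The paper's argument is structurally different and supplies the missing tension explicitly. One picks two edges $e,f$ at diameter distance (which is $\Omega(\log N_i)$ by the bounded-degree hypothesis, Corollary~\ref{cor:diamBound}), an \emph{open} path $\gamma\in\Gamma_{e,f}$, and works with the effective supports $A_{[\gamma],U}$ and $B_{[\gamma],U}$ of Definition~\ref{def:EffSupp}. Geometric locality together with $f$-simple-connectedness (via Theorem~\ref{thm:getAround}, which replaces the flat-grid fact you alluded to) gives $B\subseteq \widehat{Ball}(e,2cd_i)\cup\widehat{Ball}(f,2cd_i)$ (Lemma~\ref{lem:SausageIntersectionCSC}). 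If $d_i=o(\log N_i)$ these two balls are disjoint, so $B$ contains no path from $e$ to $f$; this contradicts Lemma~\ref{lem:LargeB}, which guarantees such a path in $B$. That lemma is where the anticommutation actually lives: if no path in $[\gamma]$ lies in $B$, one takes the coboundary $\mu$ of the connected component of $B$ containing $e$, and $\mu_X$ both stabilizes $\ket{\psi_i}$ (so Lemma~\ref{lem:ComutOp} forces it to commute with $\gamma_Z$ on $\ket{\psi_i}$) and anticommutes with $\gamma_Z$ because $\gamma$ crosses $\mu$ an odd number of times. Your closed-loop setup eliminates exactly this odd-intersection mechanism, and with it the contradiction.
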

	
	Our main result is a generalization of this to general 
	(non-geometrically restricted) quantum circuits: 
	
	\begin{theorem} [Surface states are non trivial]
		\label{thm:NonTrivSStates}
		Let $\{C_i\}_{i=1}^\infty$ be a family of (possibly non degenerate) surface codes on $N_i$ qubits defined on the  closed surface complexes ${G_i(V_i, E_i, F_i)}_{i=1}^\infty$ such that $deg(G)$, $\widehat{deg}(G)=O(1)$ and for all $i\geq 1$, let $\ket{\psi}_i\in C_i$. Furthermore, let $U_i$ be a quantum circuit of depth $d_i$ using gates acting on $c=O(1)$ qubits such that $U_i\ket{0}^{N_i}=\ket{\psi_i}$. Assume that  $G_i$ is $f$-simply connected, for some $f=O(log(log(N_i)))$ then, $d_i = \Omega(log(log(log(N_i))))$.
	\end{theorem}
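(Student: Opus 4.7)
The plan follows the two-step strategy advertised in the abstract: introduce a notion of $\gamma$-separation, show that circuits of small depth produce $\gamma$-separated states (the light cone step), and show that code states on $f$-simply connected complexes are not $\gamma$-separated (the topology step). A state $\ket{\psi}$ on qubits placed at edges of $G$ is called $\gamma$-separated if, for any two regions $A, B$ of edges whose distance in $G$ exceeds $\gamma$, the reduced density matrix $\rho_{AB}$ is close to a tensor product $\rho_A \otimes \rho_B$. Critically, the separation distance is measured in the complex $G$, not in the interaction graph of the circuit, so the definition reflects the geometry of the code even if the circuit is not geometrically local. A standard backward light cone argument then shows that $\ket{\psi_i}=U_i\ket{0^{N_i}}$ with $U_i$ of depth $d_i$ and $c$-local gates is $\gamma$-separated with respect to $G_i$ for $\gamma$ bounded by (a small constant multiple of) the maximum $G_i$-diameter of any backward light cone, which is at most $c^{d_i}$ times the diameter of a single gate's light cone footprint in $G_i$.

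The main content is the topological lemma: the code state is \emph{not} $\gamma$-separated when $\gamma$ is below a certain explicit function of $f$. The idea is that $f$-simple connectedness means every closed cycle of length at most $f$ bounds a disk in $G_i$ of controlled combinatorial complexity, so the stabilizer group contains a rich family of product relations between plaquette (or star) operators supported on those disks. If $\ket{\psi_i}$ were $\gamma$-separated at a scale $\gamma$ much smaller than $f$, then for a suitable contractible loop of length $\Theta(f)$ with bounding disk $D$, the product of plaquettes inside $D$ — which equals the identity up to boundary Pauli operators — would force the expectation of a cycle operator on $\partial D$ to factorize across far-apart regions of the disk; this factorization is incompatible with the $+1$ eigenvalue condition on the stabilizers simultaneously across all of $\partial D$, yielding a contradiction. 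The argument is essentially an algebraic-topology adaptation of the Bravyi-style proof used for Theorem \ref{thm:NonGTrivSStates}, but run locally on a single filling disk rather than globally on the complex.

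Combining the two steps, a depth-$d_i$ circuit forces $c^{d_i}$-separation, while the code state fails to be $\gamma$-separated at any scale $\gamma=o(f)$. Hence $c^{d_i}=\Omega(f)$, and with $f=O(\log\log N_i)$ this yields $d_i=\Omega(\log_c \log\log N_i) = \Omega(\log\log\log N_i)$, matching the stated bound.

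The main obstacle, and the source of the loss from the $\Omega(\log N)$ bound of Theorem \ref{thm:nonTrivCubeState} down to merely $\Omega(\log\log\log N_i)$ here, is the topology step in the general-complex setting. On the sphere, every loop is contractible through a disk whose size is polynomial in its length, and one can set up one global argument over a single macroscopic cycle. On a general polygonal complex, possibly exhibiting hyperbolic features, filling disks for contractible loops can be exponentially large in the loop length, and small non-contractible cycles (bottlenecks) could short-circuit the argument entirely. The $f$-simple-connectedness hypothesis is introduced precisely to rule out bottlenecks, but turning it into quantitative control of filling disks — and into a stabilizer identity robust enough to survive when the circuit is non-geometric and the $\gamma$-separated tensor structure is defined intrinsically on $G_i$ rather than along gate-induced partitions — is where the bulk of the technical work lies, and is what forces the extra logarithms in the final bound.
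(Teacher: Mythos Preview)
Your proposal rests on a misreading of what ``$\gamma$-separation'' means in the paper, and both of your two steps fail as stated. In the paper, $\gamma$-separation is a \emph{combinatorial} property of an edge set $X$ relative to a fixed path $\gamma$: $X$ is $\gamma$-separating if every $\gamma'\in[\gamma]$ intersects $X$ (Definition~\ref{def:GammaSep}). It is not a correlation-decay property of the state. The actual proof fixes $e,f$ at distance $\mathrm{diam}(G_i)$, uses Lemma~\ref{lem:LargeB} and Corollary~\ref{cor:Asize} to lower-bound $|A_{[\gamma],U}|$ by $\mathrm{diam}(G_i)/c^{d_i}$, then shows (via Claim~\ref{cl:LCaGammaSep}, Corollary~\ref{cor:AisConnected}, and Lemma~\ref{lem:contract}, the latter relying on Theorem~\ref{thm:getAround} and $r$-simple connectedness) that each $g\in A$ has $L^\uparrow(g)$ meeting a ball of radius $c^{d_i}$ around $e$ or $f$, hence $|A|\le 2c^{d_i}D^{c^{d_i}+1}$ by Lemma~\ref{lem:EdgesInBall}. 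Comparing the two bounds and using $\mathrm{diam}(G_i)\ge\log_D N_i -1$ yields the result.

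Your light-cone step cannot work: the circuits here are \emph{not} geometrically local, so a single gate may act on qubits at arbitrary $G_i$-distance; the backward light cone of a qubit has at most $c^{d_i}$ elements but its $G_i$-diameter can be $\mathrm{diam}(G_i)$, and there is no reason $U_i\ket{0^{N_i}}$ should have approximately tensor-product marginals on $G_i$-distant regions. Your topology step is in fact false: the code state on the cube is the unique ground state of a gapped local Hamiltonian, so by Hastings--Koma it \emph{does} have exponentially decaying correlations and hence $\rho_{AB}\approx\rho_A\otimes\rho_B$ for distant $A,B$. The paper explicitly flags this (see the introduction and the discussion around Fact~\ref{fact:trivialareunique}) as the reason correlation-based lower-bound arguments do not apply here, and why new machinery was needed.
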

	
	Here, $deg(G)$ and $\widehat{deg}(G)$ are the maximal vertex and face degree of G as defined in definition \ref{def:VertFaceDeg}.
	The condition on the face and vertex degrees is just a way to restate the fact that our states should be ground states of \emph{local} Hamiltonians only. 
	Broadly speaking, $r$-simply connectedness means that the subcomplex induced by any ball of radius at most $r$ should be 
	simply connected (see Definition \ref{def:AlphaContr}). 
	In other words, $G_i$ shouldn't have any "bottlenecks". Observe that 
	at first glance, this condition might seem
	weaker than simple connected of the whole complex; it also seems 
	as if it is implied by it. However this is not the case, as can be seen in 
	Figure \ref{fig:bottleneck}.
	This is quite subtle, and indeed, our proof does NOT work on some 
	spaces which are simply connected, but have small bottle necks so they are NOT 
	$r$-simply connected for some small $r$. 
	
	The main difference between Theorem \ref{thm:nonTrivCubeState} and Theorem \ref{thm:NonTrivSStates} lies in the different lower bounds they exhibit. In fact, this is a reflection of the fundamental difference between Euclidean and hyperbolic metric spaces, 
	and more precisely between the ratios of the area of disks to their radii 
	in both cases. We will further discuss this issue in section \ref{sec:NonTrivCubeState}.
	
	\subsection{Proofs overview}
	Our starting point for this paper is Bravyi's topological argument for $\Omega(\sqrt{n})$-circuit depth lower bound for toric code states \cite{BravyiComment}. This argument works essentially as is for Kitaev spherical code states. 
	For completeness, we rederive this argument fully in this paper, since it was not published; in fact, we provide the proof for a slightly different set of states, which we call {\it cube states}, defined in Section \ref{sec:geocube}. 
	
	Bravyi's proof relies on   
	a central lemma, Lemma \ref{lem:ComutOp}, which will be the basis point for all our proofs. This lemma is proved in Section \ref{sec:Bravyi}. 
	The lowerbound for cube state itself is then derived in Section 
	\ref{sec:geocube}. 
	
	We will then explain how starting from Bravyi's approach we can provide a non-geometrical lower bound, and later 
	we analyze both geometrical and non-geometrical lower bounds for the more general case of Closed simplicial complexes, CSCs (see definition \ref{def:CSC})
	
	\subsubsection{Bravi's argument}
	We now sketch the central lemma in Bravyi's argument. Given a grid representing a closed surface,
	suppose $U$ is a constant depth 
	circuit generating the groundstate of some Hamiltonian set on the grid.  
	We choose two distant edges $e$ and $f$ on the cube grid, and consider a path $\gamma$ between
	them. The effective support $B_{[\gamma],U}$ of the path $\gamma$,
	relative to a generating circuit $U$ is the following set of edges: 
	Consider the set of all qubits that are non-trivially acted upon, if we conjugate 
	the path $\gamma$ by the circuit $U$; if $U$ has constant depth, 
	this set takes the form of a "thickening" (of constant thickness) of $\gamma$, 
	[as in the argument behind the proof of fact \ref{fact:trivialareunique} (see Appendix A)].
	We look at the intersection of these subsets when going over all paths $\gamma$ connecting $e$ to $f$, 
	and we call this the {\it effective support} of $\gamma$ with respect to $U$, denoted 
	$B_{[\gamma],U}$. 
	This turns out to be the union of two small regions 
	around $e$ and $f$. 
	(See Figure \ref{fig:sausage}). 
	
	Lemma \ref{lem:ComutOp} states 
	that any operator which stabilizes the groundstate, and  
	whose support lies completely 
	outside of $B_{[\gamma],U}$ for the path $\gamma$, must 
	commute with any Pauli $Z$ operator supported on $\gamma$, when acting on the 
	groundspace. This relies on the fact that the circuit is constant depth, and thus under a suitable change of basis, all arguments about commutation relations between 
	operators on different sets can essentially be made ``classical'', 
	by considering the operators supports except ``thickened'' by some constant. We next use this lemma to derive the geometrical lower bound. 
	
	\subsubsection{Cube states: the geometric case}
	In section \ref{sec:geocube}, we prove geometric non-triviality of cube states, which 
	are easy-to-work-with variant of the spherical codes of Kitaev (see \cite{ToricCode}).  
	More specifically, We consider the quantum codes associated with regular tesselations 
	of the cube, with the usual definition of star and plaquette operators (see 
	Definition \ref{def:Stab} in Subsection \ref{Subsec:SurfaceCodes}). Since the cube is simply connected, 
	these codes have dimension $1$ (see corollary \ref{cor:homologyDimensionCSC}). 
	These unique ground states are the "cube states".

	In order to prove non triviality, we first show in Lemma \ref{lem:SausageIntersection} that for the cube states - assuming the generating circuits $U_n$ are geometrically local and that they have constant depth -
	the effective supports (see definition \ref{def:EffSupp}) of any path $\gamma$ is contained in the union of two
	balls of radius proportional to the depth of $U$ and centred around the 
	endpoints of $\gamma$. 
	This fact has a rather intuitive geometric explanation (see Figure \ref{fig:sausage}). 
	
	Finally, using Lemma \ref{lem:ComutOp}, one can 
	derive a constradiction: We choose
	both enpoints of $\gamma$ to be on opposite faces of the cube, and
	find a large {\it closed} copath $\beta$ far from these end points 
	(for now, think of this as a huge 
	loop around the cube) cutting the cube into  
	two connected components, each one containing one of the endpoints. This can be done such that \emph{any} path between the endpoints intersects 
	the copath an odd number of 
	times since it starts from one connected component and ends up in the second one (see Figure \ref{fig:cube}).  
	Now consider the following 2 operators: The first $\gamma_Z$
	applying $Z$'s on all the qubits in $\gamma$, 
	and the second, $\beta_X$ applying $X$'s on all the qubits in $\beta$. 
	Since $\gamma$ and $\beta$ intersect in an odd number 
	of edges, these global operators 
	anticommute. However note that $\beta_X$
	stabilizes the groundstate, and its support is 
	outside of 
	$B_{[\gamma],U}$ since it is far from the endpoints of $\gamma$. 
	This finally yields a contradiction to Lemma \ref{lem:ComutOp} 
	claiming those two operators should  
	commute.

	\begin{figure}
		\begin{center}
			\includegraphics[scale=0.5]{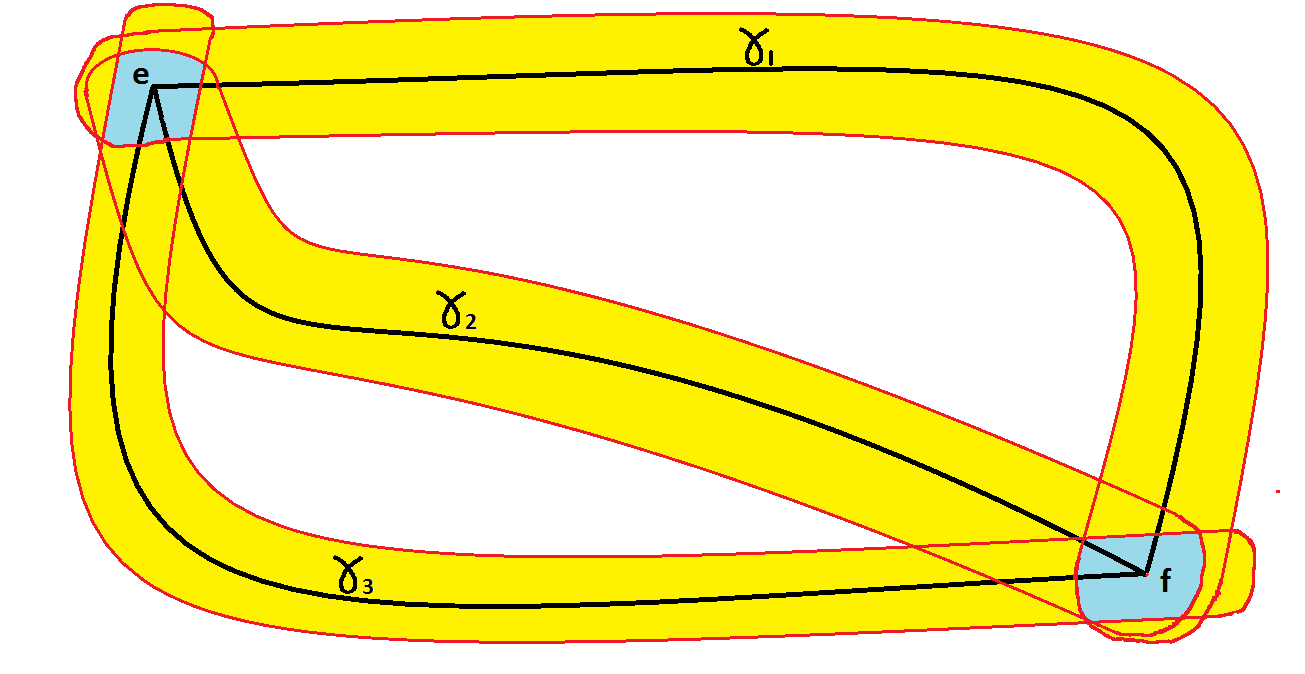}
		\end{center}
		\caption{\label{fig:sausage}The yellow areas represents the lower light cone of 3 distinct paths $\gamma_1$, $\gamma_2$  and $\gamma_3$  between e and f. The blue area centered around $e$ and $f$ represent the intersection between those 3 areas when going through these 3 paths. The set A of edges in the intersection of the yellow "fattenings" for all paths in $\Gamma_{e,f}$ is contained in the union of 2 balls centered around $e$ and $f$ and whose radii is upper bounded by some function of the circuit depth}
	\end{figure}
	
	\begin{figure}
		\begin{center}
			\includegraphics[scale=0.4]{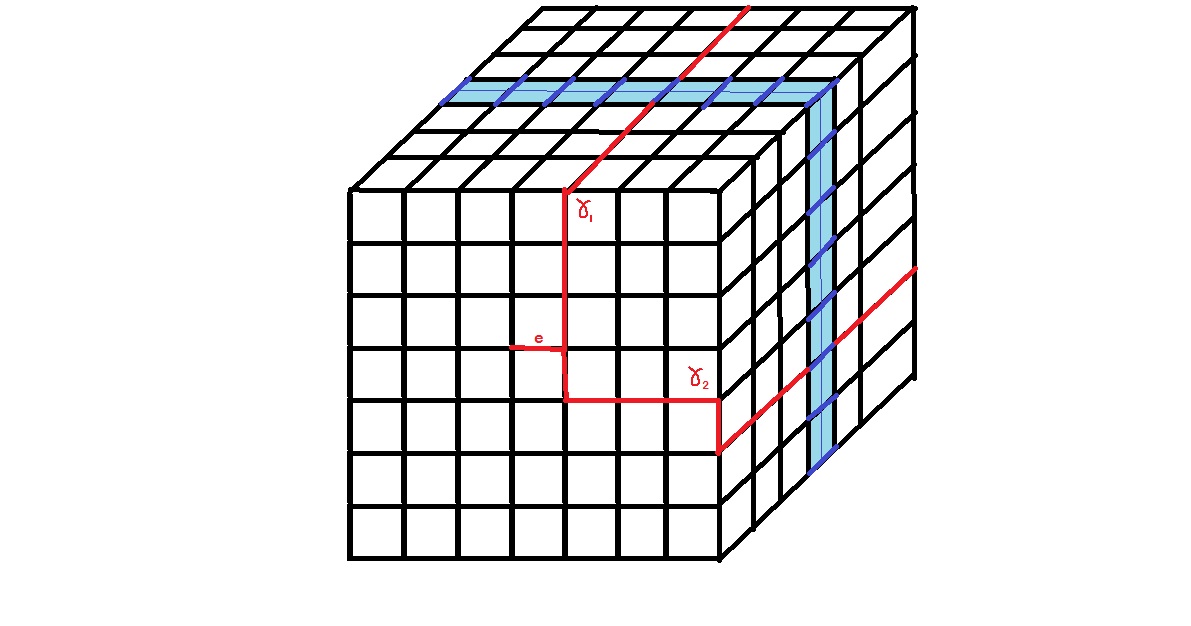} 
		\end{center}
		\caption{\label{fig:cube} The co-path in blue separates the edge e from f on the other side of the cube: Both $\gamma_1$ and $\gamma_2$ have to intersect it non-trivially}.
	\end{figure}

	\subsubsection{Cube states: the non geometric case}
	
	The key issue motivating this work is that the above arguments stop working when one drops the spatial locality assumption. Indeed, in that case, $B_{[\gamma],U}$ cannot be shown to be neither small, nor spatially close to the endpoints.
	In section \ref{sec:NonTrivCubeState}, we solve that issue for the special case of the 
	cube states, by showing that the very fact that $B_{[\gamma],U}$ could be large can actually be used to prove the non triviality of the cube state. In fact the first step is to prove in lemma \ref{lem:LargeB} that $|B_{[\gamma],U}|$ is actually lower bounded by the length of $\gamma$: Indeed, suppose otherwise, then both endpoints $e$ and $f$ of $\gamma$ would lie in a different connected component of $|B_{[\gamma],U}|$. One can now define an operator $\beta_X$ on the coboundary $\beta$ of the connected component $T_e$ containing $e$. Observe that $\gamma$ and $\beta$ must intersect on an odd amount of edges: indeed, we can consider $\gamma$ as an indexed finite sequence of vertices where each edge in $\gamma\cap\beta$ is interpreted as getting "in" or "out" of $T_e$. But since $e\in T_e$ and $f\notin T_e$, we immediately get $|\gamma\cap\beta|=0\; mod\; 2$. It follows that $\gamma_Z$ and $\beta_X$ must anticomute. On the other hand, from lemma \ref{lem:ComutOp}, since $\beta_X$ is supported on $E\backslash B$, we expect $\beta_X$ and $\gamma_Z$ to commute, leading to a contradiction.
	
	To this end, we introduce the notion of $\gamma$-separation: A set of edges is said to be $\gamma$-separating if any path $\gamma'\in [\gamma]$ intersects it non trivially. We show in claim \ref{cl:LCaGammaSep} that the upper light cone of any edge in A is $\gamma$-separating. Then, we proceed to prove in claim \ref{cl:GammaSepCon} and corollary \ref{cor:AisConnected} that W.l.o.g, those lightcones can be assumed to be connected. Indeed: given 2 different paths $\gamma$ and $\gamma'$ intersecting 2 distinct copath connected components, one can find a new path $\beta\in [\gamma]$ such that $\beta$ doesn't intersects with either component, contradicting the assumption of $\gamma-$separation.
	
	In lemma \ref{lem:blockage}, we show that those copath connected $\gamma$-separating lightcones $\{LC(a)|a\in A\}$ must lie within the union of 2 balls centered around e and f of radius $|LC(A)|$. This is rather obvious for the cube complex since one can always find a ball $C$ of radius $|LC(a)|$ and containing  $LC(a)$ such that both endpoints of $\gamma$ lie outside of $C$. Then, removing C from the original cube complex yields a connected subcomplex where any path between e and f doesn't intersect $LC(a)$ leading to a contradiction.
	
	Finally, we prove Theorem \ref{thm:nonTrivCubeState} by using the fact that on one hand, $A$ must be a large subset from lemma \ref{lem:LargeB} (namely, its size is bounded from above by the diameter of the cube), but on the other hand, $A$ is included inside the lower lightcone of the small balls around e and f defined earlier, whose radii depend of the depth of the generating circuits $U_n$. Therefore, we can extract a relation between the circuit depth and the diameter of the graph wich is known to be $O(n)$=$O(\sqrt N)$ for the cube where N is the number of qubits in the system. \\
	
	\subsubsection{From cube states to complexes: the geometric case}
	
	In section \ref{sec:alphaCont}, we develop the tools to prove Theorem \ref{thm:NonTrivSStates}, which generalizes the above result to more general closed surface complexes. The main issue with the previous approach comes from the fact that lemma \ref{lem:blockage} doesn't hold in the general setting. Indeed, one could construct a CSC with small "bottlenecks", such that any path between e and f must pass through it. Therefore, the upper light cone of some edge in A could lie in that small domain even though it could be far away from both e and f (see figure \ref{fig:bottleneck}). This example is the main motivation behind the additional assumption of $r$-simple connectedness we require in order to avoid such complications.
	Namely, a CSC is said to be $r$-simply connected if the first homology and cohomology groups of any ball of radius $r$ around any edge of the complex vanishes. This can be shown to be enough to ensure that Theorem \ref{thm:getAround} - a more general version of lemma \ref{lem:blockage} - holds in the framework of $r$-simply connected CSCs. 
	The proof of  Theorem \ref{thm:getAround} relies on rather technical observations encompassed in lemma \ref{lem:ConnectedBoundary}. \\

	\begin{figure}
		\begin{center}
			\includegraphics[scale=0.25]{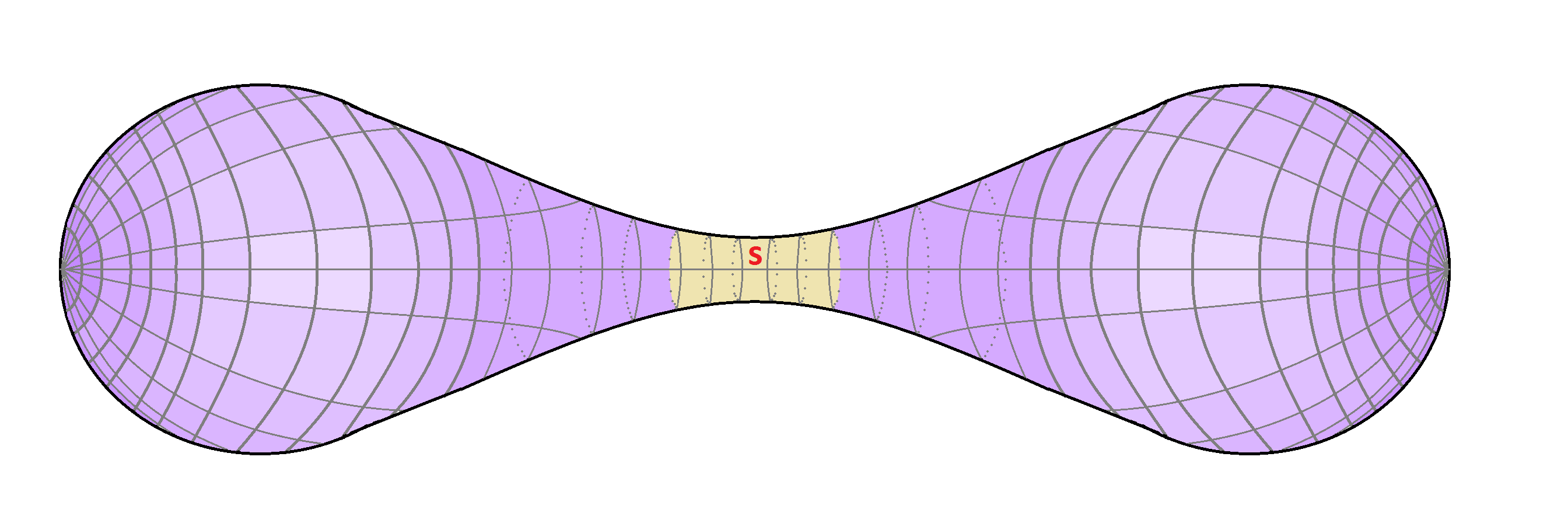} 
		\end{center}
		\caption{\label{fig:bottleneck} The yellow part S represents a small disk on a bottleneck area of some simply connected surface. Even though S is small, it is not simply connected as any loop around the surface lying inside this area is non trivial}.
	\end{figure}
	
	\subsubsection{CSC states: the non geometric case}
	
	Finally, in section \ref{sec:CSCstates}, we derive our main results, Theorems \ref{thm:NonGTrivSStates} and \ref{thm:NonTrivSStates}. Both proofs follows the same lines as in theorems \ref{thm:NonGTrivCubeState} and \ref{thm:nonTrivCubeState} relying on Theorem \ref{thm:getAround} and the assumption of $r$-simple connectedness to ensure that the sets $L(a)$ for $a\in A$ lie inside a ball of small radius around the endpoints $e$ and $f$ as in the cube case.
	
	\subsubsection{Remarks}
	
	It should be noted that the main difference between the asymptotic lower bounds derived from the cube state and from more general CSC states lies in the geometric properties of the underlying topological space. More precisely, the maximal ratio between the radius of a ball and its area (the number of edges lying inside the ball) on the complex, is the main factor influencing the asymptotic behaviour of the circuit depth lower bounds described in this paper. In the euclidean case (e.g the toric code or cube state), this ratio is quadratic in $r$ leading to a stronger lower bound of $\Omega(log(N))$ (and $\Omega(\sqrt N)$ in the geometric case). On the other hand, if this ratio decays exponentially in $r$ (as it is the case for hyperbolic surfaces, see figure \ref{fig:hyperbolic}) we can merely extract a
	$\Omega(log(log(log(N))))$ bound (and $\Omega(log(N))$ in the geometrical case, 
	Theorem \ref{thm:NonGTrivSStates}).

	\subsection{Further discussion}
	Our result is the first circuit depth lower bound for unique groundstates of
	gapped local Hamiltonians, in the challenging setting in
	which entanglement is limited by
	exponential decay \cite{hastingskoma}. 
	
	We note that using Kitaev's circuit-to-Hamiltonian construction one can
	construct a non-trivial state which is very close to the CAT state,
	{\it and} is a unique groundstate
	of a local Hamiltonian; however, this local Hamiltonian is not gapped,
	and thus the state does not exhibit exponentially decaying correlations.
	Indeed, it is exactly those non-decaying corrections which are
	used to prove its non-triviality.
	
	The results
	can be viewed in the more general context of the heirarchy of topological order.
	More specifically, one can consider increasingly growing classes of
	sttaes, which are more and more entangled; at the bottom lie the
	trivial states, then comes the class of unique groundstates of gapped local
	Hamiltonians (UGS), then unique groundstates of local (not necessarily gapped)
	Hamiltonians, and these are contained within the class of states detrermined by their local reduced density matrices among all mixed states (UDA \cite{}),
	and then among
	all pure states (UDP \cite{}). Our proof can be viewed as separating
	the second set from the third; despite previous attempts,
	it remains open 
	to clarify whether the fourth is or is not equal to the fifth.
	This study is thus motivated also by the goal of 
	clarifying the connection between local Hamiltonians,
	local reduced density matrices 
	and the global entanglement determined by them. In the next subsection we provide a thorough description of this heirarchy, for completeness of the context.  
	Finally, we ask: can non-trivial circuit depth lower bound
	be proven for quantum groundstates which are non-homological?
	Of course, a major open question is whether a superlogarithmic
	lower bound can be proven for any quantum state. 
	
	\subsection{The Heirarchy of Topological Order}
	\label{subsec:Hierarchy}
	Our work brings some new insights into 
	the hierarchy of topologically ordered quantum states,
	namely, states with global entanglement.
	In particular, we survey our current understanding of states
	which are {\it not} topologically ordered.
	
	At the bottom of the heirarchy lies the class of trivial states (see definition \ref{def:trivialStates}).
	In \cite{B}, the authors proved that 
	trivial states are unique groundstates of gapped Hamiltonians, and we give an alternative proof in Appendix A.
	We can define the class of states with the aforementioned property: 
	
	\begin{definition} {\bf k,$\Delta-\emph{Gapped-UGS}$ states}
		Let $k>0$. A pure quantum state $\ket{\psi}$ is said to be k,$\Delta-\emph{Gapped-UGS}$ if there exists a k-local Hamiltonian $H=\sum_{i=1}^{m}H^n_i$ with energy gap at most $\Delta$ such that $\ket{\psi}$ is the unique ground state of $H$.
	\end{definition}
	
	Of course, these states are a subclass of unique groundstates of
	Hamiltonians which are not necessarily gapped: 
	
	\begin{definition} {\bf k-UGS states}
		Let $k>0$. A pure quantum state $\ket{\psi}$ is said to be k-\emph{UGS} if there exists a k-local Hamiltonian $H=\sum_{i=1}^{m}H^n_i$  such that $\ket{\psi}$ is the unique ground state of $H$.
	\end{definition}
	
	While k,$\Delta$-Gapped-UGS trivially implies k-UGS, the reverse implication doesn't hold. In fact, 
	in Appendix A, we provide a family of pure states $\{\ket{\phi_n}\}$ which asymptotycaly appoximate the familly $\{\ket{CAT^+}_n\}$ of CAT states. We show that $\{\ket{\phi_n}\}$ is k-UGS but not k,$\Delta$-Gapped-UGS for any constant $\Delta$.
	
	The fact that states are unique groundstates of $k$-local Hamiltonians,
	stems from the question of whether their reduced local density matrices
	can be uniquely ``lifted'' to a single pure state. 
	The following two definitions attempting to capture this notion
	in two slightly different ways 
	were introduced in \cite{ChenDawkins}:
	
	\begin{definition} {\bf k-UDA states}
		Let $k>0$. A pure quantum state $\ket{\psi}$ is said to be \emph{k-UDA} if given any mixed state $\rho$ which satisfies that if for all subsets $A$ of qubits, $|A|=k$
		\[
		Tr_{\bar{A}} (\ketbra{\psi}{\psi}) = Tr_{\bar{A}} (\ketbra{\phi}{\phi})
		\]\\
		then $\ketbra{\psi}{\psi}=\rho$.   
	\end{definition}
	
	\begin{definition} {\bf k-UDP states}
		Let $k>0$. A pure quantum state $\ket{\psi}$ is said to be \emph{k-UDP} if given a pure state $\ket{\phi}$ which satisfies that if for all subsets $A$ of qubits, $|A|=k$
		\[
		Tr_{\bar{A}} (\ketbra{\psi}{\psi}) = Tr_{\bar{A}} (\ketbra{\phi}{\phi})
		\]\\
		then $\ket{\phi}=\ket{\psi}$.   
	\end{definition}
	
	While k-UDA always implies k-UDP (namely, k-UDA is contained in k-UDP),
	the converse is not known to hold: in \cite{XinLu}, the authors showed a $4$ qubit state which is 2-UDP but not 2-UDA; 
	extending this to a family of $n$ qubit states for growing $n$ is left open. 
	
	One can easily show that k-UGS states are also k-UDA (hence also k-UDP)
	since the energy of a state relative to a k-local hamiltonian depends only on its reduced density matrices on sets of k qubits (see Appendix C). 
	The reverse implication is still an open problem (though it was claimed at some point to hold, 
	following from \cite{maxEntro} which was then found wrong in \cite{ConvexChenJi}).

	The above sequence of sets thus
	provides a hierarchy of states which are in some sense increasingly
	``more and more globally entangled'';
	As explained above,
	most of these containments were already known to be strict,
	as can be seen in Figure
	\ref{fig:TopOrder}. 
	Indeed, Theorem \ref{thm:nonTrivCubeState}
	can in fact be put in this context: it shows that the containment of
	trivial states inside Gapped-UGS is also strict, as the spherical code is
	a unique groundstate of the gapped Hamiltonian of the spherical code,
	while it is non-trivial. This containment might be somewhat
	surprising given that Gapped-UGS are known to have exponentially decaying
	correlations, and thus of limited global entanglement.
	This leaves only one open question, regarding the strict containments
	in this heirarchy: Is there a k-UDA state which is not k-UGS?
	In addition, also the question of relation between UDP and UDA requires further clarification. 
	
	We mention that outside of this Heirarchy of {\it lack of} topological order, 
	there is also a separate heirarchy of the more distinguished  topologically ordered states. 
	Such a heirarchy was suggested in \cite{TQO}, which defined the TQO-1 condition:  
	
	\begin{definition} {\bf TQO-1 states (see\cite{TQO})}
		\label{def:TQO}
		A family $\{\ket{\psi_n}\}$ of quantum states on n qubits is said to be TQO-1 if there exists a family of positive rate quantum codes $\{C_n\}$ with macroscopic distance such that for all n, $\ket{\psi_n}\in C_n$.
	\end{definition}
	
	Furthermore, the authors argued that TQO-1 alone could not ensure stability of the related local Hamiltonian under small perturbation (see \cite{TQO} for a formal definition), and defined a more refined class of TO states which satisfy another condition called TQO-2 (see \cite{TQO} for a formal definition); the two conditions together imply such a stable topological order (in Figure \ref{fig:TopOrder} this is denoted 
	STQO). 
	

	We observe that by definition,  TQO-1 states  are not
	k-UDP. However, the sets do not 
	complement each other; the CAT state is neither TQO-1 nor $k$-UDP. Indeed, the reduced density matrices of the 2 orthogonal CAT states on any proper subset of qubits are similar, but the code they span only has distance 1. 
	
	As in Figure \ref{fig:TopOrder}  we can thus draw the picture as two hierarchies.
	Of course, these two hierarchies can be
	merged into one hierarchy by considering the complementing sets of either of
	them; but this leads to a less intuitive picture we think). 
	In between those hierarchies, lie "intermediate" families of states such as the CAT states which are complicated enough to be outside of k-UDP and can exhibit long range correlations, but are not inside any non-vanishing distance quantum codes.

	The overall hierarchy of topological orders can be visualized
	in Figure \ref{fig:TopOrder}.
	
	\begin{figure}
		\begin{center}
			\includegraphics[scale=0.3]{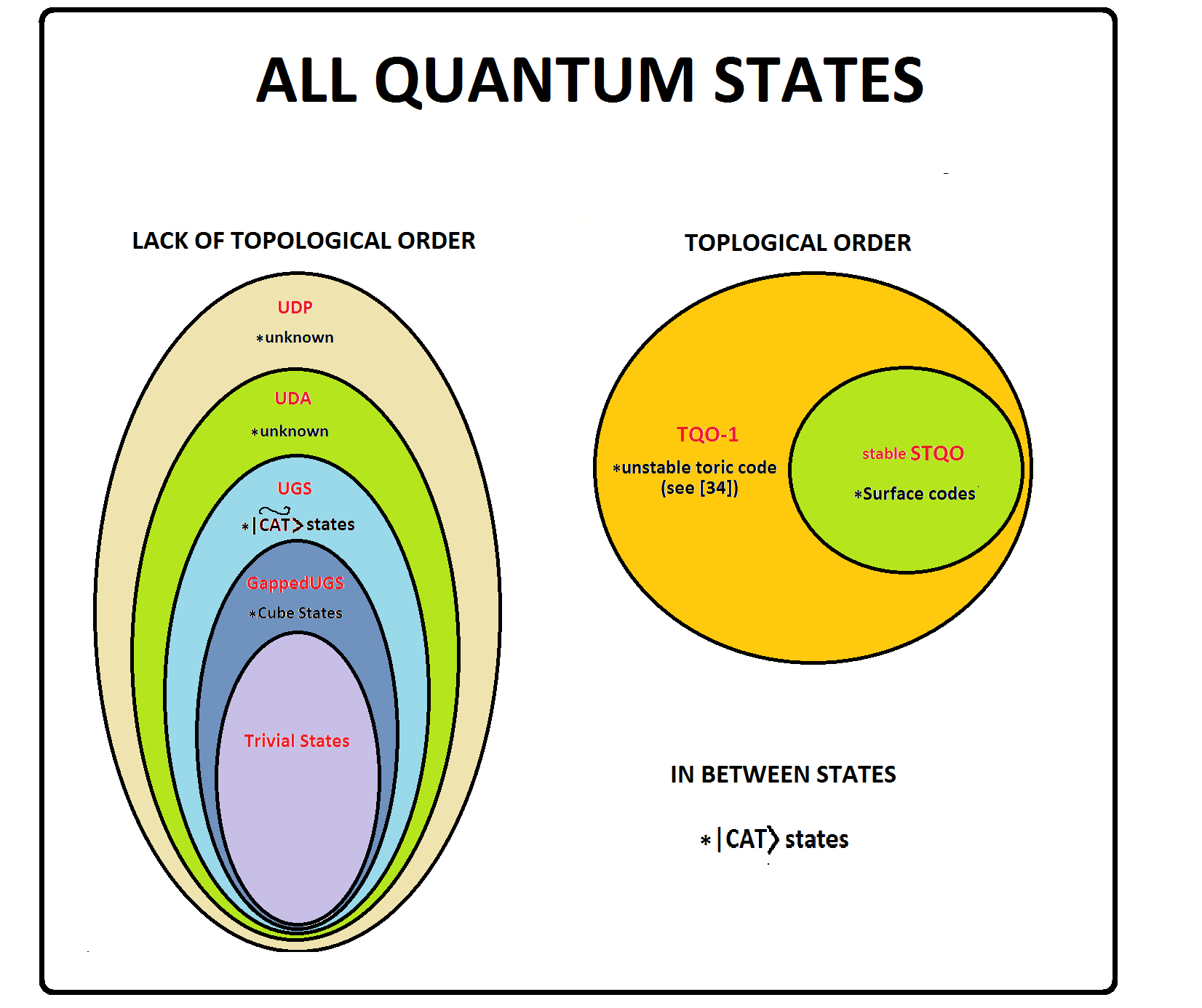} 
		\end{center}
		\caption{\label{fig:TopOrder} The hierarchy of topological orders. On the left side, the hierarchy of lack of quantum topological order. On the right side, the 2 TQO conditions. $\ket{CAT}$ states fall in the middle: They exhibit long range correlations but are not inside any macroscopic distance quantum code.}
	\end{figure}
	
	\section{Background} 
	
	\subsection{Polygonal complexes and graphs}
	
	\begin{definition} {\bf Paths in undirected graphs:}
		\label{def:path}
		Let $G=(V,E)$ be an undirected simple graph and let $e,e'\in E$, we define a path between $e$ and $e'$ as any sequence of
		edges $\gamma=(e=e_{0},e_{1},...,e_{k}=e')$ such that for
		all $0\leq i\leq k-1$, $e_{i}$ and $e_{i+1}$ share a vertex. A path is said to be closed, if $e=e'$. A path $\gamma$ is said to be simple if every vertex belongs to at most 2 edges in $\gamma$.
	\end{definition}

	\begin{definition} {\bf polygonal complex}
		Let $G=(V,E)$ be a finite connected undirected simple graph. Let $F$ be a subset of closed simple paths in $G$ such that for every pair $f_1,f_2\in F$, such that $f_1\neq f_2$, one of the following conditions holds:
		\begin{enumerate}
			\item $f_1\cap f_2=\emptyset$
			\item $f_1\cap f_2\in V$
			\item $f_1\cap f_2\in E$
		\end{enumerate}
		then the triplet $(V,E,F)$ is called a polygonal complex. $V$ is called the \emph{vertex set}, $E$ is called the \emph{edge set} and $F$ is called \emph{face set} of $G$.

	\end{definition}
	
	The following definition is dual to the notion of path in polygonal complexes:
	
	\begin{definition} {\bf co-paths in a polygonal complex:}
		\label{def:Copath}
		Let $G=(V,E,F)$ be a  and let $e,e'\in E$, we define a copath between $e$ and $e'$ as any sequence of
		edges $\gamma=(e=e_{0},e_{1},...,e_{k}=f)$ such that for
		all $0\leq i\leq k-1$, $e_{i}$ and $e_{i+1}$ belong to a common face $f\in F$. A copath is said to be closed, if $e=e'$.  A copath is said to be simple if every face in $F$ contains at most 2 edges from $\gamma$
	\end{definition} 
	
	We now define the vertex and face support of a set of edges:
	\begin{definition} {\bf Vertex and face support of a set of edges:}
		\label{def:support}
		Let $G=(V,E,F)$ a polygonal complex and let $X\subseteq E$. Define the vertex and face support of
		$X$ by:
		\[
		V_{X}=\{v\in V|\,\exists e'\in X,\, v\in e'\}
		\]
		and
		\[
		F_{X}=\{f\in F|\,\exists e'\in X,\, e'\in f\}
		\]
	\end{definition}
	
	In the same fashion, we have: 
	
	\begin{definition} {\bf Edge support of  vertex and face subsets:}
		\label{def:support2} given subsets $Y\subseteq V$ or $Y'\subseteq F$
		we can define the edge support of $Y$ and the edge support of  $Y'$ to be: 
		\[
		E_{Y}=\{e\in E|\exists v'\in Y, v'\in e\}
		\]
		\[
		E_{Y'}=\{e\in E|\exists f'\in Y', e\in f'\}
		\]	
		Similarly we define the face and vertex supports of $Y$ and $Y'$ to be:
		\[
		F_{Y}=\{f\in F|\exists e\in E \exists v\in Y, v\in e, e\in f\}
		\]
		\[
		V_{Y'}=\{v\in V|\exists e\in E \exists f'\in Y', v\in e, e\in f'\}
		\]
		
	\end{definition}
	
	We will be interested in polygonal complexes that mimic compact surfaces with no boundary. In the continuous model, this condition has a local characterization: namely, that any points of the 2-dimensional manifold has some neighbourhood homeomorphic to the plane $\mathbb{R}^2$. The next definition gives a discrete analogue to this local condition:
	
	\begin{definition} {\bf closed surface complex}
		\label{def:CSC}
		A closed surface complex (or CSC) is a connected and simply connected polygonal complex $G=(V,E,F)$ such that for all $v\in V$, the following conditions hold:
		\begin{enumerate}
			\item $|E_{\{v\}}|=|F_{\{v\}}|\geq 2$
			\item There are orderings $E_{\{v\}}=(e_1,...,e_k)$ and $F_{\{v\}}=(f_1,...,f_k)$ such that for all i, $f_i\cap f_{i+1} = e_i$ and $f_i\cap f_j = \emptyset$ for all i,j such that $|i-j|\geq 2$ (Index summation is done modulo k).
		\end{enumerate}
		See Figure \ref{fig:clockwise} for a graphic illustration of the CSC condition.
	\end{definition} 
	
	\begin{figure}
		\begin{center}
			\includegraphics[scale=0.3]{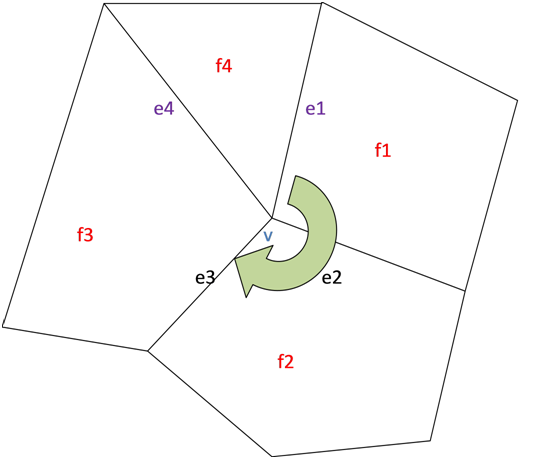} 
		\end{center}
		\caption{\label{fig:clockwise} The CSC condition implies that one can order all edges adjacent to v in a cyclic order such that every 2 consecutive edges share exactly one face}
	\end{figure}
	
	\begin{claim}
		\label{cl:FaceIntersect}
		Let $G=(V,E,F)$ be a CSC. Then for every edge $e\in E$, there are exactly 2 faces $f_1,f_2\in F$ such that $e\in f_1$ and $e\in f_2$
	\end{claim}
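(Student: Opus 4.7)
The plan is to localize the argument to a single endpoint of $e$ and read off the count of faces from the cyclic structure guaranteed by condition (2) of Definition \ref{def:CSC}.

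First I would pick any endpoint $v$ of $e$ and invoke the CSC condition at $v$ to obtain cyclic orderings $E_{\{v\}}=(e_1,\dots,e_k)$ and $F_{\{v\}}=(f_1,\dots,f_k)$ satisfying $f_i\cap f_{i+1}=e_i$ for every $i$ (indices modulo $k$) and $f_i\cap f_j=\emptyset$ whenever the cyclic distance $|i-j|\ge 2$. Since $v\in e$, the edge $e$ lies in $E_{\{v\}}$, so $e=e_i$ for a unique index $i$, and the identity $f_i\cap f_{i+1}=e$ already exhibits two distinct faces $f_i,f_{i+1}$ of $G$ both containing $e$.

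For the uniqueness half I would argue by a short case analysis. Let $f\in F$ contain $e$. Since $e$ is an edge of the closed simple path $f$ and $v$ is an endpoint of $e$, the vertex $v$ lies on $f$, so $f\in F_{\{v\}}$ and $f=f_j$ for some $j$. If the cyclic distance $|i-j|\ge 2$, then $f_i\cap f_j=\emptyset$, contradicting $e\in f_i\cap f_j$. The only other case to rule out is $j=i-1$: the CSC relation $f_{i-1}\cap f_i=e_{i-1}$ combined with $e=e_i\in f_{i-1}\cap f_i$ would force $e_i=e_{i-1}$, contradicting the distinctness of the edges in $E_{\{v\}}$. Hence $j\in\{i,i+1\}$ and exactly two faces contain $e$.

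The one small subtlety I expect to check is that the cyclic arithmetic is meaningful at low degree. Although the definition only requires $k\ge 2$, the case $k=2$ forces $f_1\cap f_2=e_1$ and $f_2\cap f_1=e_2$ (the latter by taking $i=2$ and $f_3=f_1$), which forces $e_1=e_2$ and contradicts distinctness. Hence $k\ge 3$ at every vertex of a CSC, making the indices $i-1,i,i+1$ in the argument above genuinely distinct modulo $k$. With this the two paragraphs above complete the proof, and I do not foresee any other obstacle.
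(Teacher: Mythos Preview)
Your proposal is correct and follows essentially the same approach as the paper: localize at an endpoint $v$ of $e$, use the cyclic ordering of $E_{\{v\}}$ and $F_{\{v\}}$ from the CSC condition to exhibit the two faces $f_i,f_{i+1}$ containing $e=e_i$, and then argue that any face containing $e$ must lie in $F_{\{v\}}$ and hence coincide with one of these two. Your case analysis for uniqueness (ruling out $j=i-1$ and cyclic distance $\ge 2$) and your observation that necessarily $k\ge 3$ make explicit what the paper leaves implicit in its appeal to ``the uniqueness part of the CSC condition.''
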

	
	\begin{proof}
		Let $v\in e$. Since $G$ is a CSC, we have $E_v=(e_1,...,e_k)$ and there is a face between every 2 cyclically consecutive edges in the list. Since $e\in E$, there exist some index $i\leq k$ such that $e=e_i$ and therefore there exist 2 unique faces $f_1$ and $f_2$ such that $f_1$ connects between $e$ and $e_{i-1}$ and $f_2$ connects between $e$ and $e_{i+1}$. But observe that if there was another face $f_3$ such that $e\in f_3$ then $f_3$ must contain an edge $e'\neq e$ such that  $v\in e'$ and $e'\in E_v$ contradicting the uniqueness part of the CSC condition.
	\end{proof}
	
	\begin{definition} {\bf Vertex and face degree}
		\label{def:VertFaceDeg}
		Let $G=(V,E,F)$ be a polygonal complex. We define the vertex and face degrees of G to be:
		\[
		deg(G)=max_{v\in V} |E_{\{v\}}|
		\]
		and
		\[
		\widehat{deg}(G)=max_{f\in F} |E_{\{f\}}|
		\]
	\end{definition}
	
	Now, consider a polygonal complex $G=(V,E,F)$. We define two natural metrics on $E$:
	
	\begin{definition} {\bf The set of all paths between two edges, $\Gamma$:}
		\label{def:gamma}
		We denote the set of all paths between $e$ and $f$ by $\Gamma_{e,f}$,
		and the set of copaths between $e$ and $f$ as $\widehat{\Gamma}{}_{e,f}$. Note that $|\gamma|=k$ is the number of edges in $\gamma$.
	\end{definition}
	
	The above definitions yield the following 2 discrete metrics on $E$:
	
	\begin{definition}{\bf Metrics:}
		\label{def:metric}
		
		\[
		\mbox{\ensuremath{\forall}}e,f\in E:\, d(e,f)=min_{\gamma\in\Gamma_{e,f}}(|\gamma|-1)
		\]
		\[
		\mbox{\ensuremath{\forall}}e,f\in E:\,\widehat{d}(e,f)=min_{\gamma\in\widehat{\Gamma}{}_{e,f}}(|\gamma|-1)
		\]
	\end{definition}
	
	Note that while both metrics $d$ and $\widehat{d}$ can be defined on any $2$-dimensional polygonal complex, they need not be equivalent (related up to 
	a constant) in general. 
	On the other hand, one can easily show that on CSC's with bounded vertex and face degrees, they are indeed
	equivalent. Indeed, assuming $\widehat{deg}(G) = D$, it
	can be shown that for all $e,f\in E$:
	
	\[
	\widehat{d}(e,f) \leq d(e,f) \leq D \cdot \widehat{d}(e,f) 
	\]
	
	We will not be making use of this claim, but give it just for intuition, hence we skip the proof here. 
	
	\begin{definition} {\bf Topological balls and diameter:}
		\label{def:BallDiam}
		Let $G=(V,E,F)$ be a CSC and let $n\geq 0$. Given some edge $e\in E$ define the ball of radius n around $e$ relative to the metrics d and $\widehat{d}$ by:
		\[
		Ball(e,n)=\{e'\in E|d(e,e')\leq n\}
		\]
		\[
		\widehat{Ball}(e,n)=\{e'\in E|\widehat{d}(e,e')\leq n\}
		\]
		The resulting 2 diameters of $G$ are defined as:
		\[
		diam(G)=max_{e,e'\in E}\, d(e,e')
		\]
		\[
		\widehat{diam}(G)=max_{e,e'\in E}\, \widehat(d)(e,e')
		\]
	\end{definition}
	
	We will also need to define the notions of boundary and coboundary:
	\begin{definition} {\bf Boundary and coboundary:}
		\label{def:BoundCobound}
		Let $G=(V,E,F)$ a CSC and $E'\subseteq E$ be a subset
		of edges. The edge coboundary of $E'$ in $G$ is defined as: 
		\[
		\widehat{\partial}(E')=\{(u,v)\in E\,|\exists(u,v')\in E',\,\nexists(u',v)\in E'\}
		\]
		Observe that from claim \ref{cl:FaceIntersect} every edge $e\in E$ is the intersection of 2 unique
		faces $f_{e}\neq g_{e}\in F$. Therefore, we can define the edge boundary
		of $E'$ as:
		\[
		\partial(E')=\{e\in E\,|\exists e'\in E':e'\in f_{e},\,\nexists e''\in E':e'\in g_{e''}\}
		\]
	\end{definition}
	
	\subsection{Algebraic Topology} 
	We recall some basic definitions from algebraic topology (\cite{Hatcher}).
	
	\begin{definition} {\bf Chain complex}
		A chain complex $C=((C_i),(\partial_i))$ is a sequence of abelian groups $C_i$ and group homomorphisms $\partial_i:C_i\rightarrow C_{i-1}$ called \emph{boundary maps} such that for $i\geq 1$, we have 
		\[
		\partial_{i-1} \circ \partial_i=0 
		\]
	\end{definition}
	
	\begin{definition} {\bf Cochain complex:}
		A Cochain complex $C^*=((C^*_i),(\partial^i))$ is a sequence of abelian groups $C^*_i$ and group homomorphisms $\partial^i:C_i\rightarrow C_{i+1}$ called \emph{coboundary maps} such that for $i\geq 0$, we have 
		\[
		\partial^{i+1} \circ \partial^i=0 
		\]
	\end{definition}
	
	\begin{definition} {\bf Chain and cochain complexes from CSCs:}
		\label{def:ChainComplex}
		Let $G=(V,E,F)$ be a CSC. There exists a natural way to define a both a chain and a cochain complex from G the following way: 
		Define $C_0=<V>$, $C_1=<E>$ and $C_2=<F>$ to be the free abelian groups generated by the sets $V$, $E$, and $F$ respectively. There are 2 well defined group homomorphisms  $\partial_2:C_2\rightarrow C_1$ and $\partial_1:C_1\rightarrow C_0$ corresponding to "taking the boundary": more precisely, we can define $\partial_2$ and $\partial_1$ on basis elements the following way:
		\[
		\forall f\in F: \partial_2(f)=\sum_{e \in f}e
		\]
		\[
		\forall e={v_1, v_2}\in E: \partial_1(e)=v_1+v_2
		\]
		Define $C=((C_i),(\partial_i))_{i\in\{1,2\}}$. It can be easily shown that the chain complex condition $\partial_1 \circ \partial_2 = 0$ holds for $C$. Indeed, if $f=(e_1,...,e_k)$ then $\partial_2(f)=e_1+...+e_k$ and assuming $e_i$ shares the vertex $v_i$ with $e_i+1$ cyclically, then
		\[
		\partial_1\circ \partial_2 (f) = \partial_1 (e_1)+...+\partial_1(e_k) = 2(v_1+...+v_k)=0
		\]
		In a similar fashion, one can naturally define a cochain complex structure on G:\\
		Let $i\in {0,1,2}$, and let $C^*_i=Hom(C_i,Z_2)$ where $Hom(C_i,Z_2)$ is the group of homomomorphisms from 
		$C_i$ to the 2 elements group $Z_2$. Note that for every element $x$ in in a generating set $X_i$ of $C_i$, one can define the following element $f_{x_i}\in C^*_i$:
		\[   
		f_{x_i}(x) = 
		\begin{cases}
		0 & \text{if } x_i=x\\ 
		1 & \text{if } x\neq x\\
		\end{cases}
		\]
		In fact, it can be easily checked that the set $\{f_{x_i}| x_i\in X_i\}$ generates the whole group $C^*_i$. \\
		The boundary maps $\partial_1$ and $\partial_2$ on $C_1$ and $C_2$ induce coboundary maps $\partial^0:C^*_0\rightarrow C^*_1$ and $\partial^1:C^*_1\rightarrow C^*_2$. Let $i\in {1,2}$ and let $f\in C^*_i$ and $x\in C_{i+1}$. Define:
		\[
		[\partial^i(f)](x) = f(\partial^{i+1}(x))
		\]
		
		or in other words, $\partial^i=f\circ \partial^{i+1}$. Now, define $C^*= ((C^*_i),(\partial^i))_{i\in\{1,2\}}$. It is easily seen that the boundary condition $\partial_1 \circ \partial_2 = 0$ induces the coboundary condition on $(\partial^1,\partial^2)$: 
		\[
		\partial^{i+1} \circ \partial^i=0 
		\] 
		giving $C^*$ a structure of cochain complex.
	\end{definition}
	
	In order to extract meaningful invariants from a surface chain complex, one needs to look at the actual homology and cohomology groups derived from $C$ and $C^*$:
	
	\begin{definition} {\bf Homology groups:}
		\label{def:Hom}
		Let $C=(C_i, \partial_i)_{i=0}^n$ be a chain complex.
		The i-th homology group $H_i(C)$ of C is defined as:
		\[
		H_i(C)=Ker(\partial_i) / Im(\partial_{i+1})
		\]\\
		Elements of $Im(\partial_{i+1})$ are called $i$-boundaries and elements of $Ker(\partial_i)$ are called $i$-cycles. 
		When $C$ is derived from a CSC G, we use the notation $H_i(G)=H_i(C)$.
	\end{definition}
	
	\begin{definition} {\bf Cohomology groups:}
		\label{def:CoHom}
		Let $C^*=(C^i, \partial^i)_{i=0}^n$ be a cochain complex.
		The i-th cohomology group $H^i(C^*)$ of $C^*$ is defined as:
		\[
		H^i(C)=Ker(\partial^i) / Im(\partial^{i-1})
		\]\\
		Elements of $Im(\partial^{i-1})$ are called $i$-coboundaries and elements of $Ker(\partial_i)$ are called $i$-cocycles. 
		When $C^*$ is derived from a CSC G, we use the notation $H^i(G)=H^i(C)$.
	\end{definition}
	
	One can interpret the rank of $H_i$ as the number of i-dimensional "holes" in S. A classic example is the torus $T$. The first homology group $H_1(T)$ of $T$ has rank 2. A natural basis for $H_1(T)$ is the one spanned by a lateral and a longitudinal non-trivial loops around $T$. 
	
	\begin{definition} {\bf Homology class:}
		\label{def:HomClass}
		Let $G$ be a CSC and let $\gamma$ be
		a path. Define the \emph{homology class} $[\gamma]_{G}$ of $\gamma$ to be the set of chains $\gamma' \in C_1(G)$ such that $\gamma+\gamma'$ is a 1-boundary (or equivalently,
		$\gamma+\gamma'$ vanishes in $H_{1}(G)$). When the underlying complex
		$G$ is known from  context we just write $[\gamma]$ instead of
		$[\gamma]_{G}$\\
	\end{definition}
	
	In the rest of this paper, we will look at paths (and copaths) as both subsets of the edge set $E$ \emph{and} chains in $C_1$. With that in mind, the expression $\alpha + \beta$  should be understood either as the sum of 2 chains in $C_1$ or equivalently, as the symmetric difference between two sets.
	
	\begin{remark}
		The fact we used the same term $\partial$ for definitions \ref{def:BoundCobound} and \ref{def:Hom} is not coincidental: indeed, it can easily be shown that for any set $E'\subseteq E$ of edges:
		\[
		\partial(E')=\partial_2(F_{E'})
		\] 
		and
		
		\[
		\widehat{\partial}(E')=\partial^0(V_{E'})
		\] 
	\end{remark}
	\subsection{Quantum surface codes}
	\label{Subsec:SurfaceCodes}
	
	First recall the definition of stabilizer codes:
	\begin{definition} {\bf Stabilizer formalism:}
		\label{def:Stab}
		Denote by $P_{n}$ the Pauli group acting on the Hilbert space $(H_n=\mathbb{C}^2)^{\otimes n}$ of n qubits and let $\mathfrak{G}<P_n$ be an abelian subgroup of $P_n$ such that $-1\notin \mathfrak{G}$. Then the set
		\[
		C_{\mathfrak{G}}=\{\ket{\psi}\in H_n|\forall g\in\mathfrak{G}:g\cdot \ket{\psi}=\ket{\psi}\}
		\]
		
		is the $\mathfrak{G}$-stabilized subspace of $H_n$ and has dimension $2^{n-dim(\mathfrak{G})}$ where $dim(\mathfrak{G})$
		is the dimension of $\mathfrak{G}$ as a vector space over $\mathbb{F}_2$.
	\end{definition}
	
	We will also need the following fact about Quantum Error Correcting codes: 
	
	\begin{fact}\label{fact:qecc}{\bf adapted from \cite{nielsen} p.436}: 
		For any two orthogonal states in a quantum error correcting code 
		whose distance is $d_n$, we have 
		for any operator $O_n$ of support on a set $K\subseteq [n]$ of size $|K|<d_n$ 
		
		\[
		\braket{\psi_n|O_n}{\phi_n}=0
		\]
		
		and from the properties of the partial trace, we get:
		
		\[
		Tr_{\bar{K}} (\ketbra{\psi_n}{\psi_n}) = Tr_{\bar{K}} (\ketbra{\phi_n}{\phi_n})
		\]
	\end{fact}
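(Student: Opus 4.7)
The plan is to deduce both identities from the Knill--Laflamme characterization of a quantum code of distance $d_n$: the code-space projector $P_n$ satisfies $P_n O_n P_n = \lambda_{O_n} P_n$ for every operator $O_n$ whose support $K$ has $|K| < d_n$, with $\lambda_{O_n} \in \mathbb{C}$ a scalar depending only on $O_n$. A preliminary step is to reduce to this form: any operator supported on $K$ can be expanded (after tensoring with the identity on $\bar{K}$) as a linear combination of Pauli strings $P_\alpha$ supported in $K$, each of weight at most $|K| < d_n$, so by linearity both parts of the fact reduce to the case $O_n = P_\alpha$, where the Knill--Laflamme identity directly applies.

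For the first identity, I would pick any two orthogonal codewords $\ket{\psi_n}, \ket{\phi_n}$, use that they lie in the image of $P_n$, and compute
\[
\bra{\psi_n} O_n \ket{\phi_n} \;=\; \bra{\psi_n} P_n O_n P_n \ket{\phi_n} \;=\; \lambda_{O_n}\, \braket{\psi_n}{\phi_n} \;=\; 0.
\]
For the second identity, the same calculation with $\ket{\phi_n}$ replaced by $\ket{\psi_n}$ yields $\bra{\psi_n} O_n \ket{\psi_n} = \lambda_{O_n}$, a value independent of the codeword; hence $\bra{\psi_n} O_n \ket{\psi_n} = \bra{\phi_n} O_n \ket{\phi_n}$ for every $O_n$ supported in $K$. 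Passing through the partial-trace identity $\tr(O_K \rho_K) = \tr\!\left((O_K \otimes I_{\bar{K}})\, \rho\right)$, this equality of all local expectations forces
\[
\mathrm{Tr}_{\bar{K}}(\ketbra{\psi_n}{\psi_n}) \;=\; \mathrm{Tr}_{\bar{K}}(\ketbra{\phi_n}{\phi_n})
\]
by non-degeneracy of the Hilbert--Schmidt pairing on operators supported in $K$.

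The only step that requires care is importing the scalar-on-code property $P_n O_n P_n = \lambda_{O_n} P_n$ from the chosen definition of code distance. If the paper adopts the standard Pauli-weight definition of distance, this is a classical equivalence (see, e.g., Nielsen--Chuang Theorem~10.1) which I would invoke rather than reprove, as its proof is independent of the rest of the argument. Once that property is in hand, the two claims follow from the short linear-algebra chain above, and no substantial obstacle remains.
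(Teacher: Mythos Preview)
Your proposal is correct and is the standard derivation via the Knill--Laflamme conditions. Note, however, that the paper does not actually supply a proof of this fact: it is stated as a \emph{Fact} with a citation to Nielsen--Chuang (p.~436) and is used as a black box, so there is no ``paper's own proof'' to compare against. Your argument is essentially the one that reference contains, so nothing further is needed.
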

	
	One of the most famous example of stabilizer codes is the toric code, introduced by Kitaev \cite{ToricCode}. 
	Consider the lattice $\mathbb{Z}_n \times \mathbb{Z}_n$ where $\mathbb{Z}_n$ is the cyclic group of order n. For every edge in the lattice define a 2 dimensional site (namely, a qubit). Moreover, for each vertex $v$ in the grid, we associate a \emph{star} Pauli operator $A_{v}=\sigma_{X}^1 \sigma_{X}^2 \sigma_{X}^3 \sigma_{X}^4$ where
	$\{\sigma_{X}^i\}_{i=1}^4$ are applied on the 4 sites (edges) connected to $v$. Similarly, for each face f (any basic $1\times 1$ square), we associate a \emph{plaquette} Pauli operator $B_{f}=\sigma_{Z}^1 \sigma_{Z}^2 \sigma_{Z}^3 \sigma_{Z}^4$ where $\{\sigma_{Z}^1\}_{i=1}^4$ are applied on the 4 sites connected to f.
	Kitaev proved that the subgroup generated by ${A_v, B_f}$ satisfies the above 2 conditions, and has dimension n-2 (see \cite{ToricCode}). Hence, the corresponding stabilized subspace has dimension $2^{n-(n-2)}=4$ and can encode 2 qubits.
	One can naturally extend the definition of surface codes on any CSC in 
	the following way:
	
	\begin{definition} {\bf Stabilizer formalism on CSCs:}
		\label{def:StabSurf}
		Let $G=(V,E,F)$ be a CSC. Associate a qubit with 
		each edge $e\in E$. 
		The underlying Hilbert space $H$ is the tensor product of all $n=|E|$ sites. 
		\\
		For each vertex $v\in V$, define the following Pauli operator:
		\[
		A_{v}=\prod_{e\in E, v\in e}\sigma^e_{X}
		\]
		And for each face $f\in F$, define the following Pauli operator:
		\[
		B_{f}=\prod_{f\in F,e\in f}\sigma^e_{Z}
		\]
		Since each vertex has either 0 or 2 common edges with a given face, the above operators commute. Therefore, $\mathfrak{G} = Span(\{A_{v},B_{f}|v\in V,\, f\in F\})$ is an abelian subgroup of $P_n$, and since $-1\notin \mathfrak{G}$, the stabilized subspace of $\mathfrak{G}$ is a quantum code $C_G$ of dimension $2^{n-dim(\mathfrak{G})}$. 
	\end{definition}
	
	The dimension of $C_G$ is directly related to the first homology class and cohomology classes $H_1(G)$ and $H^1(G)$ of $G$ through the following theorem (see \cite{Bombin} for further details):
	
	\begin{theorem}
		Let $G=(V,E,F)$ be a CSC, then $dim(C_G)=|V|-|E|+|F|+2=2 - \chi(G)$.
	\end{theorem}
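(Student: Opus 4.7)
The plan is to compute the dimension of the stabilizer group $\mathfrak{G}$ as an $\mathbb{F}_2$-vector space and then apply Definition \ref{def:Stab}, which gives (interpreting the theorem as a statement about the number of logical qubits, i.e.\ $\log_2$ of the code's Hilbert-space dimension) $|E| - \dim(\mathfrak{G})$ logical qubits. Since every $A_v$ is purely $X$-type and every $B_f$ is purely $Z$-type, the subgroups $\mathfrak{G}_X = \langle A_v : v \in V\rangle$ and $\mathfrak{G}_Z = \langle B_f : f \in F \rangle$ intersect trivially in the Pauli group modulo signs, so $\dim(\mathfrak{G}) = \dim(\mathfrak{G}_X) + \dim(\mathfrak{G}_Z)$, and the two dimensions can be analysed separately.

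For the $X$-part, I would identify $\mathfrak{G}_X$ with the image of the coboundary map $\partial^0 : C_0^* \to C_1^*$: for $S \subseteq V$, one has $\prod_{v \in S} A_v = \prod_e \sigma_X^{\mu_S(e)}$ where $\mu_S(e) = |e \cap S| \bmod 2$, which is exactly $\partial^0(\mathbbm{1}_S)$. By rank--nullity, $\dim(\mathfrak{G}_X) = |V| - \dim \ker(\partial^0)$. Since every edge has exactly two endpoints, $\mu_S \equiv 0$ iff $S$ is a union of connected components of $G$, and connectedness of $G$ (built into the definition of a polygonal complex) yields $\ker(\partial^0) = \{\emptyset, V\}$ and hence $\dim(\mathfrak{G}_X) = |V| - 1$.

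The argument for the $Z$-part is analogous but uses Claim \ref{cl:FaceIntersect}: each edge lies in exactly two faces, so $\prod_{f \in F'} B_f = \prod_e \sigma_Z^{\nu_{F'}(e)}$ with $\nu_{F'}(e) = |\{f \in F' : e \in f\}| \bmod 2$, identifying $\mathfrak{G}_Z$ with the image of $\partial_2 : C_2 \to C_1$. A subset $F' \subseteq F$ lies in $\ker(\partial_2)$ iff for every edge both or neither of its two adjacent faces lies in $F'$, i.e.\ $F'$ is a union of connected components of the \emph{face-adjacency graph} $\widehat G$ whose vertices are $F$ and whose edges connect faces sharing a common edge of $G$. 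The all-faces subset $F$ lies in $\ker(\partial_2)$ by Claim \ref{cl:FaceIntersect}, so $\dim \ker(\partial_2) \geq 1$; once connectedness of $\widehat G$ is established, this inequality becomes an equality and $\dim(\mathfrak{G}_Z) = |F| - 1$. Combining, $\log_2 \dim(C_G) = |E| - (|V| + |F| - 2) = 2 - \chi(G)$, as required.

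The main obstacle will be showing that $\widehat G$ is connected; this is the combinatorial analogue of the topological fact that a connected closed surface has $H_2 \cong \mathbb{F}_2$. I would establish it in two steps: first, the CSC axiom in Definition \ref{def:CSC} provides a cyclic ordering of faces around every vertex $v$ with consecutive faces sharing an edge of $G$, so all faces incident to a common vertex are connected in $\widehat G$; second, given any two faces $f, f'$, pick vertices $v \in f$ and $v' \in f'$ and a path in the $1$-skeleton of $G$ from $v$ to $v'$, then use the vertex-incidence argument along each vertex of this path to produce a walk in $\widehat G$ from $f$ to $f'$. This reduces the global homological statement to the local regularity guaranteed by the CSC axioms, completing the proof.
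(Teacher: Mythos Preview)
Your proof is correct and follows essentially the same approach as the paper: separate the $X$-type and $Z$-type relations, show each has a one-dimensional kernel coming from connectedness, and conclude $\dim(\mathfrak{G}) = |V| + |F| - 2$. The paper argues the vertex side by the same path-induction you describe and then dismisses the face side with ``the same analysis also holds''; your version is more explicit in identifying the stabilizer subgroups with the images of $\partial^0$ and $\partial_2$ and in isolating face-adjacency connectedness (via the CSC local axiom) as the content of the $Z$-side, but the underlying argument is the same.
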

	
	\begin{proof}
		From the general theory of the Pauli formalism, $dim(C_G)=n-rank(G)$ where n is the number of physical qubits, and we get $n=|V|$. By definition, we have $G=<A_v,B_f>_{v\in V,f\in F}$. Observe that any non trivial relation between those generators must involve only one type stabilizers: either star X operators or Z plaquette operators. Let  $V'\subseteq V$, and assume that $\prod_{v'\in V'}A_{v'}=I$. We claim that $V'=V$ or $V'=\emptyset$. Indeed, suppose $V'\neq \emptyset$. Then there exists some $v_0\in V'$. Now let $v_1\in V$. From the connectedness of CSCs, there is a path $\gamma={e_0,...,e_k}$ joining $v_0$ to $v_1$. We can also look at $\gamma$ as a sequence of  vertices ${v_0=w_0,...,w_{k+1}=v_1}$. We prove by induction on the path that $w_i\in V'$ for every $i\leq k+1$:\\
		By assumption, $v_0\in V'$. Now assume $v_i\in V'$ for some $i\leq k$. Since $P=\prod_{v'\in V'}A_{v'}=I$, the restriction of $P$ to the qubit on the edge $e_i=(v_i,v_{i+1})$ is identity and in particular, there must be an even amount of plaquettes $A_{v'}$, $v'\in V'$ acting non trivially on $e_i$. Since $A_{v_i}$ already acts no trivially on $e_i$, the plaquette $A_{v_{i+1}}$ must participate in P, or equivalently, $v_{i+1}\in V'$.\\
		The same analysis also holds for star generators, therefore, the only non trivial relations between the original generators are:
		\[
		\prod_{v\in V} A_v
		\]
		\[
		\prod_{f\in F} B_f
		\]
		It follows that $rank(G)=|V|+|F|-2$ and we get 
		\[
		dim(C_G)=n-rank(G)=|E|-|V|-|F|+2 = 2 - \chi(G)
		\]
	\end{proof}
	
	\begin{corollary}
		\label{cor:homologyDimensionCSC}
		Let $G=(V,E,F)$ be a CSC, then $dim(C_G)=dim(H_1(G))=dim(H^1(G))$
	\end{corollary}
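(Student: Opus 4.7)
The plan is to derive the corollary directly from the preceding theorem, which already establishes $\dim(C_G) = 2 - \chi(G)$ with $\chi(G) = |V| - |E| + |F|$. The strategy is to rewrite $\chi(G)$ in terms of homology-group dimensions and then invoke a standard duality over $\mathbb{F}_2$.

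First, I would apply rank--nullity to the boundary maps $\partial_1, \partial_2$ of the chain complex $C$ to obtain the Euler identity
\[
\chi(G) \;=\; \dim(C_0) - \dim(C_1) + \dim(C_2) \;=\; \dim(H_0(G)) - \dim(H_1(G)) + \dim(H_2(G)).
\]
Next, I would compute the extremal homology groups. Since $G$ is connected, $\dim(H_0(G)) = 1$. For $\dim(H_2(G)) = 1$, I would use the CSC structure: Claim \ref{cl:FaceIntersect} guarantees that every edge is shared by exactly two faces, so the chain $\sum_{f \in F} f$ satisfies $\partial_2\left(\sum_{f \in F} f\right) = 0$ and is a nonzero $2$-cycle. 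Uniqueness up to scalars in $\mathbb{F}_2$ is the face-analogue of the star-generator analysis appearing in the preceding theorem: if a subset $F' \subseteq F$ gives $\sum_{f \in F'} f \in \ker(\partial_2)$, then for every edge the two faces containing it are either both in $F'$ or both outside, and connectedness of $G$ (propagated through the cyclic ordering around each vertex guaranteed by the CSC axioms) forces $F' \in \{\emptyset, F\}$. Substituting $\dim(H_0(G)) = \dim(H_2(G)) = 1$ into the Euler identity yields $\dim(H_1(G)) = 2 - \chi(G) = \dim(C_G)$, which is the first equality.

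Finally, for $\dim(H^1(G)) = \dim(H_1(G))$: since everything is defined over $\mathbb{F}_2$ and by construction $C^*_i = \text{Hom}(C_i, \mathbb{F}_2)$ with coboundary maps equal to the transposes of the boundary maps, the cochain complex is precisely the $\mathbb{F}_2$-dual of the chain complex. A direct rank--nullity argument (equivalently, the universal coefficient theorem with field coefficients) then gives $\dim(H^i(G)) = \dim(H_i(G))$ for every $i$, which specializes to the desired equality at $i = 1$.

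The main obstacle, though a modest one, is verifying $\dim(H_2(G)) = 1$: this step needs both Claim \ref{cl:FaceIntersect} to ensure $\sum_f f$ is actually a cycle, \emph{and} a face-adjacency connectivity argument mirroring the $\prod_v A_v = I$ analysis in the preceding theorem's proof, to rule out any other nontrivial $2$-cycles. Everything else in the plan is standard linear algebra applied to the chain complex already constructed in Definition \ref{def:ChainComplex}.
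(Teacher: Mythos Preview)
Your proposal is correct and follows the same overall strategy as the paper: use the preceding theorem to get $\dim(C_G)=2-\chi(G)$, rewrite $\chi(G)$ via the Euler--Poincar\'e formula as $\dim H_0 - \dim H_1 + \dim H_2$, compute $\dim H_0=\dim H_2=1$, and conclude $\dim(C_G)=\dim H_1(G)$. The paper's proof is terser: it simply asserts $\dim H_0=\dim H_2=1$ ``since $G$ is connected'' and cites Hatcher for the Euler identity, whereas you spell out the $H_2$ computation explicitly via Claim~\ref{cl:FaceIntersect} and a face-connectivity argument mirroring the star-generator analysis. For the final equality $\dim H^1=\dim H_1$, the paper invokes Poincar\'e duality ($H^1\cong H_{2-1}$ for a closed $2$-complex), while you use the more elementary fact that over the field $\mathbb{F}_2$ the cochain complex is the linear dual of the chain complex, so $\dim H^i=\dim H_i$ by rank--nullity (equivalently, the universal coefficient theorem with field coefficients). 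Both justifications are valid here; yours is arguably cleaner since it does not appeal to any manifold structure beyond what is already encoded in the chain complex.
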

	
	\begin{proof}
		It is well known that for any hypergraph G, $\chi(G)=dim(H_0(G))-dim(H_1(G))+dim(H_2(G))$ (see \cite{Hatcher}). Since G is connected, we have both $dim(H_0(G))=1$ and $dim(H_2(G))=1$. Therefore from the previous theorem, we get:
		\[
		dim(C_G)=2-\chi(G)=2-(2-dim(H_1(G)))=dim(H_1(G))
		\]
		Furthermore, from Poincare duality, since CSCs are 2 dimensional complexes, we have $dim(H^1(G))=dim(H_{2-1}(G))=dim(H_1(G))$ finishing the proof.
	\end{proof}
	
	Finally, we will make use of the following notation in the rest of this paper: 
	\begin{definition} {\bf Pauli operators on subsets of edges:}
		\label{def:SubsetPauli}
		Let $G=(V,E,F)$ be a CSC and let $E'\subseteq E$. We define $E'_{X}=\prod_{e\in E'}\sigma_{X}^{e} \otimes I_{E\backslash X}$ 
		and $E'_{Z}=\prod_{e\in E'}\sigma_{Z}^{e}  \otimes I_{E\backslash X}$. In particular, if $\gamma$ is a path in $G$,
		then $\gamma_Z$ is defined as the Pauli operator which applies a $\sigma_{Z}$ operator on each site lying on $\gamma$ and identity everywhere else, and likewise for $\gamma_X$
	\end{definition}
	
	\section{Notations and definitions}
	Before we proceed further, we need to state some definitions that will be relevant for the rest of the proof:
	
	\begin{definition} {\bf Circuit induced directed graph:}
		\label{def:CircGraph}
		Let $U$ be a unitary circuit acting on $H$=$\left(\mathbb{C}^{2}\right)^{\otimes n}$
		we define the following directed graph $G_{U}$ associated with $U:$
		the edges are the wires of $U$ (directed according to the arrow of
		time) and the vertices are the quantum gates together with a set of n input vertices and n output vertices of degree one. An edge $e=(u,v)$ connects
		two vertices $v\, and\, u$ if and only if there exists a wire in $U$
		connecting the gate $u$ to the gate $v$. For the i-th qubit $q$ we
		can associate two edges in $G_{U}$: $in(q)$ and $out(q)$ corresponding
		to the i-th input and i-th output wire.
	\end{definition}
	
	With that definition in hand, we can now formally define the upper and lower light cones of any subset of edges from $E$:
	
	\begin{definition} {\bf Upper and lower light cone:}
		\label{def:LC}
		Let $U$ be a unitary circuit acting on and Hilbert
		space $H$=$\left(\mathbb{C}^{2}\right)^{\otimes n}$ of the set $S$ 
		of $n$ qubits. Let $S'\subseteq S$
		be a subset of the qubits. We define the upper and lower light cones of
		$S'$ to be: 
		\[
		L^{\uparrow}(S')=\{q\in S|\exists q'\in S',\,\Gamma_{in(s'),out(s)}\neq\emptyset\}
		\]
		\[
		L^{\downarrow}(S')=\{q\in S|\exists q'\in S',\,\Gamma_{in(q),out(q')}\neq\emptyset\}
		\]
	\end{definition}
	
	The following easy fact follows: 
	
	\begin{fact}\label{fact:equivalence}
		For two sets of qubits, $W$ and $V$, we have 
		\[ L^\uparrow(W)\subseteq V ~~~ iff ~~~~W\subseteq L^\downarrow(V)\]
		\[ V\subseteq L^\uparrow(W) ~~~ iff ~~~~L^\downarrow (V)\subseteq W\]
	\end{fact}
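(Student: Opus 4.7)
The plan is to derive both equivalences directly from Definition \ref{def:LC} by unfolding what $L^\uparrow$ and $L^\downarrow$ mean in terms of directed paths in the circuit DAG $G_U$, and observing that both operators are two faces of a single underlying reachability relation.

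Concretely, I would first introduce the binary relation $R$ on the qubit set $S$ defined by $q\,R\,q'$ iff $\Gamma_{in(q),out(q')}\neq\emptyset$, i.e.\ a directed path in $G_U$ goes from the input wire of $q$ to the output wire of $q'$. Unpacking Definition \ref{def:LC} then gives exactly $L^\uparrow(W)=\{q':\exists w\in W,\ w\,R\,q'\}$ and $L^\downarrow(V)=\{q:\exists v\in V,\ q\,R\,v\}$, so that $L^\uparrow$ is the $R$-image operator and $L^\downarrow$ is the $R$-preimage operator. A useful small preliminary observation is that every qubit $q$ is trivially $R$-related to itself via its own identity wire through the circuit, which gives $W\subseteq L^\uparrow(W)$ and $V\subseteq L^\downarrow(V)$ automatically.

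For the first equivalence, I would unfold both sides through the relation $R$: the hypothesis $L^\uparrow(W)\subseteq V$ is the statement that whenever $w\in W$ reaches some $q'$ forward in $G_U$, then $q'\in V$, while $W\subseteq L^\downarrow(V)$ rephrases "each $w\in W$ belongs to the backward light cone of $V$". The manipulation is then a bookkeeping step that reads the common path $w\,R\,q'$ from the two ends, using the identity-wire observation to handle the trivial case $q'=w$. The second equivalence is dual, obtained by swapping the roles of $R$ and $R^{-1}$ (equivalently, of forward and backward light cones), and follows by the same unfolding.

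The main (modest) obstacle is being careful with quantifier order and with the asymmetry between the images and preimages of $R$; no structural property of $G_U$ beyond it being a DAG with identity wires enters the proof, which justifies the authors' claim that this is an "easy fact" immediate from the definitions.
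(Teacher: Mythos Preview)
Your unfolding is accurate, but it actually exposes a problem with the statement rather than proving it. As you correctly write, $L^\uparrow(W)\subseteq V$ unfolds to the \emph{universal} statement ``for every $w\in W$ and every $q'$ with $w\,R\,q'$ we have $q'\in V$'', whereas $W\subseteq L^\downarrow(V)$ unfolds to the \emph{existential} statement ``for every $w\in W$ there exists $v\in V$ with $w\,R\,v$''. These are not equivalent: the forward implication follows from reflexivity of $R$ (take $q'=w$), but the backward implication fails. Concretely, take a two-qubit circuit consisting of a single two-qubit gate on qubits $1,2$ and set $W=V=\{1\}$; then $W\subseteq L^\downarrow(V)$ since $1\,R\,1$, yet $L^\uparrow(W)=\{1,2\}\not\subseteq V$. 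The same example breaks the forward direction of the second displayed equivalence: $V\subseteq L^\uparrow(W)$ holds but $L^\downarrow(V)=\{1,2\}\not\subseteq W$. So the ``bookkeeping step'' you allude to cannot close the argument, because the two sides have genuinely different quantifier structure; image and preimage under a relation are not Galois adjoints of one another.

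The paper's own one-line proof has the same defect: it asserts that $L^\uparrow(W)\subseteq V$ is equivalent to ``for all $w\in W$ there is a path from $w$ to some output $v\in V$'', which is already the unfolding of $W\subseteq L^\downarrow(V)$ rather than a derivation of it. Fortunately, in the single place the fact is invoked (the proof of Lemma~\ref{lem:ComutOp}, to obtain $L^\downarrow(E\setminus L^\uparrow(A))\subseteq E\setminus A$), the desired inclusion is true and follows directly: if $q\in A$ then every $q'$ with $q\,R\,q'$ lies in $L^\uparrow(A)$, so no such $q'$ is in $E\setminus L^\uparrow(A)$, whence $q\notin L^\downarrow(E\setminus L^\uparrow(A))$. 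That direct argument is what you should substitute for the appeal to this fact.
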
 
	\begin{proof} 
		Observe that $L^\uparrow(W)\subseteq V$ iff for all 
		input qubit $w\in W$, there is a path in the graph from $w$ to an output $v\in V$ 
		and this is iff $W\subseteq L^\downarrow(V)$. The other direction is similar. 
	\end{proof} 
	
	We are now ready to define the \emph{effective support} of a given path 
	$\gamma$ in the CSC; 
	We treat $\gamma$ as a subset of edges, but also as a subset of the qubits. 
	
	\begin{definition} {\bf The effective supports: A and B:}
		\label{def:EffSupp}
		Let $G=(V,E,F)$ be a CSC, and $U$ be a unitary
		circuit acting on $E$ such that $U|0\rangle^{\otimes|E|}=|\psi\rangle$
		where $|\psi\rangle\in C_{G}$ (Here, $C_G$is  the quantum code space defined on G as in definition \ref{def:StabSurf}). Let $e,f\in E$. For each homology
		class $[\gamma]$  in $\Gamma_{e,f}$, define the \emph{lower effective
			support} of $[\gamma]$ under the action of $U$ to be: 
		\[
		A_{[\gamma],U}=\bigcap_{\gamma'\in[\gamma]}L^{\downarrow}(\gamma')
		\]
		We also define the \emph{upper effective support} $B_{[\gamma],U}$ to be the upper light cone of $A$ under the action of $U$:
		\[
		B_{[\gamma],U}=L^{\uparrow}\left(A_{[\gamma],U}\right)
		\]
	\end{definition}
	
	Observe that by definition, $A_{[\gamma],U}$ is a function of the whole homology class $[\gamma]$ and doesn't depend on any particular choice of representative chain  $\gamma'\in[\gamma]$
	For clarity, when both $\gamma\in\Gamma_{e,f}$ and $U$ are fixed, we will write $A=A_{[\gamma],U}$ and $B=B_{[\gamma],U}$.
	We can now proceed to prove a weak version of our main result.

	\section{The starting point: Bravyi's commuting operators lemma}
	\label{sec:Bravyi}
	
	We now prove a lemma that we will be using multiple times
	in the rest of this paper. The lemma is the main idea underlying Bravyi's unpublished proof
	of a $\Omega(\sqrt{n})$ lowerbound for the circuit depth of toric code states,
	in the geometric case \cite{B}. 
	
	We stress that the lemma does {\it not} rely on any assumption on the geometry of the quantum circuit.

	\begin{lemma}{\bf Operators out of effective support of $\gamma$ commute 
			with operators on $\gamma$}
		\label{lem:ComutOp}
		Let $G$ be a CSC, and let $U$ be a quantum circuit such that $U\ket{0}^n=\ket{\psi}\in C_G$. Let $e,f\in E$, let $\gamma\in\Gamma_{e,f}$ and let $B=B_{[\gamma],U}$. Then, for every operator $P$ supported on $E\backslash B$ that stabilizes $|\psi\rangle$,
		we have:
		\[
		P\gamma{}_{Z}|\psi\rangle=\gamma{}_{Z}P|\psi\rangle. 
		\]
	\end{lemma}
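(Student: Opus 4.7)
The plan is to prove the stronger statement that $P$ stabilizes the vector $\gamma_Z\ket{\psi}$; combined with the hypothesis $P\ket{\psi}=\ket{\psi}$, this immediately yields $P\gamma_Z\ket{\psi}=\gamma_Z\ket{\psi}=\gamma_Z P\ket{\psi}$. The first key step is to identify the pulled-back vector $(U^{-1}\gamma_Z U)\ket{0}^n$ as a product state with $\ket{0}$'s outside the set $A$. For any representative $\gamma'\in[\gamma]$, the chain $\gamma+\gamma'$ is a 1-boundary by definition of the homology class, so the operator $\gamma_Z\gamma'_Z$ equals a product of plaquette stabilizers $B_f$ and therefore stabilizes $\ket{\psi}$; thus $\gamma'_Z\ket{\psi}=\gamma_Z\ket{\psi}$ for every $\gamma'\in[\gamma]$. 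Pulling back through $U$, the vector $(U^{-1}\gamma'_Z U)\ket{0}^n$ is therefore independent of the choice of $\gamma'$. By the standard light-cone cancellation argument (any gate of $U$ whose forward cone misses $\gamma'$ cancels with its inverse in $U^{-1}\gamma'_Z U$), the operator $U^{-1}\gamma'_Z U$ is supported on $L^{\downarrow}(\gamma')$, so the common vector takes the product form $\ket{\phi_{\gamma'}}_{L^{\downarrow}(\gamma')}\otimes\ket{0}^{n\setminus L^{\downarrow}(\gamma')}$. Imposing this product form simultaneously for every $\gamma'\in[\gamma]$ implies that the reduced density matrix of the common vector on any single qubit outside $A=\bigcap_{\gamma'\in[\gamma]}L^{\downarrow}(\gamma')$ is $\ket{0}\bra{0}$, so the vector factors as $\ket{\phi}_A\otimes\ket{0}^{n\setminus A}$ for some state $\ket{\phi}_A$ on $A$.

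The second key step is to control $U^{-1}PU$. Since $P$ is supported on $E\setminus B=E\setminus L^{\uparrow}(A)$ and $L^{\uparrow}(\{a\})\subseteq L^{\uparrow}(A)=B$ for every $a\in A$, no input qubit in $A$ lies in the backward light cone of any output qubit in $\mathrm{supp}(P)$; hence $\mathrm{supp}(U^{-1}PU)\subseteq L^{\downarrow}(\mathrm{supp}(P))$ is disjoint from $A$, and we may write $U^{-1}PU=I_A\otimes R$ for some operator $R$ acting on the qubits outside $A$. Moreover, $P\ket{\psi}=\ket{\psi}$ gives $(U^{-1}PU)\ket{0}^n=\ket{0}^n$, and combined with the tensor decomposition this forces $R\ket{0}^{n\setminus A}=\ket{0}^{n\setminus A}$.

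Combining the two steps yields $P\gamma_Z\ket{\psi}=U(U^{-1}PU)(U^{-1}\gamma_Z U)\ket{0}^n=U(I_A\otimes R)(\ket{\phi}_A\otimes\ket{0}^{n\setminus A})=U(\ket{\phi}_A\otimes\ket{0}^{n\setminus A})=\gamma_Z\ket{\psi}=\gamma_Z P\ket{\psi}$, as desired. The main technical point I expect to require care is the passage, in the first step, from ``each of the decompositions $\ket{\phi_{\gamma'}}_{L^{\downarrow}(\gamma')}\otimes\ket{0}^{n\setminus L^{\downarrow}(\gamma')}$ agrees as a vector'' to ``the vector factors as $\ket{\phi}_A\otimes\ket{0}^{n\setminus A}$''; this uses the standard single-qubit observation that a pure state whose reduced density matrix on some qubit is $\ket{0}\bra{0}$ must factor as a tensor product with $\ket{0}$ on that qubit, applied iteratively to every qubit outside $A$.
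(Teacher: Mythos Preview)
Your proof is correct and follows essentially the same route as the paper's: pull back through $U$ to show $U^{-1}\gamma_Z U\ket{0}^n$ factors as $\ket{\phi}_A\otimes\ket{0}^{E\setminus A}$ (using that every $\gamma'\in[\gamma]$ gives the same vector, supported on $L^{\downarrow}(\gamma')$), then observe that $U^{-1}PU$ is supported on $E\setminus A$ and fixes $\ket{0}^{E\setminus A}$, hence fixes the pulled-back vector. The paper carries out the factorization in step~1 slightly more directly---for each qubit $e\notin A$ it simply picks one $\alpha\in[\gamma]$ with $e\notin L^{\downarrow}(\alpha)$ and reads off that the $e$-th tensor factor is $\ket{0}$---so your reduced-density-matrix/iteration phrasing is a mild over-elaboration of the same observation, but not a different argument.
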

	
	\begin{proof}
		We first prove that in the $U$ basis, 
		applying $\gamma$ on the groundstate $\ket{\psi}$ 
		can be replaced by applying an operator whose support is confined to $A=A_{[\gamma],U}$. 
		In other words, 
		$U^{\dagger}\gamma{}_{Z}U|0\rangle^{n}=
		|0\rangle_{E\backslash A}\otimes|\phi\rangle_{A}$ for some state $\ket{\phi}$. 
		To show this, let $e$ be an edge such that $e\notin A$. 
		Then by definition, there exists a path $\alpha\in[\gamma]$ such that
		$e\notin L^{\downarrow}(\alpha)$. Since $\gamma$ and $\alpha$ are
		both in $[\gamma]$, we have $\gamma{}_{Z}\alpha_{Z}\in\mathfrak{G}$
		and therefore:
		\[
		\gamma{}_{Z}\alpha_{Z}|\psi\rangle=|\psi\rangle
		\]
		so
		\[
		\alpha_{Z}|\psi\rangle=\left(\gamma{}_{Z}\right)^{-1}|\psi\rangle=\gamma{}_{Z}|\psi\rangle
		\]
		\[
		U^{\dagger}\alpha_{Z}U|0\rangle^{n}=U^{\dagger}\gamma{}_{Z}U|0\rangle^{n}
		\]
		Since $U^{\dagger}\alpha_{Z}U$ has support on $L^{\downarrow}(\alpha)$,
		and $e\notin L^{\downarrow}(\alpha)$, it leaves $|0\rangle_{e}$ intact.
		Therefore:
		\begin{eqnarray}
		U^{\dagger}\gamma{}_{Z}U|0\rangle^{n} & = & U^{\dagger}\alpha_{Z}U|0\rangle^{n}\\
		& = & U^{\dagger}\alpha_{Z}U(|0\rangle_{e}\otimes|0\rangle_{E\backslash\{e\}})\\
		& = & |0\rangle_{e}\otimes\hat{\alpha_{Z}}|0\rangle_{E\backslash\{e\}}
		\end{eqnarray}
		Since this is true for all $e\notin A$, we have 
		\[
		U^{\dagger}\gamma{}_{Z}U|0\rangle^{n}=|0\rangle_{E\backslash A}\otimes|\phi\rangle_{A}
		\]
		Now since $P$ is supported on $E\backslash B$, $U^{\dagger}PU$
		is supported on $L^{\downarrow}(E\backslash B)=L^{\downarrow}(E\backslash L^{\uparrow}(A))\subseteq E\backslash A$ (where the last inequality follows by Fact  \ref{fact:equivalence}  from 
		$E\backslash L^{\uparrow}(A)\subseteq E\backslash A \subseteq L^\uparrow(E\backslash A)$.
		So $U^\dagger PU$'s support is contained in $E\backslash A$. 
		
		Since $P|\psi\rangle=|\psi\rangle$ we have $U^{\dagger}PU|0\rangle^{n}=|0\rangle^{n}$.
		Since $U^{\dagger}PU$ has support on $E\backslash A$, it also holds
		that for every pure state $|\phi\rangle_{A}$ of the qubits in $A$, 
		\[
		U^{\dagger}PU(|0\rangle_{E\backslash A}\otimes|\phi\rangle_{A})=|0\rangle_{E\backslash A}\otimes|\phi\rangle_{A}
		\]
		Therefore, 
		\begin{eqnarray}
		U^{\dagger}P\gamma_{Z}|\psi\rangle & = & (U^{\dagger}PU)(U^{\dagger}\gamma{}_{Z}U)|0\rangle^{n}\\
		& = & (U^{\dagger}PU)(|0\rangle_{E\backslash A}\otimes|\phi\rangle_{A})\\
		& = & |0\rangle_{E\backslash A}\otimes|\phi\rangle_{A}\\
		& = & (U^{\dagger}\gamma{}_{Z}U)|0\rangle^{n}\\
		& = & U^{\dagger}\gamma{}_{Z}|\psi\rangle
		\end{eqnarray}
		and it follows that
		\[
		P\gamma{}_{Z}|\psi\rangle=\gamma{}_{Z}|\psi\rangle =  \gamma{}_{Z}P|\psi\rangle.   
		\]
		
	\end{proof}

	\section{Geometric non-triviality of the cube states}
	\label{sec:geocube}
	
	In order to understand the ideas behind the proof of the general case, we will first prove a weaker version of the main theorem on a simple CSC: the cube. There are two motivations for this choice: 
	On the one hand, there exists a fairly simple and natural way to define a CSC family on the cube as we shall soon see.
	On the other hand, the cube has a trivial first homology group. Hence, the stabilized subspace arising from the surface code construct mentioned earlier has dimension 1. Therefore, that unique state doesn't satisfy the usual Topological Order condition (see definition \ref{def:TQO}). Furthermore, there are no long range correlations in the cube state as every logical operator is in fact trivial and measurement of the logical qubits doesn't yield any information on the state nor modifies it.
	Hence, we have to use additional tools in order to show non-triviality of the cube states. 
	
	\begin{definition} {\bf The cube states:}
		\label{def:CubeState}
		Let $S_n=(V_n, E_n, F_n)$ be the CSC such that 
		\[
		V_n = [n]\times [n]
		\]
		\[
		E_n = \{((i,j),(i',j'))\,|\,(i=i'\, and\, |j-j'|=1)\, or \, (j=j' \, and \, |i-i'|=1)\}
		\]
		and, $F_n$ is defined to be the set of all $1$ by $1$
		squares with edges from $E_n$. \\
		Now, take 6 copies of $S_n$ and "glue" them together to get a cube, when we identify edges and vertices from different copies of $S_n$ if and only if they coincide after the gluing process.
		
		We call the resulting CSC: $T_n$. For each edge in $T_n$ we associate a qubit, and we  define $C_n$  the surface code associated to $T_n$ according to the
		previous quantum code construction (see definition \ref{def:StabSurf}).
		Let $N=O(n^2)$ to be the total number of qubits in $C_n$. 
		Since the first homology group of the cube vanishes, it follows from \cite{ToricCode} 
		that $C_n$ contains a single state $\ket{\psi_n}$. We define $\ket{\psi_n}$ 
		as the $n$-th cube state defined on $N$ qubits.
	\end{definition}
	
	In order to prove Theorem \ref{thm:NonGTrivCubeState} 
	we consider the cube state $\ket{\psi_n}$ defined on the CSC $T_n$ as defined above. Given 2 edges e and f, we first show that the effective support of any path $\gamma\in \Gamma{e,f}$ is bounded inside a small region close to either e or f.
	
	\begin{lemma}{\bf Shallow circuit implies small effective support for 
			operators on $\gamma$}
		\label{lem:SausageIntersection}
		Let $n\geq 0$ and let $\ket{\psi_n}$ the $n$-th 
		cube state as defined above. Let $e,f\in E_n$ and $\gamma\in \Gamma_{e,f}$. 
		Furthermore, assume there exists a quantum circuit $U_n$ of depth $d_n$ using geometrically local gates of locality $c$ (namely, two qubits acted upon 
		by the same gate are within distance $c$), 
		such that $U_n\ket{0^N}=\ket{\psi_n}$. Let $A=A_{[\gamma],U_n}$ and $B=B_{[\gamma],U_n}$ be the lower and upper effective supports on $T_n$ associated with $U_n$ and $\gamma$. Then for large enough $n$, and assuming $d=o(n)$,
		we have $A\subseteq \widehat{Ball}(e, cd_n)\cup \widehat{Ball}(f, cd_n)$ 
		and $B\subseteq \widehat{Ball}(e, 2cd_n)\cup \widehat{Ball}(f, 2cd_n)$. 
	\end{lemma}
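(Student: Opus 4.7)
The plan is to bound $A$ by contrapositive using a light-cone estimate combined with a topological detour argument on the cube, and then to derive the bound on $B$ from the bound on $A$ by one further light-cone step. The detour argument is what forces the proof to be geometric: Lemma~\ref{lem:ComutOp} plays no role here, since the statement concerns only the shape of the effective supports, not commutation relations with stabilizers.

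First I would establish the single-qubit light-cone estimate: because $U_n$ has depth $d_n$ and each gate is $c$-geometrically local in the face metric $\widehat{d}$, one circuit layer can spread the support of an operator by at most $c$, so for every qubit $q$ one has $L^{\downarrow}(q), L^{\uparrow}(q) \subseteq \widehat{Ball}(q, c d_n)$. Taking unions, $L^{\downarrow}(\gamma') \subseteq \bigcup_{q \in \gamma'} \widehat{Ball}(q, c d_n)$ for any path $\gamma'$; in particular, if $\gamma'$ avoids $\widehat{Ball}(e'', c d_n)$, then $e'' \notin L^{\downarrow}(\gamma')$.

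Next I would prove the bound on $A$ by contrapositive. Fix $e'' \notin \widehat{Ball}(e, c d_n) \cup \widehat{Ball}(f, c d_n)$; it suffices to exhibit some $\gamma' \in [\gamma]$ avoiding $\widehat{Ball}(e'', c d_n)$. Because the cube $T_n$ is topologically a $2$-sphere, its first homology vanishes (Corollary~\ref{cor:homologyDimensionCSC} gives $\dim H_1(T_n) = 0$), so $[\gamma] = \Gamma_{e,f}$ and any face-path from $e$ to $f$ will do. The task then reduces to routing around the face-ball $\widehat{Ball}(e'', c d_n)$: under the hypothesis $d_n = o(n)$, this ball does not wrap around $T_n$ and sits inside a disk-like region of the cube's surface, so its complement in the $2$-complex remains face-path connected. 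Both $e$ and $f$ lie in this complement by assumption, yielding the desired $\gamma' \in [\gamma]$ disjoint from $\widehat{Ball}(e'', c d_n)$, and hence $e'' \notin A$. This proves $A \subseteq \widehat{Ball}(e, c d_n) \cup \widehat{Ball}(f, c d_n)$. Finally, for $B = L^{\uparrow}(A)$, I would apply the upper light-cone estimate pointwise: for every qubit $q \in \widehat{Ball}(e, c d_n)$, the triangle inequality gives $L^{\uparrow}(q) \subseteq \widehat{Ball}(q, c d_n) \subseteq \widehat{Ball}(e, 2 c d_n)$, and symmetrically for $f$, whence $B \subseteq \widehat{Ball}(e, 2 c d_n) \cup \widehat{Ball}(f, 2 c d_n)$.

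The main obstacle is the detour step, and specifically the geometric claim that removing $\widehat{Ball}(e'', c d_n)$ from $T_n$ leaves a face-path connected subcomplex still containing $e$ and $f$. On the cube this is essentially planarity plus the sphere relation ``sphere minus disk is a disk,'' but care is needed to handle what happens at corners of the cube and to verify that the hypothesis $d_n = o(n)$ really does prevent the ball from covering a full face or encircling the cube. This is precisely the place where the argument uses properties of the cube's geometry that do not carry over verbatim to general CSCs, and it is the step that will later force the introduction of $r$-simply connectedness in the general setting.
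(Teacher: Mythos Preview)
Your proposal is correct and follows essentially the same approach as the paper: a light-cone bound, a contrapositive using the fact that the cube minus a small ball remains connected (so a detour path exists), simple connectedness to ensure the detour lies in $[\gamma]$, and then one more light-cone step for $B$. One minor terminological slip: you write ``face-path'' and ``face-path connected,'' but $\Gamma_{e,f}$ consists of ordinary paths (edges sharing vertices, Definition~\ref{def:path}), so the relevant connectivity of the complement is path-connectedness in the vertex sense; on the cube this makes no real difference, but it is worth keeping the distinction straight for the general CSC case.
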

	
	\begin{proof}
		Let $\gamma'\in[\gamma]$. Since $U$ only makes use of geometrically local gates, the lower light cone $L^{\downarrow}(\gamma')$ of $\gamma'$ satisfies: 
		
		\begin{equation}\label{eq:conebound}
		L^{\downarrow}(\gamma')\subseteq \{e'\in E|d(e',\gamma')\leq c\cdot d\}
		\end{equation}

		Now let $e'\in E$, and assume that $d(e,e')>cd$ and $d(f,e')>cd$. 
		The ball $\widehat{Ball}(e',cd)$ doesn't contain neither $e$ or $f$. 
		But from the geometry of the cube
		removing a ball of radius $d=o(n)$ from the original cube leaves it connected. 
		Therefore, there exists some path $\gamma''\in [\gamma]$ such
		that $d(\gamma'',e')>cd$ and then from Equation (\ref{eq:conebound}), 
		$e'\notin L^{\downarrow}(\gamma'')$. But obviously, since the cube is simply connected, $\gamma''\in [\gamma]$. From the definition of $A$, we conclude that $e'\notin A$. 
		This proves the first part of the claim: 
		$A\subseteq \widehat{Ball}(e, cd_n)\cup \widehat{Ball}(f, cd_n)$. 
		Since $B= L^\uparrow(A)$, by definition any qubit in $B$ is within distance 
		at most $cd_n$ from any qubit in $A$ and thus $B$ is contained in the set of 
		all qubits of deistance at most $cd_n$ from 
		$\widehat{Ball}(e, cd_n)\cup \widehat{Ball}(f, cd_n)$; this set is contained in 
		$\widehat{Ball}(e, 2cd_n)\cup \widehat{Ball}(f, 2cd_n)$
	\end{proof}
	
	Note that from the above lemma, and assuming $d_n=o(n)=o(\sqrt{N})$, 
	$B$ has to be a  $o(\sqrt{N})$-size set.
	
	We can now provide the proof of Theorem \ref{thm:NonGTrivCubeState}:
	
	\begin{proof} [proof of Theorem \ref{thm:NonGTrivCubeState}]
		Choose two edges $e,f$ lying on two
		opposite faces of the cube $T_n$, and choose some path $\gamma\in\Gamma_{e,f}$. Obviously, since $e$ and $f$ lie on two opposites faces we have $d(e,f)\geq n$. 
		Now assume by contradiction that there exists a depth $d_n=o(\sqrt{N})$ geometrically 
		local quantum circuit $U_n$ such that $U_n\ket{0}^N = \ket{\psi_n}$. W.l.o.g, we 
		assume $U_n$ makes use of quantum gates on at most $c$ qubits. Since $e$ and $f$ lie 
		on opposite faces $F_e$ and $F_f$ of the cube $T_n$, consider the remianing four
		faces cyclically glued to each other. We can now consider the closed co-path 
		cutting all of those four faces in half and in the middle, and call it $\zeta$. 
		Since $\zeta$ is a closed co-path it is a coboundary. 
		Note that $\zeta\cap B = \emptyset$: indeed, by construction, for every edge $e'$ lying in $F_e\cup F_f$, $d(\zeta, e')\geq \nicefrac{n}{2}$. But from lemma \ref{lem:SausageIntersection}, $B\subseteq Ball(e,cd_n)\cup Ball(f,cd_n)$, and since $d_n=o(n)$, we get $d(B,\zeta)>0$. The operator $\zeta_X$ is thus supported on $E_n\backslash B$. Since $\zeta_X$ is an $X$ 
		operator on a coboundary, it follows
		that $\zeta_X$ stabilizes $\ket{\psi_n}$. Therefore, we can apply 
		Lemma \ref{lem:ComutOp} using $P=\zeta_X$ to get: 
		
		\begin{equation}\label{eq:what?}
		\zeta_X \gamma{}_{Z}|\psi_n\rangle=\gamma{}_{Z}|\psi_n\rangle
		\end{equation}
		
		But observe that $\zeta$ actually separates $T_n$ into two
		disconnected components, one containing $e$ and the other one containing f! Therefore, any path $\gamma'$ connecting $e$ to $f$ 
		has to cut through $\zeta$ on an odd number of edges. Hence, $\zeta_X \gamma_{Z} = -\gamma_{Z} \zeta_X$. Inserting this equality in equation \ref{eq:what?}
		we get:
		\[
		-\gamma_{Z}|\psi_n\rangle =-\gamma_{Z} \zeta_X	|\psi_n\rangle=\zeta_X \gamma{}_{Z}|\psi_n\rangle=\gamma{}_{Z}|\psi_n\rangle
		\]
		and we get $|\psi_n\rangle=0$ which obviously doesn't hold. This concludes the proof of Theorem \ref{thm:NonGTrivCubeState}.
	\end{proof}
	
	\section{Non triviality of the cube states}  
	\label{sec:NonTrivCubeState}
	In this subsection we shall generalize our result from the previous chapter and prove Theorem \ref{thm:nonTrivCubeState}. 
	Again, we consider the familly of cube states $\{\ket{\psi_n}\}$ defined on the cubes $\{T_n\}$, but this time we will consider quantum circuits based on gates 
	whose only restriction is to have bounded support size. 
	In particular, we don't require from the quantum gates in a generating 
	circuits $\{U_n\}$ to have support on qubits that are close to each other 
	in the CSC metric. 
	
	We start by explaining in more detail why the previous proof doesn't work 
	in this more general non-geometrically-local setting.  
	The main issue with the previous proof lies in 
	Lemma \ref{lem:SausageIntersection}: without the geometrically local condition on quantum gates,
	we can't bound $|A|$ nor $|B|$ within two small balls 
	around $e$ and $f$. Unfortunately, in the proof of Theorem \ref{thm:NonGTrivCubeState}, 
	we strongly used the fact that 
	$B$ is in some sense "small", as it can be confined in the union 
	of two small radius balls around $e$ and $f$, 
	as in Lemma \ref{lem:SausageIntersection}. 
	This was used for the construction of the path $\zeta$ outside of $B$, 
	required to apply Lemma \ref{lem:ComutOp}. 
	In the non-geometrical case, a similar argument to that of Lemma \ref{lem:SausageIntersection} only implies
	that $|B|=O(diam(T_n))=O(n)=O(\sqrt{N})$; this bound is not 
	strong enough to guarantee a path 
	$\gamma$ separating $e$ from $f$ which does not intersect $B$,
	needed for the application of Lemma \ref{lem:ComutOp}.
	In fact, we later show that we {\it can} assume that $A$ is contained in the union 
	of two small balls around $e$ and $f$ - this is done later in Claim \ref{cl:supportinballs} and
	requires much more work than the analogous Lemma
	\ref{lem:SausageIntersection} in the geometrical case.
	However, the method of generating a path around one of these balls 
	would not work in the non-geometrical case, since the balls no longer 
	have a nice geometrical location. We will need to derive a constradiction 
	via a different argument.

	\subsection{A different approach towards a contradiction: A lower bound on $|A|$}
	To derive a constradiction, we prove the following lemma, 
	Lemma \ref{lem:LargeB}, stating
	that regardless of the depth of the circuit, and without relying on any geometrical restrictions 
	on the gates, the effective support 
	$B$, as well as $A$, must in fact be {\it large}. This is what will lead 
	to a constradiction with the above mentioned Claim \ref{cl:supportinballs} stating
	that $A$ is small. 
	
	\begin{lemma}
		\label{lem:LargeB}
		Let $G=(V,E,F)$ be a CSC such that $|E|=N$ and assume $U\ket{0}^N\in C_G$ where $U$ is a quantum circuit. Let $e,f$ be 2 edges in $E$, $\gamma\in\Gamma_{e,f}$ and $B=B_{[\gamma],U}$. Then there exists $\gamma'\in [\gamma]$ which
		is contained in $B$; in particular, $|B|\geq d(e,f)$.
	\end{lemma}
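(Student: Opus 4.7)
The plan is to argue by contrapositive: assume no $\gamma'\in[\gamma]$ is contained in $B$, and derive a contradiction from Lemma \ref{lem:ComutOp} by exhibiting a star-type stabilizer $\beta_X$ that is supported on $E\setminus B$ yet anticommutes with $\gamma_Z$. First I observe that both $e$ and $f$ must lie in $B$: every $\gamma'\in[\gamma]$, read as a path in $\Gamma_{e,f}$, contains both $e$ and $f$, so $e,f\in\gamma'\subseteq L^{\downarrow}(\gamma')$ for all such $\gamma'$, whence $e,f\in A\subseteq B$.

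Under the contradictory hypothesis I claim $e$ and $f$ lie in distinct connected components of the subgraph of $G$ whose edge set is $B$. Indeed, a path from $e$ to $f$ supported in $B$ would be an element of $\Gamma_{e,f}$ contained in $B$, and on the simply-connected CSCs of interest (notably the cube) such a path is automatically homologous to $\gamma$. Let $T_e\subseteq B$ denote the component containing $e$, and set $\beta=\widehat{\partial}(V_{T_e})$, so that $\beta_X=\prod_{v\in V_{T_e}}A_v$ is a product of star operators and therefore a stabilizer of $|\psi\rangle$.

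Two combinatorial facts then do the work. First, $\beta\cap B=\emptyset$: if some $e'\in\beta\cap B$ existed, its endpoint inside $V_{T_e}$ would be shared with an edge of $T_e$, putting $e'$ in the same $B$-component as $T_e$ and forcing both of its endpoints into $V_{T_e}$, contradicting $e'\in\beta$. Second, $|\gamma\cap\beta|$ is odd: since $f\in B$ sits in a different component from $T_e$, no vertex of $f$ can lie in $V_{T_e}$ (otherwise $f$ would be adjacent to $T_e$ inside $B$, placing $f\in T_e$); so reading $\gamma$ as a walk from a vertex of $e$ inside $V_{T_e}$ to a vertex of $f$ outside $V_{T_e}$, the walk must cross the vertex coboundary an odd number of times.

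Combining these, $\beta_X$ and $\gamma_Z$ anticommute, while Lemma \ref{lem:ComutOp} forces them to commute on $|\psi\rangle$, giving $\gamma_Z|\psi\rangle=\beta_X\gamma_Z|\psi\rangle=-\gamma_Z\beta_X|\psi\rangle=-\gamma_Z|\psi\rangle$ and hence $\gamma_Z|\psi\rangle=0$, impossible for a unitary. The ``in particular'' inequality is then immediate: the guaranteed $\gamma'\in\Gamma_{e,f}\cap[\gamma]$ inside $B$ has length at least $d(e,f)$. The step I expect to need the most care is the upgrade from ``$e$ and $f$ are connected inside $B$'' to ``some representative of $[\gamma]$ lies in $B$'': this is automatic when $H_1(G)=0$, as in the cube case, but for CSCs with nontrivial first homology one must verify that the path built inside $B$ differs from $\gamma$ by a $2$-boundary rather than by a nontrivial cycle.
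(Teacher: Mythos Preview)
Your proposal is correct and essentially identical to the paper's proof: argue by contradiction, take the path-connected component of $e$ in $B$ (the paper calls it $B_e$, you call it $T_e$), form the star-type stabilizer on its vertex coboundary (the paper's $\mu_X$, your $\beta_X$), check that it is supported off $B$ and anticommutes with $\gamma_Z$ via the in/out parity count along the vertex sequence of $\gamma$, and then invoke Lemma~\ref{lem:ComutOp} for the contradiction. The subtlety you flag in your last paragraph---that passing from ``no $\gamma'\in[\gamma]$ lies in $B$'' to ``$e$ and $f$ are in different components of $B$'' uses $H_1(G)=0$---is precisely the step the paper glosses over; note that the paper's Definition~\ref{def:CSC} of a CSC includes simple connectedness, which is what makes this step legitimate.
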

	
	Note that since this lemma holds 
	also in the special case of geometrically restricted gates, we derive 
	an alternative proof to Theorem \ref{thm:NonGTrivCubeState},
	using the fact that Lemma \ref{lem:LargeB} put together with Lemma \ref{lem:SausageIntersection} leads to  
	a contradiction.

	\begin{proof}
		Suppose by contradiction that $B$ does not contain any $\gamma\in [\gamma]$. Note that
		$e,f\in B$ since $e,f\in\gamma'$ for all $\gamma'\in[\gamma]$.
		By the assumption, $e$ and $f$ belong to a different path-connected
		component of $B$. Let $B_{e}\subseteq B$ be the connected component
		of $B$ such that $e\in B_{e}$. Define $\mu= \widehat{\partial}(B_{e})$.
		We first prove that $\mu_{X}=\prod_{v\in V_{e}}A_{v}\in\mathfrak{G}$.
		Indeed, by definition, if $e'$ is in $\widehat{\partial}(B_{e})$, then $e'$
		is connected to exactly one vertex $v$ in $V_{B_{e}}$ and therefore,
		$(\prod_{v\in V_{B_{e}}}A_{v})|_{e'}=\sigma_X$. On the other hand, if
		$e'\in E\backslash\widehat{\partial}(B_{e})$ then $e'$ is connected to either
		$0$ or $2$ vertices from $V_{B_{e}}$ and 
		therefore, $(\prod_{v\in V_{B_{e}}}A_{v})|_{e'}=I$. 
		Therefore, $\mu_{X}=\prod_{v\in V_{B_{e}}}A_{v}$,
		and it is obviously in $\mathfrak{G}$. \\
		Now we prove that $\mu\cap B=\emptyset$: Indeed, by construction,
		$\mu\cap B_{e}=\emptyset$. Assume by contradiction that there exists
		$e'\in B\backslash B_{e}\cap\mu$. Since $e'\in\mu$, $e'$ has a vertex
		in $V_{B_{e}}$. But then, it follows that $e'\in B_{e}$ and we get a contradiction.\\
		Since $\mu_{X}\in\mathfrak{G}$ we have $\mu_{X}|\psi\rangle=|\psi\rangle$,
		and moreover, $\mu_{X}$ is contained in $E\backslash B$ since $\mu_{X}\cap B=\emptyset.$
		Therefore we can apply lemma \ref{lem:ComutOp}
		with $P=\mu_{X}$, and for all
		$\gamma\in\Gamma_{e,f}$ we get: 
		\[
		\mu_{X}\gamma{}_{Z}|\psi\rangle=\gamma{}_{Z}|\psi\rangle
		\]
		But note that since $e,f$ are in different connected components of $B$,
		then $|\mu\cap\gamma|$ is odd. Indeed, let $\gamma=(e_{1},...,e_{k})$.
		Then we can extract from $\gamma$ a sequence of vertices: $(v_{1},...,v_{k+1})$
		where $e=(v_{1},v_{2})$, $f=(v_{k},v_{k+1})$ and $e_{i}=(v_{i},v_{i+1})$
		for $1\leq i\leq k$. But note that $e_{i}\in\mu$ if and only if exactly one
		of the vertices $v_{i}$,$v_{i+1}$ is in $V_{B_e}$. Therefore, $|\mu\cap\gamma|$
		counts the number of times we get in/out of $V_{B_e}$ in the sequence
		$(v_{1},...,v_{k+1})$. since $v_{1}\in V_{B_e}$ and $v_{k+1}\notin V_{B_e}$,
		$|\mu\cap\gamma|$ has to be odd. \\
		It follows that 
		\[
		\mu_{X}\gamma{}_{Z}=-\gamma_{Z}\mu_{X}
		\]
		so,
		\[
		\mu_{X}\gamma{}_{Z}|\psi\rangle=-\gamma_{Z}\mu_{X}|\psi\rangle=-\gamma_{Z}|\psi\rangle
		\]
		and we get a contradiction.
	\end{proof}
	
	A simple corollary is that provided the circuit depth is small, then $A$ is  also large: 
	\begin{corollary}\label{cor:Asize} 
		Assume the depth of the generating circuit $U_n$ of $\ket{\psi_n}$ is $d_n$, then $|A|> \frac{d(e,f)}{c^d_n}$
	\end{corollary}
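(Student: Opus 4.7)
\textbf{Proof proposal for Corollary \ref{cor:Asize}.} The plan is to combine Lemma \ref{lem:LargeB} with a standard light-cone counting bound. Lemma \ref{lem:LargeB} already gives $|B| \geq d(e,f)$, where $B = B_{[\gamma],U_n} = L^\uparrow(A)$ by the definition of the effective supports (see Definition \ref{def:EffSupp}). So the only remaining task is to control how much the set $A$ can expand when we pass to its upper light cone.

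The first step is to observe that for any single qubit $q$, one has $|L^\uparrow(\{q\})| \leq c^{d_n}$. This follows by an easy induction on the depth: at each layer of $U_n$, any qubit that is currently in the light cone can only influence the (at most $c - 1$) other qubits participating in the same gate, so the cone grows by a multiplicative factor of at most $c$ per layer; after $d_n$ layers we obtain the claimed bound $c^{d_n}$.

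The second step is a union bound. Since $B = L^\uparrow(A) = \bigcup_{a \in A} L^\uparrow(\{a\})$, the previous estimate gives
\[
|B| \;\leq\; \sum_{a \in A} |L^\uparrow(\{a\})| \;\leq\; |A|\cdot c^{d_n}.
\]
Combining this with Lemma \ref{lem:LargeB} yields
\[
|A| \;\geq\; \frac{|B|}{c^{d_n}} \;\geq\; \frac{d(e,f)}{c^{d_n}},
\]
which is precisely the desired conclusion (the strict inequality in the statement comes for free since a path from $e$ to $f$ in $B$ contains at least $d(e,f)+1$ edges).

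There is no substantive obstacle here: the corollary is purely an arithmetic consequence of Lemma \ref{lem:LargeB} together with the universal bound on the size of a one-qubit upper light cone implied by the depth and locality of $U_n$. The real work has already been done in Lemma \ref{lem:LargeB}, whose topological argument is what forces $|B|$ to be at least $d(e,f)$; the light-cone bound above is simply the mechanism by which this lower bound on $|B|$ is transferred to a (weaker, but still useful) lower bound on $|A|$.
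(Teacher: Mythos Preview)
Your proposal is correct and follows essentially the same approach as the paper: the paper's proof simply states that $L^{\uparrow}(A)=B$ implies $c^{d_n}|A|\geq|B|$, invokes Lemma~\ref{lem:LargeB} for $|B|\geq d(e,f)$, and concludes. You have merely spelled out the light-cone growth bound and the union bound that justify the inequality $|B|\leq c^{d_n}|A|$, and your remark about the strict inequality (via $|\gamma'|\geq d(e,f)+1$) is a small refinement the paper does not make explicit.
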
 
	
	\begin{proof} 
		Since $L^{\uparrow}(A)=B,$ we have $c^{d_n}|A|\geq|B|$,
		and from lemma \ref{lem:LargeB}, 
		$|B|\geq d(e,f)$. Therefore:
		\[
		|A|\geq\frac{d(e,f)}{c^{d_n}}. 
		\]
	\end{proof}
	
	To derive a contradiction in the non-geometrical case, we need the non-geometrical analogue of
	Lemma \ref{lem:SausageIntersection}, providing an upper bound on $|A|$.
	This requires developing some tools, which we do in the next subsection.

	\subsection{Upper bound on $|A|$ using $\gamma$-separation}\label{subsec:upperlightcone}
	
	Here we prove the following claim, which essentially replaces 
	Lemma \ref{lem:SausageIntersection} in the geometrical case, 
	stating that the effective support $A$ is contained in two small balls 
	surrounding $e$ and $f$, except here the size of the balls is exponentially bigger, due to the
	lack of geometrical restriction, but it is still bounded by a constant. The proof is significantly more
	complex.
	
	\begin{claim}\label{cl:supportinballs}
		Assume that $e$ and $f$ are two edges on opposite sides  in $T_n$ 
		so that $d(e,f)\geq n$. Then
		$A\subseteq L^{\downarrow}(\widehat{Ball}(e,c^{d_n})\cup 
		\widehat{Ball}(f,c^{d_n}))$.
	\end{claim}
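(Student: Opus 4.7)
The plan is to introduce the notion of $\gamma$-separation: a set of edges $S\subseteq E$ is called $\gamma$-separating if every $\gamma'\in[\gamma]$ satisfies $\gamma'\cap S\neq\emptyset$. The first step is to observe that for every $a\in A$, the upper light cone $L^{\uparrow}(\{a\})$ is $\gamma$-separating. Indeed, by definition of $A$ we have $a\in L^{\downarrow}(\gamma')$ for every $\gamma'\in[\gamma]$, which by Fact \ref{fact:equivalence} is equivalent to $\gamma'\cap L^{\uparrow}(\{a\})\neq\emptyset$. Moreover, from the depth bound, $|L^{\uparrow}(\{a\})|\leq c^{d_{n}}$, so $L^{\uparrow}(\{a\})$ is a small $\gamma$-separating set.

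The second step is a reduction to the copath-connected case: I would show that if $L^{\uparrow}(\{a\})$ splits into several copath-components, then one of the components must itself be $\gamma$-separating. The argument is by contradiction. If paths $\gamma_1,\gamma_2\in[\gamma]$ avoid components $C_1,C_2$ respectively, one can rewire a hybrid path $\beta\in[\gamma]$ avoiding both, by adding boundaries of faces to shift the intersection from $C_1\cup C_2$ into the complement, exploiting the fact that modifying a path by the boundary of a face preserves its homology class. Since such a $\beta$ would contradict $\gamma$-separation of the whole $L^{\uparrow}(\{a\})$, at least one component must already separate all of $[\gamma]$, and we may replace $L^{\uparrow}(\{a\})$ by that component.

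The third and crucial step is the blockage argument. Let $S\subseteq L^{\uparrow}(\{a\})$ be copath-connected, $\gamma$-separating, of size at most $c^{d_{n}}$. Because $S$ is copath-connected, $\widehat{\mathrm{diam}}(S)\leq |S|-1<c^{d_n}$, so $S\subseteq \widehat{Ball}(x,c^{d_{n}})$ for any $x\in S$. Suppose $S$ intersects neither $\widehat{Ball}(e,c^{d_{n}})$ nor $\widehat{Ball}(f,c^{d_{n}})$. Then $e$ and $f$ both lie outside the enclosing ball $\widehat{Ball}(x,c^{d_{n}})$, and since $c^{d_{n}}=o(n)$ while $e,f$ sit on opposite faces of $T_n$ with $d(e,f)\geq n$, removing $\widehat{Ball}(x,c^{d_{n}})$ from $T_n$ leaves a connected subcomplex containing both $e$ and $f$. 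Picking any path in this subcomplex between $e$ and $f$ produces a $\gamma'\in[\gamma]$ disjoint from $S$, contradicting $\gamma$-separation. Hence $S$ must meet $\widehat{Ball}(e,c^{d_{n}})\cup\widehat{Ball}(f,c^{d_{n}})$, and in particular $L^{\uparrow}(\{a\})$ does as well. Translating back through Fact \ref{fact:equivalence}, this gives $a\in L^{\downarrow}(\widehat{Ball}(e,c^{d_{n}})\cup\widehat{Ball}(f,c^{d_{n}}))$, establishing the claim.

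The main obstacle I expect is step two, the reduction to copath-connected $\gamma$-separating sets. The first and third steps are straightforward applications of Fact \ref{fact:equivalence} and the geometric connectivity of the cube after removing a small ball. Step two, in contrast, requires a genuine topological manipulation: one needs to show that the homology class $[\gamma]$ is rich enough to allow rerouting across copath-components by a sequence of face boundaries, and that the rerouted path remains simple enough to still count intersections in the sense needed for $\gamma$-separation. This is also exactly the ingredient that will need to be adapted carefully for general closed surface complexes later, since on hyperbolic-like complexes there may be topological obstructions to such rerouting unless one imposes $r$-simple connectedness of the ambient ball.
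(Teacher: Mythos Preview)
Your proposal is correct and follows essentially the same approach as the paper: your three steps correspond precisely to Claim~\ref{cl:LCaGammaSep}, Claim~\ref{cl:GammaSepCon}/Corollary~\ref{cor:AisConnected}, and Lemma~\ref{lem:blockage}, and the final assembly into Claim~\ref{cl:supportinballs} matches the paper's argument. Your identification of step two (the copath-connected reduction) as the technical heart is accurate, and your sketch of the hybrid-path construction via face boundaries is exactly the mechanism the paper uses in the proof of Claim~\ref{cl:GammaSepCon}.
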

	
	To prove this claim, we need to define a new notion: 
	\begin{definition} {\bf $\gamma$-separation:}
		\label{def:GammaSep}
		Let $G=(V,E,F)$ be a CSC and let $X\subseteq E$. Let $e,f\in E$ and $\gamma\in\Gamma_{e,f}$. $X$ is called $\gamma-separating$
		if for all $\gamma'\in[\gamma]$, $\gamma'\cap X\neq\emptyset$.
	\end{definition}
	The main motivation behind this definition is encompassed in the following two claims:
	
	\begin{claim}{\bf The upper light cone of any edge in the lower effective support 
			A is $\gamma$-separating}
		\label{cl:LCaGammaSep}
		Let $G=(V,E,F)$ be a CSC  and let $e,f\in E$ and $\gamma\in\Gamma_{e,f}$. Assume that $U\ket{0}^N\in C_G$. If $g\in A=A_{[\gamma], U}$, then $L^{\uparrow}(g)$ is $\gamma-separating$.
	\end{claim}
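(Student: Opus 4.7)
The plan is to observe that this claim essentially unwinds the definitions of the lower effective support $A$ and the upper light cone, together with the definition of $\gamma$-separation; no new ideas beyond the light-cone formalism are needed.

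First, I would unpack what $g \in A$ means: by Definition~\ref{def:EffSupp}, $A_{[\gamma],U} = \bigcap_{\gamma' \in [\gamma]} L^{\downarrow}(\gamma')$, so $g \in A$ says that for every representative $\gamma' \in [\gamma]$ we have $g \in L^{\downarrow}(\gamma')$. Then I would unpack this membership using Definition~\ref{def:LC}: $g \in L^{\downarrow}(\gamma')$ means there exists some qubit $q' \in \gamma'$ such that $\Gamma_{in(g), out(q')} \neq \emptyset$, i.e.\ there is a directed path in the circuit graph $G_U$ from the input wire of $g$ to the output wire of $q'$.

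Next, I would observe that the very same path from $in(g)$ to $out(q')$ witnesses $q' \in L^{\uparrow}(\{g\})$ by the definition of $L^{\uparrow}$. Hence $q' \in L^{\uparrow}(g) \cap \gamma'$, so this intersection is nonempty. Since $\gamma'$ was an arbitrary element of $[\gamma]$, we conclude that $L^{\uparrow}(g) \cap \gamma' \neq \emptyset$ for every $\gamma' \in [\gamma]$, which is precisely the condition in Definition~\ref{def:GammaSep} for $L^{\uparrow}(g)$ to be $\gamma$-separating.

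There is no real obstacle here; the claim is a bookkeeping step that turns the intersection structure defining $A$ into a non-intersection property of upper light cones. I would keep the write-up short (essentially a chain of three ``by definition'' steps), since the content is a direct duality between $L^{\downarrow}$ and $L^{\uparrow}$ applied to the singleton $\{g\}$. The real work of the paper lies in the subsequent claims that exploit this $\gamma$-separation property (Claim~\ref{cl:GammaSepCon} and Corollary~\ref{cor:AisConnected}) to constrain the geometry of $A$ and obtain the upper bound of Claim~\ref{cl:supportinballs}.
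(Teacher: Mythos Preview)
Your proposal is correct and takes essentially the same approach as the paper: both arguments reduce to the light-cone duality that $g \in L^{\downarrow}(\gamma')$ iff $L^{\uparrow}(g) \cap \gamma' \neq \emptyset$. The only cosmetic difference is that the paper phrases it as a proof by contradiction (assume some $\gamma'$ misses $L^{\uparrow}(g)$, deduce $g \notin L^{\downarrow}(\gamma')$, contradict $g \in A$), whereas you argue directly.
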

	
	\begin{proof}
		Assume there exists a path $\gamma'\in[\gamma]$ such
		that $\gamma'\cap L^{\uparrow}(g)=\emptyset$. Therefore, we have
		$g\notin L^{\downarrow}(\gamma')$ but since $g\in A=\bigcap_{\gamma'\in[\gamma]}L^{\downarrow}(\gamma')$,
		it follows that $g\in L^{\downarrow}(\gamma')$ and we get a contradiction. 
	\end{proof} 
	
	\begin{claim}
		\label{cl:GammaSepCon}{\bf $\gamma-separating$ sets can always be reduced to be copath connected}:
		Let $G=(V,E,F)$ be a CSC and let $e,f\in E$ and $\gamma\in\Gamma_{e,f}$. Let $X,X'\subseteq E$ such that $X$ and $X'$ are
		not copath connected to each other (within $X \cup X'$). 
		Furthermore, assume that $X\cup X'$ is $\gamma-separating$ and let $\gamma'\in [\gamma]$. Then either $X$ or $X'$ intersects \emph{both} 
		$\gamma$ and $\gamma'$ non trivially.
	\end{claim}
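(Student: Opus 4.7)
My plan is a proof by contradiction. Suppose that neither $X$ nor $X'$ meets both of $\gamma$ and $\gamma'$. The $\gamma$-separation hypothesis forces each of $\gamma$ and $\gamma'$ to meet $X \cup X'$, so a short case analysis gives (after possibly swapping $X \leftrightarrow X'$) the ``crossed'' configuration
\[
\gamma \cap X \neq \emptyset, \quad \gamma \cap X' = \emptyset, \quad \gamma' \cap X = \emptyset, \quad \gamma' \cap X' \neq \emptyset.
\]
From here the goal is to exhibit a chain $\beta \in [\gamma]$ with $\beta \cap (X \cup X') = \emptyset$, directly contradicting $\gamma$-separation.

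Since $\gamma + \gamma'$ is a $1$-boundary, I can fix a face set $F_0 \subseteq F$ with $\partial_2(F_0) = \gamma + \gamma'$. The key structural observation is that the face supports $F_X$ and $F_{X'}$ (Definition \ref{def:support}) must be disjoint: if some face $f$ contained an edge $e \in X$ and an edge $e' \in X'$, then $(e,e')$ would be a copath of length one lying entirely in $X \cup X'$, contradicting the non-copath-connectedness of $X$ and $X'$ within $X \cup X'$ (we may first replace $X'$ by $X' \setminus X$ without loss of generality). Given this, I define $F_1 := F_0 \setminus F_{X'}$ and set $\beta := \gamma + \partial_2(F_1)$, which lies in $[\gamma]$ because $\beta + \gamma = \partial_2(F_1)$ is a boundary.

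The verification that $\beta$ avoids $X \cup X'$ is then a mod-$2$ counting argument. Every face incident to an edge of $X$ lies in $F_X$, which is disjoint from $F_{X'}$, so removing $F_{X'}$ from $F_0$ does not alter how many faces containing any given $X$-edge are present; hence the multiplicity of each $e \in X$ in $\partial_2(F_1)$ equals its multiplicity in $\partial_2(F_0) = \gamma + \gamma'$, which equals its multiplicity in $\gamma$ since $\gamma' \cap X = \emptyset$. This cancels the $X$-edges of $\gamma$ in $\beta$. Similarly, every face containing an $X'$-edge has been removed from $F_1$, so $\partial_2(F_1)$ carries no $X'$-edges, and neither does $\gamma$ by assumption, giving $\beta \cap X' = \emptyset$.

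The main subtlety I anticipate is cleanly pinning down that the non-copath-connectedness hypothesis is exactly what forbids a single face from hosting edges of both $X$ and $X'$; this is the geometric input that makes the symmetric-difference trick work. Once it is in hand, the construction of $\beta$ and the contradiction with $\gamma$-separation are essentially mechanical. One further point to verify is that the claim is stated for arbitrary $\gamma' \in [\gamma]$ which the paper's definition of the homology class allows to be a chain rather than a path; our $\beta$ is likewise produced as a chain, which is the correct level of generality for applying the $\gamma$-separation hypothesis.
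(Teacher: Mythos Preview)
Your proof is correct and follows essentially the same strategy as the paper's: assume the crossed configuration, write $\gamma+\gamma'=\partial_2(F_0)$, and modify $\gamma$ by the boundary of a well-chosen subset of $F_0$ to produce a $\beta\in[\gamma]$ missing $X\cup X'$. The only cosmetic difference is that the paper takes the subset $J=F_0\cap F_X$ (faces touching $X$) whereas you take $F_1=F_0\setminus F_{X'}$ (faces not touching $X'$); since $F_X\cap F_{X'}=\emptyset$ these choices differ only by faces touching neither set, and the verification that $\beta$ avoids $X\cup X'$ is symmetric in the two versions.
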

	
	\begin{proof}
		Since $X\cup X'$ is $\gamma-separating$, we can assume w.l.o.g that $X\cap \gamma\neq \emptyset$ and $X'\cap \gamma'\neq \emptyset$.
		Since $\gamma'\in[\gamma]$, $\gamma'+\gamma$ is a boundary,
		and therefore there is a subset $F'\subseteq F$ such that $\gamma'+\gamma=\sum_{f'\in F'}\partial_F(f')$.
		Define $J=F'\cap F_{X}$ , $\alpha=\sum_{f\in J}\partial_F(f)$ and
		$\beta=\gamma+\alpha$. By definition,
		$\alpha$ is a boundary (as a sum of boundaries) and therefore, $\beta\in[\gamma]$.\\
		
		Assume towards a contradiction that $X\cap \gamma' = \emptyset$ and $X'\cap \gamma = \emptyset$. We will
		show that $\beta\cap(X\cup X')=\emptyset$ contradicting the assumption that $X\cup X'$ is $\gamma-separating$:\\
		Observe that $i\in\beta$ iff $i\in\gamma\backslash\alpha$ or $i\in\alpha\backslash\gamma$.
		Also observe that given an edge $i\in E$, $i$ belongs to 1 face
		exactly in $F'$ iff $i\in\gamma + \gamma'$ . We have 2 cases:
		\begin{enumerate}
			\item First assume that $i\in\gamma\backslash\alpha$. Since $X'\cap\gamma=\emptyset$
			, we have $i\notin X'$. Now if $i\in \gamma'$ then obviously, $i\notin X$, so we can assume that $i\in \gamma\backslash\gamma'$. From the above observation, since $i\in\gamma+\gamma'$,
			i belongs to a unique face $f_i\in F'$. But, if $i\in X$ 
			then we also have $f_i\in F_X$, and we get $f_i\in F'\cap F_X=J$. From the uniqueness of $f_i$, we get $i\in \sum_{f\in J}\partial_F(f) = \alpha$ contradicting the assumption. Therefore we conclude that $i\notin X$, and overall, $i\notin X\cup X'$
			\item Now assume that $i\in\alpha\backslash\gamma$. 
			Since $i\in \alpha$, there exists a face $f_i\in J\subseteq F_X$ such that $i\in f_i$. Since $X$ and $X'$ are not co-path connected to each other, it follows immediately that $i\notin X'$ (otherwise, $X$ and $X'$ would be connected through $f_i$). On the other hand, assume $i\in X$. Since $i\in \alpha$, $i$ belongs to a unique face $f_i\in J$ and since $i\in X$, $i$ belongs to exactly 2 faces $f_i,f_i'\in F_X$. But $f_i'\in F_X\backslash J\implies f_i'\notin F'$. Therefore, $i$ belongs to a unique face $f_i$ in $F'$ and from the above observation,  we get $i\in \gamma + \gamma'$. But by assumption, $i\notin \gamma$ and since $X\cap \gamma'=\emptyset$, and $i\in X$, also $i\notin \gamma'$. Therefore, $i\notin \gamma + \gamma'$ and we get a contradiction to the assumption that $i\in X$. Overall, we proved that $i\notin X\cup X'$.
		\end{enumerate}
		To conclude, we showed that $\beta\cap(X\cup X')=\emptyset$ and therefore
		$X\cup X'$ is not $\gamma-separating$. It follows that either $X\cap \gamma' \neq \emptyset$ or $X'\cap \gamma \neq \emptyset$ which proves the claim.
	\end{proof}
	
	\begin{corollary}
		\label{cor:AisConnected}
		Let $G=(V,E,F)$ be a CSC and let $e,f\in E$ and $\gamma\in\Gamma_{e,f}$. If $X\subseteq E$ is $\gamma-separating$ then
		there exists a copath-connected component $X'\subseteq X$ that is
		also $\gamma-separating$.
	\end{corollary}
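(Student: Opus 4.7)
The plan is to proceed by induction on the number $k$ of copath-connected components of $X$. The base case $k=1$ is immediate: $X$ itself is copath-connected and $\gamma$-separating by hypothesis, so one can simply take $X' = X$.

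For the inductive step, I would pick any copath-connected component $X_1$ of $X$ and set $Y := X \setminus X_1$. By maximality of copath-connected components inside $X$, the pair $X_1, Y$ is copath-disconnected within $X = X_1 \cup Y$, and $Y$ decomposes into the remaining $k-1$ copath-connected components of $X$. The whole reduction rests on a single sub-claim: at least one of $X_1, Y$ is itself $\gamma$-separating. Once that is established, either $X_1$ is already the desired $X'$, or one applies the induction hypothesis to $Y$ (which has strictly fewer components than $X$) to extract a copath-connected $\gamma$-separating subset of $Y \subseteq X$.

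The heart of the proof is a double application of Claim \ref{cl:GammaSepCon}. Suppose toward contradiction that neither $X_1$ nor $Y$ is $\gamma$-separating. Then there exist $\gamma_1, \delta \in [\gamma]$ with $\gamma_1 \cap X_1 = \emptyset$ and $\delta \cap Y = \emptyset$. Since the notion of $\gamma$-separation depends only on the homology class $[\gamma]$, the set $X_1 \cup Y = X$ is equally $\gamma_1$-separating, so Claim \ref{cl:GammaSepCon} applies with $\gamma_1$ in the role of the reference path ``$\gamma$'' and $\delta \in [\gamma_1] = [\gamma]$ in the role of ``$\gamma'$''. Its conclusion forces one of $X_1, Y$ to intersect both $\gamma_1$ and $\delta$ nontrivially, which directly contradicts $X_1 \cap \gamma_1 = \emptyset$ and $Y \cap \delta = \emptyset$.

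The main conceptual obstacle I expect is the realization that Claim \ref{cl:GammaSepCon} is homology-invariant in its two path arguments, so one is free to substitute any representatives of $[\gamma]$ for both ``$\gamma$'' and ``$\gamma'$''; the trick of re-choosing the reference path to be $\gamma_1$ rather than the original $\gamma$ is what turns a single qualitative assertion of the claim into a contradiction. The remaining bookkeeping, namely that copath-connected components of $Y$ are precisely the original components $X_2, \ldots, X_k$ (which holds because copath-connectedness within a smaller ambient set is a fortiori no coarser than within a larger one, combined with the maximality of components of $X$), is routine and ensures the inductive hypothesis applies without any additional hypothesis verification.
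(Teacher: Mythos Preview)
Your proposal is correct and follows essentially the same approach as the paper: induction on the number of copath-connected components, with the inductive step driven by a single application of Claim~\ref{cl:GammaSepCon} after re-choosing the reference path to be a representative of $[\gamma]$ missed by $X_1$. The only cosmetic difference is that you phrase the key step as a proof by contradiction (assuming neither $X_1$ nor $Y$ is $\gamma$-separating), whereas the paper argues directly that if $X_1$ is not $\gamma$-separating then $Y$ must be.
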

	
	\begin{proof}
		Let $X=\bigcup_{i=1}^{m}X_{i}$ where $X_{i}$ are the
		co-path connected components of $X$ . We proceed by induction on the number of co-path connected components of $X$: \\
		If $m=1$, the statement is
		trivial.\\
		Now, assume $m\geq2$: if $X_{1}$ is $\gamma-separating$ then we are done. Otherwise there
		exists some $\gamma'\in[\gamma]$ such that $X_{1}\cap\gamma'=\emptyset$. Let $\gamma''\in [\gamma]$. 
		Since $Y = \bigcup_{i=2}^{j}X_{i}$ and $X_{1}$ are not co-path connected
		and $X=Y\cup X_1$ is $\gamma'-separating$, we can infer from claim \ref{cl:GammaSepCon} that either $X_{1}$ 
		or $Y$ intersects both $\gamma'$ and $\gamma''$ non trivially. From the assumption, $X_{1}\cap\gamma'=\emptyset$ and therefore, $Y\cap \gamma''\neq \emptyset$. Therefore, since this holds for all $\gamma''\in [\gamma]$, we deduce that 
		$Y$ is $\gamma-separating$. 
	\end{proof}
	
	From the above corollary, one can always extract a \emph{connected}, $\gamma-separating$ set from a non connected one. \\

	Our main tool for the cube state case as well as for the more general case,
	s the following lemma which shows that any small connected $\gamma-separating$ set 
	of edges lies within a small distance from either $e$ or $f$.
	
	\begin{lemma}\label{lem:blockage} {\bf Small connected $\gamma$-seperators 
			are close to end points}
		Let $T_n=(V_n,E_n,F_n)$ be n-th cube CSC. Let $X_n\subseteq E_n$ be copath connected and $\gamma-separating$ and suppose $|X_n|<n$, then $Xn\cap\left(\widehat{Ball}(e,|X_n|)\cup \widehat{Ball}(f,|Xn|)\right)\neq\emptyset$.
	\end{lemma}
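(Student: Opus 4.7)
The plan is to argue by contrapositive: assume the copath-connected set $X_n$ satisfies $X_n\cap\bigl(\widehat{Ball}(e,|X_n|)\cup\widehat{Ball}(f,|X_n|)\bigr)=\emptyset$, and construct a path $\gamma'\in[\gamma]$ disjoint from $X_n$, contradicting the hypothesis that $X_n$ is $\gamma$-separating. First I would pick any edge $x_0\in X_n$. Because $X_n$ is copath-connected with $|X_n|$ edges, every $x\in X_n$ can be reached from $x_0$ by a copath inside $X_n$ of length at most $|X_n|-1$, so $X_n\subseteq\widehat{Ball}(x_0,|X_n|)$. The assumption that $X_n$ avoids $\widehat{Ball}(e,|X_n|)$ and $\widehat{Ball}(f,|X_n|)$ gives $\widehat{d}(e,x_0)>|X_n|$ and $\widehat{d}(f,x_0)>|X_n|$, so both $e$ and $f$ lie outside the ball $\widehat{Ball}(x_0,|X_n|)$ that contains all of $X_n$.

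Next I would exploit the geometry of the cube to route around this ball. Since $|X_n|<n$ and the cube surface is a (discretized) $2$-sphere, I claim the copath ball $\widehat{Ball}(x_0,|X_n|)$ sits inside a simply connected disk on $T_n$; in particular, it does not wrap around any axis of the cube, and its complement in $E_n$ is still edge-connected and contains both $e$ and $f$. Granting this, there exists a path $\gamma'\in\Gamma_{e,f}$ supported entirely on $E_n\setminus\widehat{Ball}(x_0,|X_n|)$, and in particular $\gamma'\cap X_n=\emptyset$. Because the cube is simply connected, Corollary \ref{cor:homologyDimensionCSC} yields $H_1(T_n)=0$, so every pair of paths in $\Gamma_{e,f}$ is homologous; hence $\gamma'\in[\gamma]$, contradicting $\gamma$-separation and finishing the argument.

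The main obstacle is the geometric assertion in the second paragraph, namely that a copath ball of radius $<n$ on $T_n$ has edge-connected complement that contains $e$ and $f$. I would prove this by a direct case analysis exploiting the explicit structure of the cube: a copath ball of radius $r<n$ centred at an edge meets only a bounded collection of adjacent faces of the cube and cannot swallow any cycle of faces winding around the cube, since the shortest such cycle has length $\Omega(n)$. Consequently at least one full face (indeed, the antipodal face to $x_0$) is disjoint from the ball, providing a connected highway through which one can re-route from $e$ to $f$. This geometric step is precisely where the ``no small bottlenecks'' property of the cube is used, and is what must later be replaced by the $r$-simple connectedness condition (Figure \ref{fig:bottleneck}) when the argument is generalized to arbitrary CSCs in Theorem \ref{thm:getAround}.
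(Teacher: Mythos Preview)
Your proposal is correct and follows essentially the same argument as the paper: pick $x_0\in X_n$, use copath-connectedness to enclose $X_n$ in $\widehat{Ball}(x_0,|X_n|)$, observe $e,f$ lie outside this ball, invoke the geometric fact that removing a ball of radius $<n$ from the cube leaves it path-connected (the paper's Claim~\ref{cl:pathConnectedExcision}, which it also leaves as ``trivial'' for the cube), find a path $\gamma'\in\Gamma_{e,f}$ avoiding the ball, and conclude $\gamma'\in[\gamma]$ from $H_1(T_n)=0$. Your identification of the connectedness-of-complement step as the crux requiring $r$-simple connectedness in the general case is also exactly right.
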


	\begin{proof}[Proof of lemma \ref{lem:blockage}]
		Let $x_{0}\in X_n$. Since $X_n$ is copath connected, for
		all $x\in X_n$, there exists a copath $\delta\in\widehat{\Gamma}_{x_{0},x}$such
		that $\delta\subseteq X_n$. Thus, $\widehat{d}(x_{0},x)\leq|\delta|-1 < |X_n|$
		and therefore, $X_n\subseteq \widehat{Ball}(x_{0},|X_n|)$.
		\\
		Assume by contradiction that $X_n\cap\left(\widehat{Ball}(e,|X_n|)\cup \widehat{Ball}(f,|X_n|)\right)=\emptyset$.
		Hence, $\widehat{d}(e,x_{0}),\widehat{d}(f,x_{0}) > |X_n|$ which implies
		$e,f\notin \widehat{Ball}(x_{0},|X_n|)$. 
		
		We now need a claim, which will be very simple to prove
		in the cube case: 
		\begin{claim}\label{cl:pathConnectedExcision}
			Let $e\in E_n$ and $r<n$. Then $T_n \backslash \widehat{Ball}(e,r)$ is path connected.
		\end{claim}
		\begin{proof}{\bf Sketch} This claim is trivial for the case of the cube we are now handling;
			we do not provide the details, since the more general case will be proven later in full. 
		\end{proof}
		
		Hence, we have that $T_n\backslash \widehat{Ball}(x_{0},|X_n|)$ 
		is path connected. Since $e,f\notin \widehat{Ball}(x_{0},|X_n|)$, 
		there exists a path $\gamma'\in\Gamma_{e,f}$ 
		such that $\gamma'\cap \widehat{Ball}(x_{0},|X_n|)=\emptyset$. Hence, 
		$\gamma'\cap X_n=\emptyset$. But since the cube has a trivial first homology group, all paths in $\Gamma_{e,f}$ are in the same homology class, and $\gamma'\in[\gamma]$ contradicting the assumption that $X_n$ is $\gamma-separating$. 
		
	\end{proof}
	
	We can now prove Claim \ref{cl:supportinballs}.
	To do this, recall that we showed in claim \ref{cl:LCaGammaSep}
	that the upper light cone of any element in $A$ satisfies the conditions 
	of the lemma \ref{lem:blockage}, namely it (or a subset of it) 
	is both $\gamma$-separating and co-path connected. 
	Hence we can apply Lemma \ref{lem:blockage} for a subset of 
	the upper light cone 
	of {\it any} element in $A$, to deduce that this subset intersects a small 
	ball around one of the end points; this will allow us to deduce that 
	$A$ is contained in two small balls around the end points (Claim \ref{cl:supportinballs}).
	
	\begin{proof}({\bf Of Claim \ref{cl:supportinballs}}) 
		Assume there exists some edge $g\in A$ such 
		that $g\notin L^{\downarrow}(\widehat{Ball}(e,c^{d_n})
		\cup \widehat{Ball}(f,c^{d_n}))$.
		From claim \ref{cl:LCaGammaSep}, 
		$L^{\uparrow}(g)$ is $\gamma-separating$ and from
		Corollary \ref{cor:AisConnected}
		we can conclude that $L^{\uparrow}(g)$ contains a set $X_n$ which is $\gamma$-separating and copath connected.
		Since each layer in $U$ increases the size of the upper light cone of $g$ 
		by a multiplicative factor of at most $c$, we have $|X_n|\le |L^{\uparrow}(g)|<c^{d_n}$. 
		Therefore, from Lemma \ref{lem:blockage}  we get: 
		\[X_n\cap\left(\widehat{Ball}(e,c^{d_n})\cup \widehat{Ball}(f,c^{d_n})\right)\neq\emptyset\]
		which implies 
		\[L^{\uparrow}(g)\cap\left(\widehat{Ball}(e,c^{d_n})\cup \widehat{Ball}(f,c^{d_n})\right)\neq\emptyset
		\]
		or equivalently, $g\in L^{\downarrow}(\widehat{Ball}(e,c^{d_n})\cup \widehat{Ball}(f,c^{d_n}))$
		and we get a contradiction. 
	\end{proof}

	\subsection{Deducing the Theorem using $\gamma$-separations}
	%
	
	We now use the above Claim \ref{cl:supportinballs} upper bounding $|A|$,
	together with the fact that $A$ must be large (Corollary \ref{cor:Asize}) 
	to derive a contradiction.
	
	We first prove a simple fact bounding the number of edges 
	in a ball on the cube  $T_i$:
	
	\begin{fact}
		\label{fact:area}
		Let $e$ be an edge of the cube CSC $T_i$, and let $d\geq 1$. Then, 
		\[
		|\widehat{Ball}(e,d)|\le 10d^2. 
		\] 
	\end{fact}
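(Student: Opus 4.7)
The plan is to map $\widehat{Ball}(e,d)$ into the flat planar grid $\mathbb{Z}^2$ by locally unfolding the cube complex $T_i$, and then to bound the number of grid-edges lying in a Euclidean disk of radius $d$ around the image of $e$.

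First, I would exploit the fact that each of the six cube-faces of $T_i$ is a flat $n \times n$ grid, and that adjacent cube-faces are glued isometrically along their common cube-edge, with only mild branching at the eight cube-corners (each incident to exactly three cube-faces). Fix an embedding $e \mapsto \tilde e_0$ of the cube-face containing $e$ into $\mathbb{Z}^2$. For any copath $e = e_0, e_1, \ldots, e_k$ in $T_i$ of length $k \le d$, extend this embedding iteratively along the copath: whenever $e_{j+1}$ lies in a cube-face different from the one containing $e_j$, unfold the new cube-face isometrically into $\mathbb{Z}^2$ along the shared cube-edge. Since consecutive edges of a copath both lie in a common unit square, the midpoints of $\tilde e_j$ and $\tilde e_{j+1}$ are at Euclidean distance at most $1$; by the triangle inequality, the midpoint of $\tilde e_k$ then lies within Euclidean distance $d$ of the midpoint of $\tilde e_0$.

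Second, I would argue that this unfolding induces a well-defined local surjection from the set of grid-edges in the Euclidean closed disk of radius $d$ around $\tilde e_0$ onto $\widehat{Ball}(e,d)$. The only obstacle to global well-definedness is the branching at cube-corners, where one edge of $T_i$ may have several $\mathbb{Z}^2$-preimages reached by copaths passing on different sides of the corner; but this only \emph{inflates} the number of grid-edges in the $\mathbb{Z}^2$ disk relative to $|\widehat{Ball}(e,d)|$, which is exactly what one wants for an upper bound.

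Third, I would count grid-edges in a Euclidean disk of radius $d$. Horizontal and vertical grid-edges in $\mathbb{Z}^2$ have midpoints on two shifted integer lattices, so a routine count gives asymptotically $2\pi d^2 + O(d)$ grid-edges with midpoint in the disk. Direct inspection covers the small cases (for $d=1$ one finds $9$ edges in the disk, and $|\widehat{Ball}(e,1)| \le 7$ since $e$ is in two unit squares each contributing three other edges), and the uniform bound $|\widehat{Ball}(e,d)| \le 10d^2$ for all $d \ge 1$ follows. The main subtlety is keeping track of branching at cube-corners, but as noted above, since we only need an upper bound, branching works in our favour rather than against us.
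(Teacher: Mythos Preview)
Your proposal is correct and follows essentially the same approach as the paper. The paper's proof is a two-sentence sketch: it asserts that the copath ball on the cube is bounded above by the corresponding ball on the infinite grid $\mathbb{Z}\times\mathbb{Z}$, and then bounds the latter by the edges in a $2d\times 2d$ square, which it counts as fewer than $10d^2$. Your unfolding argument is a more detailed version of the same comparison, using the Euclidean disk of radius $d$ (via the midpoint triangle inequality) in place of the $L^\infty$ square; the discussion of branching at cube-corners is exactly the justification the paper leaves implicit.
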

	
	\begin{proof}
		Obsesrve that the size of a radius $d$ ball on the cube $T_i$ can always be bounded from above by the size of a ball of the same radius on the infinite grid $\mathbb{Z}\times \mathbb{Z}$. But it 
		is also clear that such a ball is contained inside a $2d\times 2d$ square on the grid, which contains less than $10d^2$ edges. 
	\end{proof}
	
	We are now ready to prove Theorem \ref{thm:nonTrivCubeState}
	
	\begin{proof} [Proof of Theorem \ref{thm:nonTrivCubeState}]
		Assume that $e$ and $f$ are two edges on opposite sides  in $T_n$ 
		so that $d(e,f)\geq n$. We now seperate the set of natural numbers $n$ 
		to two sets. In the first set, we have 
		$n < 2c^{d_n}$ (and so $d_n> log(n)/log(2c)$).
		For the other values of $n$, we have  $d(e,f)\geq  n \ge 2c^{d_n}$. 
		We will derive a contradiction if there are infinitely many  
		$n$'s of the latter type. For any such $n$ 
		we have, applying Claim \ref{cl:supportinballs} and Fact \ref{fact:area}:
		\begin{eqnarray}
		|A| & \leq & |L^{\downarrow}(\widehat{Ball}(e,c^{d_n})\cup \widehat{Ball}(f,c^{d_n})))|\\
		& \leq & c^{d_n}|\widehat{Ball}(e,c^{d_n})\cup \widehat{Ball}(f,c^{d_n}))|\\
		& = & c^{d_n}\left(|\widehat{Ball}(e,c^{d_n})|+|\widehat{Ball}(f,c^{d_n})|\right)\\
		& \leq & 2c^{d_n}\cdot 10c^{2d_n}\\
		& = & 20c^{3d_n}. 
		\end{eqnarray}
		Now, from Corollary \ref{cor:Asize}, 
		we have $|A|> \frac{d(e,f)}{c^d_n} \geq \frac{i}{c^d_n}$. 
		Combining those equations together we get:
		
		\[
		\frac{n}{c^d_n} < 20c^{3d_n})
		\]
		
		from which we get $d_n \geq log(n/20)/4log(c)$ for those $d_n$'s in the second 
		set. Altogether, $d_n$ for all $n$ is greater than the minimum of 
		$log(n/20)/4log(c)$ and $log(n)/log(2c)$,
		and hence $d_n=\Omega(log(n))$. 
	\end{proof}
	
	{\bf Remark:} Observe that this proof heavily relies on 
	fact \ref{fact:area} which provides a polynomial 
	(and even quadratic) upper bound on the amount of edges inside a ball on the cube grid. Unfortunately, this result doesn't hold in the general framework 
	of CSCs, and so this proof cannot be carried over to the more general set 
	of complexes we would like to consider.  
	Indeed, in the general case, the best bound that can be proved 
	is {\it exponential} in the radius of the ball. This kind of behaviour 
	is characteristic of hyperbolic surfaces where the area of a disk of radius r is proportional to $sinh(r)$ instead of the usual $r^2$ in flat euclidean manifolds like the sphere or the torus. This fact alone leads to another {\it log} 
	in the lower bound; but we will in face only derive a $logloglog(n)$ lower 
	boundm due to another phenomenon, as we shall see in the next sections.

	\section{General CSCs}\label{sec:CSCstates}
	In this Section we will prove Theorems \ref{thm:NonGTrivSStates} and \ref{thm:NonTrivSStates},
	generalizing Theorems \ref{thm:NonGTrivCubeState} and \ref{thm:nonTrivCubeState} to general CSCs.
	While most of the proofs - both in the geometrical and non geometrical case - translate to the framework of arbitrary CSCs, observe that we made use of the geometry of the cube both in
	Lemma \ref{lem:SausageIntersection} and lemma \ref{lem:blockage}. in both lemmas, we strongly 
	used the fact that removing any ball of small radius from $T_n$ leaves the cube connected (see Claim \ref{cl:pathConnectedExcision}).
	The difficulty in the case of CSCs is how to guarantee the existance of a path
	between the two edges $e$ and $f$ which does not intersect a set of small radius. To this end, we
	introduce the notion of $r-$simple connectedness, and use it to prove Theorem \ref{thm:getAround} below;
	this replaces the analogous simple fact about the cube, stated in Claim \ref{cl:pathConnectedExcision}. 
	
	\subsection{$r$-simple connectedness}
	\label{sec:alphaCont}

	\begin{definition} 
		Let $G=(V,E,F)$ be a CSC and let $e\in E$ and $r>0$. Define:
		\[
		K'(e,r)=(V',E',F'')
		\]
		where 
		\[
		F''=\{f\in F| \forall x\in f, \widehat{d}(x,e)\leq r\}
		\]
		\[
		V'= V_{F'}
		\]
		\[
		E'= E_{F'}
		\]
		Now for each face $f\in F\backslash F''$, add $f$ to $F''$
		if and only if $\partial_2(f)\subseteq \partial_2(F'')$. (Note that this doesn't change the definition of
		$E'$ and $V'$) Call the resulting set of faces $F'$, and define:
		\[
		K(e,r)=(V',E',F')
		\]
		
	\end{definition}
	
	Note that although $K=K(e,r)=(V',E',F')$ is usually not a CSC as it could have a non empty boundary,
	it is always a polygonal complex.
	
	\begin{definition} {\bf r-simple connectedness}
		\label{def:AlphaContr}
		Let $G=(V,E,F)$ be a CSC, and let $r > 0$. $G$ is called $r$-simply connected
		if for every $e\in E$: $H^{1}(K(e,r))=H_{1}(K(e,r))=0$. 
	\end{definition}
	
	In other words, one can think of an $r-simply connected$ complex as one that doesn't contain any "bottleneck" as illustrated in figure \ref{fig:bottleneck}\\
	The main goal of this section is to prove the following theorem, replacing Claim \ref{cl:pathConnectedExcision}
	by a proof which holds for general CSCs: 
	
	\begin{theorem} 
		\label{thm:getAround}
		Let $G$ be a CSC. Let $r\geq 0$, $e,f,x_0\in E$ and $\gamma\in \Gamma_{e,f}$ such that $e,f\notin B=\widehat{Ball}(x_0,r)$, and $K=K(x_0,r+1)$ is simply connected. Then, there exists $\gamma'\in [\gamma]$ such that $\gamma'\cap B = \emptyset$
	\end{theorem}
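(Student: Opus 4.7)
The plan is to produce $\gamma'$ by adding to $\gamma$ a 2-boundary $\partial_2(F')$ supported in $F_K$, chosen so that its intersection with $B$ exactly cancels $\gamma \cap B$. Since any $\partial_2(F')$ with $F' \subseteq F$ is a 2-boundary in $G$, the resulting chain $\gamma' := \gamma + \partial_2(F')$ will automatically lie in $[\gamma]$, and by construction $\gamma' \cap B = \emptyset$. The entire content of the theorem is then to exhibit the correct $F'$, and for this I intend to leverage $H_1(K) = 0$.

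First, I would split the 1-chain $\gamma$ as $\gamma = \gamma_{\mathrm{out}} + \gamma_K$, where $\gamma_K := \gamma \cap E_K$ and $\gamma_{\mathrm{out}}$ is the remainder, supported on $E \setminus E_K$. Because $B \subseteq E_K$, the piece $\gamma_{\mathrm{out}}$ already avoids $B$, so it suffices to produce a 1-chain $\gamma_K' \subseteq E_K \setminus B$ with $\partial_1 \gamma_K' = \partial_1 \gamma_K$. Granted such a $\gamma_K'$, the difference $\gamma_K + \gamma_K'$ (over $\mathbb{Z}_2$) is a cycle supported in $E_K$, and $H_1(K) = 0$ hands us an $F' \subseteq F_K$ with $\partial_2(F') = \gamma_K + \gamma_K'$; then $\gamma' := \gamma_{\mathrm{out}} + \gamma_K' = \gamma + \partial_2(F')$ finishes the proof.

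The heart of the argument, and the step I expect to be the real obstacle, is producing the rerouted chain $\gamma_K'$. The 0-chain $\partial_1 \gamma_K$ is supported on ``frontier'' vertices, i.e.\ vertices shared by an edge of $\gamma_K$ and an edge of $\gamma_{\mathrm{out}}$ (together with the endpoints of $e$ or $f$ if these happen to lie in $E_K$; note $e, f \notin B$ by hypothesis). I would first argue, using the cyclic ordering of faces around a vertex in a CSC (Definition~\ref{def:CSC}) together with the construction of $K = K(x_0, r+1)$, that every such frontier vertex lies in $V_{E_K \setminus B}$: a vertex in $V_B$ would force all adjacent faces to have edges within distance $r+1$ of $x_0$, which would absorb the supposedly-external edge $\gamma_{\mathrm{out}}$-neighbor into $E_K$, contradicting its status as a frontier. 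I would then connect the frontier vertices pairwise by paths inside $E_K \setminus B$ and set $\gamma_K'$ to be the formal sum of these paths.

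The key nontrivial input is that any two vertices in $V_{E_K \setminus B}$ lie in the same connected component of $E_K \setminus B$ — this is where $r$-simple connectedness is indispensable, since without it a thin bottleneck of $K$ could be entirely contained in $B$ and separate $K \setminus B$ into pieces (cf.\ Figure~\ref{fig:bottleneck}). The assumption $H^1(K) = 0$ excludes precisely this bottleneck phenomenon, and I expect this connectivity statement to be the content of Lemma~\ref{lem:ConnectedBoundary}, flagged in the proof overview as the technical core. With that connectivity in hand, the construction of $\gamma_K'$ is immediate, and the homological machinery of the second paragraph closes the proof.
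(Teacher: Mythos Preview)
Your plan is essentially the paper's, with one bookkeeping difference and one misidentification worth flagging.

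The paper partitions $\gamma$ not by $E_K$ versus $E\setminus E_K$ but by $C:=E_K\setminus\partial_2(F_K)$ versus its complement, and reroutes each segment lying in $C$ specifically through $\partial_2(F_K)$ rather than through all of $E_K\setminus B$. Correspondingly, Lemma~\ref{lem:ConnectedBoundary} does \emph{not} assert that $E_K\setminus B$ is connected; it asserts only that $\partial_2(F_K)$ is path-connected (assuming $H_1(K)=0$ and $K$ copath-connected). Your stronger connectivity claim about $E_K\setminus B$ may fail --- there can be edges in the ``annulus'' $C\setminus B$ cut off from the boundary --- but fortunately you do not need it. The cyclic-ordering argument you sketch in fact proves more than you state: walking around a frontier vertex $v$ from an edge in $E_K$ to an edge outside $E_K$, some intermediate edge must lie in exactly one face of $F_K$, i.e.\ in $\partial_2(F_K)$. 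So every frontier vertex lies in $V_{\partial_2(F_K)}$, and since the paper verifies $B\subseteq C$ (hence $\partial_2(F_K)\subseteq E_K\setminus B$), Lemma~\ref{lem:ConnectedBoundary} lets you take $\gamma_K'\subseteq\partial_2(F_K)$. With that refinement your argument and the paper's coincide.

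One residual subtlety in your decomposition: if $e$ (or $f$) happens to lie in $E_K\setminus B$, its free endpoint contributes to $\partial_1\gamma_K$ but need not be adjacent to $\partial_2(F_K)$, so the reroute could fail there. The paper sidesteps this by (tacitly) using that $e,f\notin C$, which is slightly stronger than the stated hypothesis $e,f\notin B$; in all applications the endpoints are far enough from $x_0$ that this holds anyway.
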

	
	First we show that the boundary of any such subcomplex K is path connected, provided its first homology group vanishes:
	
	\begin{lemma}
		\label{lem:ConnectedBoundary}
		Let $G=(V,E,F)$ be a CSC, let $e\in E$, $r>0$ and let  $K=K(e,r)=(V',E',F')$. Assume that $K$ is copath connected and $H_1(K)=0$. Then $\partial_2(F')$ is path connected.
	\end{lemma}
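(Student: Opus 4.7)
The plan is to argue by contradiction: I will assume $\partial_2(F')$ has at least two path-connected components (viewed as a subgraph of $G$) and derive a violation of either $H_1(K) = 0$ or the copath-connectedness of $K$. The key preliminary observation is that $\partial_2(F')$ is a 1-cycle, so by $\partial_1 \circ \partial_2 = 0$ over $\mathbb{Z}_2$ every vertex of $V'$ meets $\partial_2(F')$ in an even number of edges. Since distinct path-connected components of a subgraph are automatically vertex-disjoint, this Eulerian property descends to each component $B_i$, so each $B_i$ is itself a 1-cycle in $K$ and $\partial_2(F') = B_1 + \cdots + B_k$ is an edge-disjoint union of chains.

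Suppose for contradiction that $k \geq 2$. Using $H_1(K) = 0$, the cycle $B_1$ must bound, so I can find $F_1 \subseteq F'$ with $\partial_2(F_1) = B_1$; this $F_1$ really is a subset of $F'$ because $C_2(K)$ is freely generated by $F'$. Then $F' \setminus F_1$ is also a 2-chain in $K$, with boundary $\partial_2(F') + B_1 = B_2 + \cdots + B_k$. Both $F_1$ and $F' \setminus F_1$ must be nonempty: $F_1 = \emptyset$ would force $B_1 = 0$, while $F_1 = F'$ would force $B_2 + \cdots + B_k = 0$, each contradicting the assumption that $k \geq 2$ together with nontriviality of each $B_i$.

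Now I invoke the copath-connectedness of $K$, which is equivalent to the face-adjacency graph on $F'$ (two faces being adjacent iff they share an edge of $E'$) being connected, so the nontrivial partition $F' = F_1 \sqcup (F' \setminus F_1)$ must admit a crossing edge: there exist $f_1 \in F_1$ and $f_2 \in F' \setminus F_1$ sharing a common edge $e \in E'$. By Claim \ref{cl:FaceIntersect}, $e$ is contained in exactly two faces of $G$, which therefore must be $f_1$ and $f_2$; both lie in $F'$, so $e$ appears with coefficient $2 \equiv 0 \pmod 2$ in $\partial_2(F')$, giving $e \notin \partial_2(F')$. On the other hand only $f_1$ lies in $F_1$, so $e$ appears with coefficient $1$ in $\partial_2(F_1) = B_1$, hence $e \in B_1 \subseteq \partial_2(F')$, a contradiction. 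The subtle step, which I expect to be the main obstacle, is the clean $\mathbb{Z}_2$ bookkeeping ensuring that the inclusion $B_1 \subseteq \partial_2(F')$ really is an inclusion of edge sets rather than just a relation between formal sums; but this follows from the vertex-disjointness of the path-connected components noted at the outset, together with the CSC property that each edge belongs to exactly two faces.
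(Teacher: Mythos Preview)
Your proof is correct and follows essentially the same approach as the paper: both observe that each path-component of $\partial_2(F')$ is itself a $1$-cycle, use $H_1(K)=0$ to write one component as $\partial_2$ of some $F_1\subseteq F'$, and then invoke copath-connectedness to produce a contradiction. Your formulation via a single face-adjacency crossing edge between $F_1$ and $F'\setminus F_1$ is a cleaner packaging of the paper's inductive walk along an explicit copath from $S_1$ to $S_2$.
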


	\begin{proof}
		Assume by contradiction that $\partial_2(F')$ is not path connected. Let $S_1$ and $S_2$ be 2 distinct path connected components of $\partial_2(F')$. Observe that $S_1\in Ker(\partial_1)$. Indeed, for every $v\in V$, 
		$|E_v\cap S_1| = 0 mod 2$ since
		\[
		E_v\cap S_1 = \sum_{f\in F':v\in V_f} E_v\cap \partial_2(f)
		\]
		and for any $f\in F'$, $|E_v\cap \partial_2(f)|$ is either 0 or 2, since by definition \ref{def:CSC}, there are no self-edges. 
		Therefore, $S_1\in Ker(\partial_1)$ and since by assumption $H_1(K)=0$, it follows that $S_1\in Im(\partial_2)$ and we can write $S_1=\sum_{f\in H\subseteq F'} \partial_2(f)$ for some subset $H$ of faces in $F'$. Now let $e_1\in S_1$ and $e_2\in S_2$. By assumption, $K$ is copath connected, hence we can find a copath $(e_1=x_1,x_2,...,x_k=e_2)$ where there is a unique face $f_i$ connecting $x_i$ to $x_{i+1}$ for every $1\leq i\leq k-1$. The uniqueness of $f_i$ stems from the fact that the intersection of 2 faces in a polygonal complex is alway a single edge. Since $x_1\in \partial_2(F')$, we must have $f_1\in K$. But observe that $x_2\in f_1$ and $x_2\notin S_1$ (since $x_2$ belongs to 2 faces). It follows that $f_2\in H$: indeed, if $f_1\in H$ and $f_2\notin H$ then $x_2\in S_1$ and we get a contradiction. By following the same inductive argument, we get that $f_i\in H$ for every $1\leq i\leq k-1$. But then it immediatly follows that $x_k=e_2$ is in $S_1$ since $f_{k-1}\in H$ and $e_2\in S_2$ by assumption. Therefore, $S_1$ and $S_2$ are copath connected in $K$ and hence also path connected in $K$, contradicting the assumption that $S_1$ and $S_2$ are distinct connected component $\partial_2(F')$.
	\end{proof}
	
	We are ready to prove Theorem \ref{thm:getAround}:
	
	\begin{proof} [proof of Theorem \ref{thm:getAround}]
		Let $K=K(x_0, r+1)=(V',E',F')$. Define $B=\widehat{Ball}(x_0,r)$ and $C=E'\backslash(\partial_2(F'))$. Note $B\subseteq C$. Indeed, let $x\in B$: since $\widehat{d}(x_0,x)\leq r$, there exist a copath $(x_0=e_1,...,e_k=x)$ of size  $k\leq r+1$. Let $f\in F$ such that $e_{k-1}\in f$ and $e_k\in f$. Obviously, for all other edges $e'\in f$, it also holds that $\widehat{d}(x_0,e')\leq r\leq r+1$ so that $f\in F'$, and $x\in E_{V_{F'}}= E'$. Let $f'\in F$ be the face satisfying $f'\neq f$ and $x\in f'$. For every $y\in f'$, $(x_0=e_1,...,e_k=x,y)$ is a copath of size $k+1\leq r+2$ and therefore, $\widehat{d}(x_0,y)\leq r$. We conclude that $f'\in F'$, and since we already proved that $f\in F'$, it follows that $x\notin \partial_2(F')$.
		Similarly, $x\notin \partial^0(V_{F'})$ since for any vertex $v\in x$, $v\in V_f\subseteq V_{F'}$. \\
		Now let $\gamma=(e=e_1,...,e_k=f)$. We can partition $\gamma$ into $\gamma=\bigcup_{i=1}^s\gamma_i$ where $\gamma_i\cap \gamma_j=\emptyset$ for $i\neq j$, $\gamma_i=(e_{t_{i}},...,e_{t_i+k_i})$, $t_{i+1}=t_i+k_i+1$ in such a way that for every even i, $\gamma_i\subseteq C$ and for every odd i, $\gamma_i\cap C=\emptyset$. Indeed, since $e,f\notin \widehat{Ball}(x_0,r+1)$, we have $\widehat{d}(x_0,e), \widehat{d}(x_0,f)\geq r+2$ and $e,f\notin C$ and therefore $s$ is odd and the partition defined above is well defined.
		Let $2\leq i\leq s-1$ be an even index. Since  $e_{t_{i}-1}\in\gamma_{i-1}$, $e_{t_{i}-1}\notin C$ and on the other hand,  $e_{t_{i}}\in C$. 
		
		Since $G$ is a $CSC$, we can order the set of edges adjacent to the common vertex of $e_{t_{i}-1}$ and $e_{t_{i}}$, and walk from  $e_{t_{i}-1}$ and $e_{t_{i}}$ through connecting faces. Since $e_{t_{i}-1}\notin C$, it is not connected to a face in $F'$ but  $e_{t_{i}}\in C$ is connected to at least one face from $F'$. Therefore, there must be some edge in the ordering between those 2 edges connected to exctally one face of $F'$, i.e is a boundary of $F'$.
		
		Therefore $e_{t_{i}}\cap x_i\neq \emptyset$ for some $x_i\in \partial_2(F')$. Similarly, we get $e_{t_i+k_i}\cap x'_i\neq \emptyset$ for some $x'_i\in \partial_2(F')$ . From lemma \ref{lem:ConnectedBoundary}, $\partial_2(F')$ is path connected. Hence, there exist a path $\theta_i\in \Gamma_{x_i,x_i'}$ such that $\theta_i\subseteq \partial_2(F')$. But $\gamma_i+\theta_i\subseteq E'$ is a closed path in $K$ and therefore, is a 1-cycle. By assumption, K is simply connected so $H_1(K)=0$ and $\gamma_i+\theta_i=\partial_2(J_i)$ for some subset $J_i\subseteq F'$.
		Now define:
		\[
		\gamma'= \sum_{i=1}^{\frac{s+1}{2}} \gamma_{2i-1} + \sum_{i=1}^{\frac{s-1}{2}} \theta_{2i}
		\] 
		We get:
		\begin{eqnarray}
		\gamma+\gamma' & = &\sum_{i=1}^{s} \gamma_i + \sum_{i=1}^{\frac{s+1}{2}} \gamma_{2i-1} + \sum_{i=1}^{\frac{s-1}{2}} \theta_{2i}\\
		& = &\sum_{i=1}^{\frac{s-1}{2}} \gamma_{2i} + \sum_{i=1}^{\frac{s-1}{2}} \theta_{2i}\\
		& = & \sum_{i=1}^{\frac{s-1}{2}} \gamma_{2i} + \theta_{2i}\\
		& = & \sum_{i=1}^{\frac{s-1}{2}} \partial_2(J_i)\\
		& = &  \partial_2(\sum_{i=1}^{\frac{s-1}{2}} J_i)
		\end{eqnarray}
		and we conclude that $\gamma'\in[\gamma]$.\\
		Finally, for every odd i, $\gamma_i\cap C=\emptyset$ and since $B\subseteq C$, we get $\gamma_i\cap B=\emptyset$. On the other hand, for every even i, $\theta_{i}\subset\partial_2(F')$ and since $C\cap \partial_2(F')=\emptyset$, we also get $\theta_i\cap B=\emptyset$. Therefore, 
		\[
		\gamma'\cap B = \sum_{i=1}^{\frac{s+1}{2}} (\gamma_{2i-1} \cap B) + \sum_{i=1}^{\frac{s-1}{2}} (\theta_{2i} \cap B)=\emptyset
		\]
		This concludes the proof of Theorem \ref{thm:getAround}
	\end{proof}

	\subsection{The geometrically local case}
	We will now prove Theorem \ref{thm:NonGTrivSStates} providing a lower bound on the circuit depth
	when the circuit is {\it geometrically restricted}, when the underlying CSC is $o(log(n))-simply connected$.
	For that purpose we first prove a more general version of Lemma \ref{lem:SausageIntersection}, whose proof
	is essentially almost identical, except for making use of Theorem \ref{thm:getAround}:

	\begin{lemma}
		\label{lem:SausageIntersectionCSC}
		Let G be a CSC and let $U$ be 
		a geometrically-local circuit, whose gates each act on qubits of distance
		at most $c$ apart, and whose depth is at most $d$, satisfying $U\ket{0^n}=\ket{\psi}\in C_G$.
		Let A and B be the lower and upper effective supports with respect to $U$ 
		for some $e,f\in E$ and $\gamma\in \Gamma_{e,f}$. Assume G is $c\cdot d$-simply connected. Then for all n, $A\subseteq \widehat{Ball}(e, cd)\cup \widehat{Ball}(f, cd)$ and $B\subseteq \widehat{Ball}(e, 2cd)\cup \widehat{Ball}(f, 2cd)$
	\end{lemma}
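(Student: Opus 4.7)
The plan is to mirror the proof of Lemma \ref{lem:SausageIntersection} for the cube, replacing the elementary topological fact that removing a small ball from the cube leaves it connected with Theorem \ref{thm:getAround}, which provides exactly the analogous ``path-avoiding-a-ball'' statement in the general $r$-simply connected CSC setting.

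First I would fix an arbitrary representative $\gamma' \in [\gamma]$ and reproduce the light-cone bound from the cube proof: since $U$ has depth $d$ and each gate acts on qubits at pairwise distance at most $c$, every qubit in the lower light cone $L^\downarrow(\gamma')$ lies within $\widehat{d}$-distance $cd$ of some edge of $\gamma'$; that is, $L^\downarrow(\gamma') \subseteq \{e' \in E \mid \widehat{d}(e', \gamma') \leq cd\}$. This step is purely combinatorial and uses nothing beyond the definition of a geometrically local circuit, so it carries over verbatim to arbitrary CSCs.

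Next, to establish $A \subseteq \widehat{Ball}(e, cd) \cup \widehat{Ball}(f, cd)$, I would pick an arbitrary edge $e'$ with $\widehat{d}(e, e') > cd$ and $\widehat{d}(f, e') > cd$ and show $e' \notin A$. By the light-cone bound above, it is enough to exhibit a path $\gamma'' \in [\gamma]$ disjoint from $\widehat{Ball}(e', cd)$, since then $e' \notin L^\downarrow(\gamma'')$. This is precisely where the $r$-simple connectedness hypothesis is used: the endpoints $e, f$ lie outside $\widehat{Ball}(e', cd)$, and the assumption that $G$ is $cd$-simply connected ensures that the relevant subcomplex $K(e', cd+1)$ is simply connected, so Theorem \ref{thm:getAround} applied with $x_0 = e'$ and $r = cd$ produces exactly such a $\gamma''$. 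The containment of $B$ follows immediately from the containment of $A$: since $B = L^\uparrow(A)$ and each layer of $U$ enlarges the upper light cone by at most a $c$-neighborhood, every edge of $B$ sits within $\widehat{d}$-distance $cd$ of some edge of $A$, hence inside $\widehat{Ball}(e, 2cd) \cup \widehat{Ball}(f, 2cd)$.

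The only nontrivial ingredient is Theorem \ref{thm:getAround}, which is where all the topological work is hidden; once that theorem is in hand, the argument above is a direct transcription of the cube proof. The conceptual obstacle that $r$-simple connectedness is designed to overcome is that in a general CSC a bottleneck could in principle force every path in $[\gamma]$ to pass through a small ball around $e'$, which would break the argument; requiring the local subcomplexes $K(x_0, r+1)$ to be simply connected rules out precisely this pathology, which is exactly why the hypothesis appears in the lemma.
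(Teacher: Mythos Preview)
Your proposal is correct and follows essentially the same argument as the paper's own proof: both invoke the geometric light-cone bound, pick an edge $e'$ outside the two balls, apply Theorem \ref{thm:getAround} (using the $cd$-simple connectedness hypothesis) to produce a $\gamma'' \in [\gamma]$ avoiding $\widehat{Ball}(e',cd)$, and then derive the bound on $B$ from the bound on $A$ via $B = L^\uparrow(A)$. Your write-up is in fact slightly more careful about consistently using the $\widehat{d}$-metric and about naming which subcomplex $K(e', cd+1)$ must be simply connected for Theorem \ref{thm:getAround} to apply.
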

	
	\begin{proof}
		
		We start in the same fashion as in the proof of lemma \ref{lem:SausageIntersection}.
		Let $\gamma'\in [\gamma]$, and we have 
		
		\begin{equation}\label{eq:coneCSC}
		L^{\downarrow}(\gamma')\subseteq \{e'\in E|d(e',\gamma')\leq c\cdot d\}
		\end{equation}

		Let $e'\in E$ and assume that
		$e'\notin A\subseteq \widehat{Ball}(e, cd)\cup \widehat{Ball}(f, cd),$
		namely         that $d(e,e')>cd$ and $d(f,e')>cd$. 
		Observe that $\widehat{ball}(e',cd)$ doesn't contain neither $e$ or $f$. 
		Therefore, from the assumption of $c\cdot d$-simple connectedness and from Theorem \ref{thm:getAround},
		there exists some path $\gamma''\in [\gamma]$ such that $d(\gamma'',e')>cd$ and then from Equation \ref{eq:coneCSC}, 
		$e'\notin L^{\downarrow}(\gamma'')$, and hence it is not in $A$. 
		This proves the first part of the claim.
		Since $B= L^\uparrow(A)$, by definition any qubit in $B$ is within distance 
		at most $cd_n$ from any qubit in $A$ and thus $B$ is contained in the set of 
		all qubits at distance at most $cd_n$ from 
		$\widehat{Ball}(e, cd_n)\cup \widehat{Ball}(f, cd_n)$; this set is contained in 
		$\widehat{Ball}(e, 2cd_n)\cup \widehat{Ball}(f, 2cd_n)$ and the second part of the claim follows.
	\end{proof}
	
	As we stated in the last section, Fact \ref{fact:area}
	doesn't hold in the most general non-euclidean case, as can be seen in figure \ref{fig:hyperbolic}. What we have instead is:
	\begin{lemma}
		\label{lem:EdgesInBall}
		Let G be a simple graph with n edges and bounded degree $deg(G)\leq D$. Then for every edge e in G, $|Ball(e,n)|\leq D^{n+1}$
	\end{lemma}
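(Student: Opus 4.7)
The proof is a standard BFS-style argument from the edge $e$, proceeding by induction on $n$. The base case is immediate: $Ball(e,0) = \{e\}$, so $|Ball(e,0)| = 1 \leq D$. For the inductive step, I would observe that every edge $e' \in Ball(e,n+1)$ either already lies in $Ball(e,n)$ or shares a vertex with some edge of $Ball(e,n)$, so $Ball(e,n+1)$ is contained in the set of edges incident to the vertex support $V_n = V_{Ball(e,n)}$. Since each vertex in the graph has at most $D$ incident edges, this controls $|Ball(e,n+1)|$ in terms of $|V_n|$, which is in turn bounded by $2|Ball(e,n)|$ (each edge contributes at most two endpoints). Iterating these inequalities gives exponential growth of $|Ball(e,n)|$ in $n$.

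To get the cleaner exponent $D^{n+1}$ claimed in the statement, I would set up the induction through a rooted BFS tree of edges with root $e$: every non-root edge $e'$ at depth $k+1$ is assigned a unique parent edge $e''$ at depth $k$, linked by the shared vertex through which $e'$ was first discovered. The BFS-children of $e'$ are then the newly-discovered edges incident to either of its two endpoints, and a careful accounting shows that one endpoint of $e'$ is "consumed" by the parent link, so the branching factor at each node is at most the maximum degree $D$ (rather than the naive $2(D-1)$). Summing the geometric series $\sum_{k=0}^{n} D^{k}$ and absorbing constants into the exponent yields the stated bound.

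The main obstacle here is purely bookkeeping: the naive BFS bound $(2(D-1))^{n+1}$ is too lossy for the cleaner statement, and one has to be careful to avoid double-counting when showing that each edge contributes at most $D$ genuinely new edges one layer deeper. Once this is set up correctly, the induction closes immediately. This lemma is only intended as a coarse, easy-to-invoke upper bound on the area-to-radius ratio in a bounded-degree CSC, replacing the quadratic bound of Fact~\ref{fact:area} that was available only in the Euclidean cube setting; as noted in the paper, this exponential blow-up is unavoidable on hyperbolic complexes and is precisely the source of the weaker $\Omega(\log\log\log(N))$ lower bound in Theorem~\ref{thm:NonTrivSStates}.
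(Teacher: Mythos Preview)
Your proposal is correct and takes essentially the same approach as the paper: decompose the ball into shells $C_k = \{g : d(e,g) = k\}$, bound $|C_k| \leq 2D^k$ via the observation that in a shortest edge-path one endpoint of each edge is ``used up'' by the incoming step (exactly your BFS-tree parent-link argument), and sum the resulting geometric series. The paper's proof is terser---it simply asserts $|C_k| \leq 2D^k$ as a ``simple union bound'' and then sums---so your write-up actually supplies the branching-factor justification that the paper leaves implicit.
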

	
	\begin{proof}
		Define $C_n=\{g\in E|d(e,g)=n\}$:\\
		Observe that $|C_n|\leq 2D^n$ from a simple union bound.
		Therefore we get:
		\begin{eqnarray}
		|Ball(e,n)| & = & |\cup_{i=0}^n C_i|\\
		& = & 1 + \sum_{i=1}^n |C_i|\\
		& \leq & 1+ \sum_{i=1}^n 2D^i\\
		& \leq & 1+ 2D\cdot \frac{D^{n}-1}{D-1}\\
		& \leq &  1+ D\cdot (D^{n}-1)\\
		& \leq &  D^{n+1}\\
		\end{eqnarray}
	\end{proof}
	
	\begin{figure}
		\begin{center}
			\includegraphics[scale=0.3]{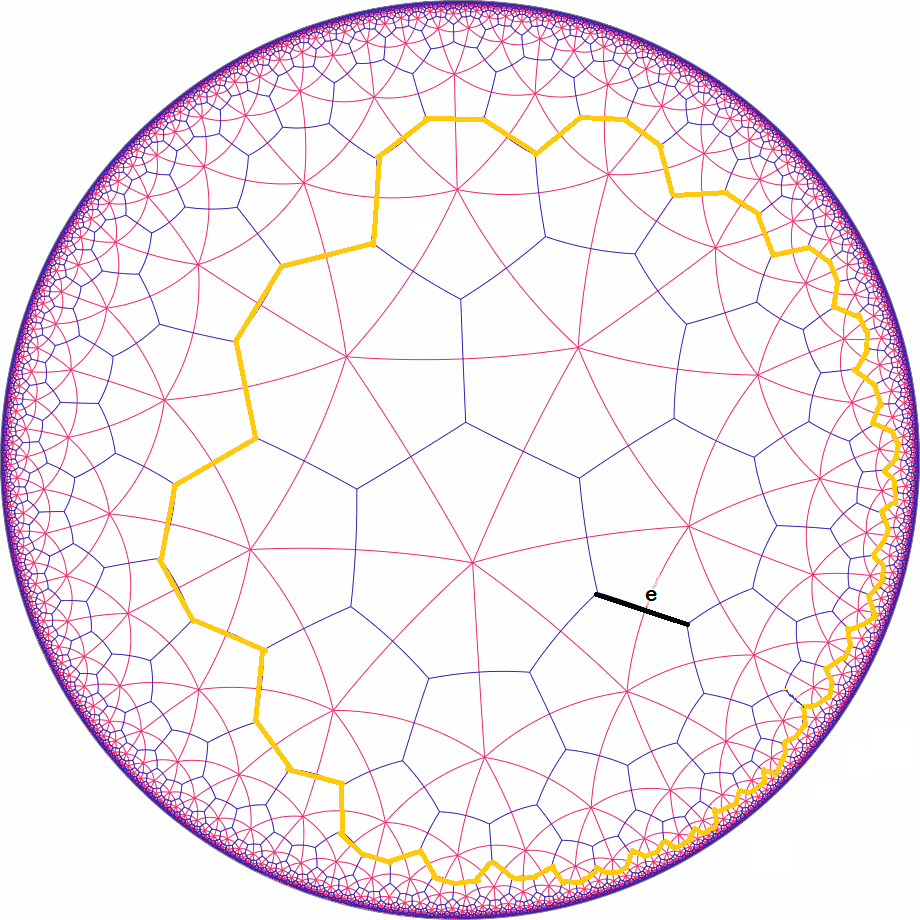} 
		\end{center}
		\caption{\label{fig:hyperbolic} The yellow path represents the circle of radius r=3 around the edge e. In hyperbolic geometry, both the circumference of a circle and the area it defines grow exponentially with the radius.}
	\end{figure}
	
	\begin{corollary}
		\label{cor:diamBound}
		Let G be a simple graph with n edges and bounded degree $deg(G)\leq D$. Then $diam(G)\geq log_D(n)-1$
	\end{corollary}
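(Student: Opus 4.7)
The plan is to derive the corollary directly from Lemma \ref{lem:EdgesInBall} by a simple contrapositive/pigeonhole argument: if the diameter is too small, the ball around any edge cannot be large enough to contain all $n$ edges of the graph.

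More precisely, I would fix any edge $e\in E(G)$ and let $d=diam(G)$. By the very definition of the diameter (Definition \ref{def:BallDiam}), every edge $e'\in E(G)$ satisfies $d(e,e')\leq d$, so $E(G)\subseteq Ball(e,d)$, and hence $n=|E(G)|\leq |Ball(e,d)|$. Applying Lemma \ref{lem:EdgesInBall} with $n$ replaced by $d$ (the lemma gives $|Ball(e,d)|\leq D^{d+1}$ for the radius parameter $d$), we obtain the chain of inequalities
\[
n \;\leq\; |Ball(e,d)| \;\leq\; D^{d+1}.
\]
Taking $\log_D$ on both sides yields $\log_D(n)\leq d+1$, i.e., $d\geq \log_D(n)-1$, which is exactly the claimed bound.

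There is essentially no obstacle here, since all the real work was done in the previous lemma; the only step that requires a touch of care is making sure that Lemma \ref{lem:EdgesInBall}'s symbol $n$ (which there denotes the radius of the ball) is not confused with the $n$ of this corollary (which denotes the number of edges). Once the naming is disambiguated, the proof is a two-line calculation applying the ball-size bound to a ball of radius equal to the diameter.
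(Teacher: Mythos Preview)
Your proof is correct and essentially identical to the paper's own argument: fix an edge $e$, observe $E=Ball(e,d)$ where $d=diam(G)$, apply Lemma~\ref{lem:EdgesInBall} to get $n\leq D^{d+1}$, and solve for $d$. Your remark about disambiguating the two uses of the symbol $n$ is a helpful clarification that the paper glosses over.
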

	
	\begin{proof}
		Let  $d=diam(G)$ and let $e\in E$. We have $E=Ball(e,d)$ and from the previous lemma, we get $|E|=n\leq D^{d+1}$. The corollary follows.
	\end{proof}
	
	We now give a formal proof of theorem \ref{thm:NonGTrivSStates}:
	
	\begin{proof} [proof of Theorem \ref{thm:NonGTrivSStates}]
		For each n, choose 2 edges $e,f$ verifying $d(e,f)=diam(G_i)$, and choose some path $\gamma\in\Gamma_{e,f}$. From corollary \ref{cor:diamBound} we have $d(e,f)>log(N_i)$. Now assume by contradiction that there exists a depth $d_i\neq \Omega(log(N_i))$ geometrically local quantum circuit $U_i$ such that $U_i\ket{0}^N_i = \ket{\psi_i}$. We can extract a subsequence $d_{i_j}=o(log(N_{i_j}))$. From lemma \ref{lem:SausageIntersectionCSC}, since $G_{i_j}$ is $O(log(N_{i_j}))$-simply connected, we get $B=B_{[\gamma],U_{i_j}}\subseteq \widehat{Ball}(e, 2cd_{i_j})\cup \widehat{Ball}(f, 2cd_{i_j})$. But for large enough j, $\widehat{Ball}(e, 2cd_{i_j})\cap \widehat{Ball}(f, 2cd_{i_j})=\emptyset$, and therefore, B doesn't contain any path in $[\gamma]$ contradicting lemma \ref{lem:LargeB}.
	\end{proof}
	
	\subsection{The non geometrically local case}
	Here we prove our main result, Theorem \ref{thm:NonTrivSStates}, providing a lower bound on circuit depth
	for the non-geometrical case, for general CSCs with sufficient $r$-simple connectedness. 
	We first prove a modified version of Lemma \ref{lem:blockage} which generalizes it to any $r$-simply connected CSCs for
	suitable $r$.

	\begin{lemma}
		\label{lem:contract}
		Let $G=(V,E,F)$ be a CSC. Let $X\subseteq E$ be copath connected and $\gamma$-separating.
		Define $r=|X|$ and assume $G$ is $r+1$-simply connected, then
		$X\cap\left(\widehat{Ball}(e,r)\cup \widehat{Ball}(f,r)\right)\neq\emptyset$.
	\end{lemma}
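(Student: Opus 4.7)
The plan is to imitate the proof of Lemma~\ref{lem:blockage} line-by-line, replacing the Euclidean-specific Claim~\ref{cl:pathConnectedExcision} (removing a small ball from the cube leaves it path-connected) by the general excision result Theorem~\ref{thm:getAround}, which is exactly the tool tailored for the $r$-simply connected setting.

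First, I pick any $x_0 \in X$. Since $X$ is copath-connected and contains $r = |X|$ edges, any $x \in X$ is joined to $x_0$ by a copath lying inside $X$ of length at most $r$, so $\widehat{d}(x_0,x) \leq r-1$ and consequently $X \subseteq \widehat{Ball}(x_0, r-1) \subseteq \widehat{Ball}(x_0, r)$. Next, I argue by contradiction: suppose $X \cap (\widehat{Ball}(e,r) \cup \widehat{Ball}(f,r)) = \emptyset$. Since $x_0 \in X$, this forces $\widehat{d}(e,x_0) > r$ and $\widehat{d}(f,x_0) > r$, i.e.\ $e, f \notin \widehat{Ball}(x_0, r)$.

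Now I invoke Theorem~\ref{thm:getAround} with center $x_0$, radius $r$, and the given path $\gamma$. Its hypothesis requires that $e, f \notin \widehat{Ball}(x_0, r)$ (just established) and that $K = K(x_0, r+1)$ be simply connected. The latter is precisely the content of the $(r+1)$-simple connectedness assumption on $G$ applied at $x_0$ (Definition~\ref{def:AlphaContr}). The conclusion of the theorem gives a path $\gamma' \in [\gamma]$ with $\gamma' \cap \widehat{Ball}(x_0, r) = \emptyset$. Since $X \subseteq \widehat{Ball}(x_0, r)$, this yields $\gamma' \cap X = \emptyset$, contradicting that $X$ is $\gamma$-separating. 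The contradiction proves $X \cap (\widehat{Ball}(e,r) \cup \widehat{Ball}(f,r)) \neq \emptyset$.

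There is really no obstacle beyond correctly matching the quantitative hypotheses: one needs $|X| = r$ to ensure that copath-connectedness of $X$ forces it to sit inside $\widehat{Ball}(x_0, r)$, and one needs the simple-connectedness assumption at radius $r+1$ (rather than $r$) so that Theorem~\ref{thm:getAround} produces a homologous path that actually misses the entire ball containing $X$. All the topological difficulty of excising a small region in a non-Euclidean complex has already been absorbed into Theorem~\ref{thm:getAround} and Lemma~\ref{lem:ConnectedBoundary}, so here the argument reduces to the two-line bookkeeping above.
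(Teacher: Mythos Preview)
Your proof is correct and follows essentially the same approach as the paper: pick $x_0\in X$, use copath-connectedness to enclose $X$ in $\widehat{Ball}(x_0,r)$, assume for contradiction that $x_0$ is far from both $e$ and $f$, and then invoke Theorem~\ref{thm:getAround} (via the $(r{+}1)$-simple connectedness hypothesis) to produce a $\gamma'\in[\gamma]$ missing that ball and hence missing $X$. Your bookkeeping with strict versus non-strict inequalities is in fact slightly more careful than the paper's.
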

	
	\begin{proof}
		Let $x_{0}\in X$. Since $X$ is copath connected, for
		all $x\in X$, there exists a copath $\widehat{\gamma}\in\widehat{\Gamma}_{x_{0},x}$ such
		that $\widehat{\gamma}\subseteq X$. Thus, $\widehat{d}(x_{0},x)\leq|\widehat{\gamma}|-1 \leq r$
		and therefore, $X\subseteq  \widehat{Ball}(x_{0},r)$.
		Assume by contradiction that $X\cap\left(\widehat{Ball}(e,r)\cup \widehat{Ball}(f,r)\right)=\emptyset$.
		Hence, $\widehat{d}(e,x_{0}),\widehat{d}(f,x_{0})\geq r$ and
		so $e,f\notin \widehat{Ball}(x_{0},r)$.
		We can now use Theorem \ref{thm:getAround} to find a path $\gamma'$
		that satisfies $\gamma'\in[\gamma]$ and $\gamma'\cap \widehat{Ball}(x_{0},r)=\emptyset$.
		Since $X\subseteq \widehat{Ball}(x_{0},r)$ we also get 
		$\gamma'\cap X=\emptyset$ and that's
		a contradiction to the assumption that $X$ is $\gamma-separating$.
	\end{proof}

	Finally we turn to the proof of our main result:
	
	\begin{proof} [Proof of Theorem \ref{thm:NonTrivSStates}]
		This proof will be almost identical to the proof of Theorem \ref{thm:nonTrivCubeState}:
		Assume that $e$ and $f$ are 2 edges in $E_i$ such that $d(e,f)=diam(G_i)$.
		We first prove that $A\subseteq L^{\downarrow}(\widehat{Ball}(e,c^{d_i})\cup \widehat{Ball}(f,c^{d_i}))$.
		Indeed, assume by contradiction that
		there exists some edge $g\in A$ such that $g\notin L^{\downarrow}(\widehat{Ball}(e,c^{d_i})\cup \widehat{Ball}(f,c^{d_i}))$.
		From claim \ref{cl:LCaGammaSep}, 
		$L^{\uparrow}(g)$ is $\gamma-separating$ and from
		corollary \ref{cor:AisConnected}
		we may assume $L^{\uparrow}(g)$ is copath connected.\\
		Since each layer in U increases the size of the upper light cone of g by a multiplicative factor of at most c, we have $|L^{\uparrow}(g)|<c^{d_i}$. By assumption, $G_i$ is $f(N_i)$-simply connected for all $f=o(log(log(N_i)))$:
		
		Now if $d_i= \Omega(log(log(log(N_i))))$, we are done. 
		Assume otherwise. Then we can extract a subsequence $d_{i_j}=o(log(log(log(N_{i_j}))))$ and $|L^{\uparrow}(g)| < c^{d_{i_j}} = o(log(log(N_{i_j})))$. In this case, we can apply
		lemma \ref{lem:contract} to get: 
		\[
		L^{\uparrow}(g)\cap\left(\widehat{Ball}(e,c^{d_{i_j}})\cup \widehat{Ball}(f,c^{d_{i_j}})\right)\neq\emptyset
		\]
		Hence, $g\in L^{\downarrow}(L^{\uparrow}(g)\cap\left(\widehat{Ball}(e,c^{d_{i_j}})\cup \widehat{Ball}(f,c^{d_{i_j}})\right))\subseteq L^{\downarrow}(\widehat{Ball}(e,c^{d_{i_j}})\cup \widehat{Ball}(f,c^{d_{i_j}}))$
		and we get a contradiction.
		We deduce that $A\subseteq L^{\downarrow}(\widehat{Ball}(e,c^{d_{i_j}})\cup \widehat{Ball}(f,c^{d_{i_j}}))$.\\
		Let's bound the size of A from above:
		\begin{eqnarray}
		|A| & \leq & |L^{\downarrow}(\widehat{Ball}(e,c^{d_{i_j}})\cup \widehat{Ball}(f,c^{d_{i_j}})))|\\
		& \leq & c^{d_{i_j}}|\widehat{Ball}(e,c^{d_{i_j}})\cup \widehat{Ball}(f,c^{d_{i_j}}))|\\
		& \leq & c^{d_{i_j}}\left(|\widehat{Ball}(e,c^{d_{i_j}})|+|\widehat{Ball}(f,c^{d_{i_j}})|\right)\\
		& \leq & 2c^{d_{i_j}}\cdot D^{c^{d_{i_j}}+1}
		\end{eqnarray}
		where we used lemma \ref{lem:EdgesInBall} in the last inequality, and $D=max(deg(G),\widehat{deg}(G))$ (see Definition \ref{def:VertFaceDeg}) 
		Now, from Corollary \ref{cor:Asize}, we have $|A|> \frac{d(e,f)}{c^{d_{i_j}}} = \frac{diam(G)}{c^{d_{i_j}}}$ and from corollary \ref{cor:diamBound}, we get $|A| > \frac{log_D(N_{i_j})-1}{c^d_{i_j}}$. Combining those equations together we get:
		
		\[
		\frac{log_D(N_{i_j})-1}{c^{d_{i_j}}} < 2c^{d_{i_j}}\cdot D^{c^{d_{i_j}}+1}
		\]
		or:
		\[
		log_D(N_{i_j}) < 2c^{2d_{i_j}}\cdot D^{c^{d_{i_j}}+1}+1
		\]
		
		and it follows that $d_{i_j} \neq o(log_c(log_D(log_D(|N_{i_j}|))))$ leading to a contradiction.
	\end{proof}

	\section{Appendix A: trivial states are UGS of gapped Hamiltonians}
	\label{appA}
	
	Assume $\{\ket{\psi_n}\}$ is a familly of trivial states on n qubits. Then we can find a family of constant depth circuits $\{U_n\}$ such that $U_n\ket{0^n}=\ket{\psi_n}$. Let $H_n^i=U_n Z^i U_n^\dagger$ where $Z^i\triangleq z^i \otimes I_{[n]\backslash\{i\}}$ and $z^i$ is the z Pauli operator on the i-th qubit. Now, define the following Hamiltonian:
	
	\[
	K_n = - \sum_{i=1}^{n} H_n^i
	\]
	
	We will first show that $H_n^i$ act on a constant number of qubits, namely:
	
	\begin{lemma}
		\label{lem:lightConeExtension}
		Let U be a depth d quantum circuit acting on n qubits and let $P$ be an operator such that $|supp(P)|=k$. Then $|supp(U P U^\dagger)|\leq c^d k$ where $c$ is the maximal size of the support of any gate in $U$.
	\end{lemma}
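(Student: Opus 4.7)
The plan is to prove this by induction on the depth $d$ of the circuit, using the standard light-cone propagation argument. The base case $d=0$ is immediate since $U=I$, so $UPU^\dagger=P$ has support of size exactly $k \leq c^0 k$.

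For the inductive step, I would decompose $U = U_d \circ V$, where $V$ is a quantum circuit of depth $d-1$ and $U_d$ consists of a single layer of gates (each acting on at most $c$ qubits, with disjoint supports since they belong to the same time slice). By the induction hypothesis, $|\text{supp}(V P V^\dagger)| \leq c^{d-1} k$. Then I would argue that conjugating by $U_d$ can only enlarge the support by bringing in the remaining qubits of every gate that touches the current support: more precisely, if a gate $g$ in $U_d$ acts on a set $S_g$ of at most $c$ qubits and $S_g$ is disjoint from $\text{supp}(VPV^\dagger)$, then $g$ commutes with $VPV^\dagger$ and contributes nothing; otherwise, $g$ can enlarge the support by at most $|S_g| \leq c$ qubits.

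Since the gates in the single layer $U_d$ have pairwise disjoint supports, each qubit in $\text{supp}(VPV^\dagger)$ is touched by at most one gate, so summing over all gates that meet the support gives the bound $|\text{supp}(U_d V P V^\dagger U_d^\dagger)| \leq c \cdot |\text{supp}(VPV^\dagger)| \leq c \cdot c^{d-1} k = c^d k$, completing the induction.

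The only step requiring any care is the multiplicative accounting in the inductive step, namely the observation that disjoint-support gates in a single layer do not double-count qubits, so the factor gained per layer is exactly $c$ rather than something larger; this is essentially a bookkeeping issue and not a genuine obstacle, so I expect the whole proof to be short and routine.
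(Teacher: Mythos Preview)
Your proposal is correct and follows essentially the same approach as the paper: induction on the circuit depth, with the inductive step showing that conjugation by a single layer of disjoint-support gates enlarges the support by at most a multiplicative factor of $c$. The paper sets up the induction with explicit notation $W_i=U_i\cdots U_1$, $P_i=W_iPW_i^\dagger$, and $K_i=\mathrm{supp}(P_i)$, and bounds $|K_{i+1}|\leq c|K_i|$ via the same disjoint-support counting argument you describe.
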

	
	\begin{proof}
		Assume $U=U_d\cdot U_{d-1}\cdot ...\cdot U_1$ where each layer $U_i$ is a tensor product of unitaries $U_i=V_i^1\otimes ...\otimes V_i^{t_i}$ where each $V_i^j$ represents a quantum gate acting on a subset $S_{i,j}$ of size at most c and for $j\neq j'$, $S_{i,j}\cap S_{i,j'}=\emptyset$. 
		For $0\leq i\leq d$, define $W_i=U_i\cdot U_{i-1}\cdot ...\cdot U_1$ (assuming $W_0=I$). We now prove the lemma by showing by induction that the operators $P_i = W_i P W_i^\dagger$ are supported on $c^i k$ qubits.\\
		For $i=0$, we have $P_0 = W_0 P W_0^\dagger=P$, and by definition, $|Supp(P)|=k=c^0 k$.\\
		Now assume $Supp(P_i)=K_i$ where $|K_i|=c^i k$. Observe that $P_{i+1} =U_{i+1} P_i U_{i+1}^\dagger$. 
		Define 
		\[
		K_{i+1} = \bigcup_{S_{i+1,j}\cap K_i \neq \emptyset} S_{i+1,j}
		\] 
		We argue that $Supp(P_{i+1})=K_{i+1}$. Indeed, We can write:
		\[
		U_{i+1}=\bigotimes_{S{i+1,j}\cap K_i \neq \emptyset} V_{i+1}^j \bigotimes_{S{i+1,j}\cap K_i = \emptyset} V_{i+1}^j
		\] 
		Define $U_{i+1}^{(0)} = \bigotimes_{S{i+1,j}\cap K_i \neq \emptyset} V_{i+1}^j$ and $U_{i+1}^{(1)} = \bigotimes_{S{i+1,j}\cap K_i = \emptyset} V_{i+1}^j$.
		We can write $P_i = \tilde{P_i} \otimes I_{[n]\backslash K_i}$ where $\tilde{P_i}$ acts on qubits in $K_i$ only. Therefore:
		\begin{eqnarray}
		P_{i+1} & = & U_{i+1} P_i U_{i+1}^\dagger \\
		& = & (U_{i+1}^{(0)} \otimes U_{i+1}^{(1)}) (\tilde{P_i} \otimes I_{[n]\backslash K_i}) (U_{i+1}^{(1)\dagger} \otimes U_{i+1}^{(0)\dagger})\\
		& = & (U_{i+1}^{(0)} \tilde{P_i} U_{i+1}^{(0)\dagger}) \otimes (U_{i+1}^{(1)} I_{[n]\backslash K_i}  U_{i+1}^{(0)\dagger})\\
		& = & (U_{i+1}^{(0)} \tilde{P_i} U_{i+1}^{(0)\dagger}) \otimes I_{[n]\backslash K_{i+1}}
		\end{eqnarray}
		
		This shows that indeed, $Supp(P_{i+1})=K_{i+1}$. Finally we are left to show that $|K_{i+1}|\leq c\cdot |K_i|$. Indeed, observe that:
		\begin{eqnarray}
		|K_{i+1}| & = & |\bigcup_{S_{i+1,j}\cap K_i \neq \emptyset} S_{i+1,j}|\\
		& = & \sum_{S_{i+1,j}\cap K_i \neq \emptyset} |S_{i+1,j}|\\
		& \leq & \sum_{l\in K_i,l\in S_{i+1,j}} |S_{i+1,j}|\\
		& \leq & |K_i|\cdot max_j |S_{i+1,j}|\\
		& \leq & c\cdot |K_i|
		\end{eqnarray}
		
	\end{proof}
	
	We now turn to the proof of the main theorem:
	
	\begin{theorem}
		$\{K_n\}$ is a family of non degenerate gapped comuting local Hamiltonians. Furthermore, for all $n\geq 1$, $\ket{\psi_n}$ is the unique ground state of $K_n$
	\end{theorem}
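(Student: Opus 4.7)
The proof plan is to reduce everything to the trivial structure of $-\sum_i Z^i$ via the unitary conjugation by $U_n$, and to extract locality from the already-established light cone bound in Lemma \ref{lem:lightConeExtension}.

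First I would establish locality of each term $H_n^i = U_n Z^i U_n^\dagger$. Since $Z^i$ has support of size $1$ and $U_n$ has depth $d = O(1)$ with gates of locality $c = O(1)$, Lemma \ref{lem:lightConeExtension} immediately gives $|\mathrm{supp}(H_n^i)| \leq c^d = O(1)$, so $K_n$ is a local Hamiltonian.

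Next I would verify that $\{H_n^i\}_{i=1}^n$ pairwise commute. This is immediate from the fact that the $\{Z^i\}_i$ pairwise commute and conjugation by the single unitary $U_n$ is an algebra homomorphism: $[H_n^i, H_n^j] = U_n [Z^i, Z^j] U_n^\dagger = 0$. Therefore $K_n$ is a commuting local Hamiltonian.

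The ground state, uniqueness, and gap all follow from the observation that $K_n = U_n \bigl( -\sum_{i=1}^n Z^i \bigr) U_n^\dagger$, so $K_n$ is unitarily equivalent to the diagonal Hamiltonian $D_n = -\sum_i Z^i$. The spectrum of $D_n$ is $\{-n + 2k : 0 \leq k \leq n\}$ with unique ground state $\ket{0^n}$ at energy $-n$ and first excited energy $-n+2$. Pushing through by $U_n$, the unique ground state of $K_n$ is $U_n \ket{0^n} = \ket{\psi_n}$ with energy $-n$, and the spectral gap equals $2$, a constant independent of $n$. Checking the ground-state property directly: $H_n^i \ket{\psi_n} = U_n Z^i U_n^\dagger U_n \ket{0^n} = U_n Z^i \ket{0^n} = U_n \ket{0^n} = \ket{\psi_n}$, so $K_n \ket{\psi_n} = -n \ket{\psi_n}$.

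There is no real obstacle here; the content of the theorem is essentially a consequence of the fact that unitary conjugation preserves the spectrum, commutation relations, and multiplicity of each eigenvalue, combined with the light cone bound already proved in Lemma \ref{lem:lightConeExtension} to obtain locality. The only point requiring any care is to cite Lemma \ref{lem:lightConeExtension} correctly with $k=1$ and $d = d_n = O(1)$, ensuring that the resulting locality bound $c^{d_n}$ is indeed a constant; the constancy of $d_n$ comes from the assumption that $\{\ket{\psi_n}\}$ is a family of trivial states.
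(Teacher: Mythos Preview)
Your proposal is correct and follows essentially the same approach as the paper: both arguments deduce locality from Lemma~\ref{lem:lightConeExtension}, commutativity from $[H_n^i,H_n^j]=U_n[Z^i,Z^j]U_n^\dagger=0$, and the ground state, uniqueness, and gap from the unitary equivalence $K_n=U_n(-\sum_i Z^i)U_n^\dagger$. Your phrasing via the full spectrum $\{-n+2k\}$ of $-\sum_i Z^i$ is slightly more streamlined than the paper's term-by-term energy computation, but the content is identical.
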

	
	\begin{proof}
		Let $E_n= \braket{\psi_n}{K_n|\psi_n} $.
		For all $1\leq i\leq n$, $||Z_i||=1$ (the operator norm of $Z_i$), so $||H_n^i||=1$ since conjugation by a unitary operator preserves the norm. Therefore,
		\[
		||K_n||\leq \sum_{i=1}^{n} ||H_n^i|| = n
		\]. Observe that
		
		\begin{eqnarray}
		E_n & = & -\braket{\psi_n}{\sum_{i=1}^{n} H_n^i|\psi_n} \\
		& = & -\sum_{i=1}^{n} \braket{\psi_n}{H_n^i|\psi_n}\\
		& = & -\sum_{i=1}^{n} \braket{\psi_n}{U^i Z^i U^{i\dagger}|\psi_n}\\
		& = &  -\sum_{i=1}^{n} \braket{0^n}{Z_i| 0^n}\\
		& = & -n
		\end{eqnarray}
		
		Therefore, $\ket{\psi_n}$ is in the ground space of $K_n$.  \\
		On the other hand assume that another state $\ket{\psi_n'}$ also had energy $E_n'=\braket{\psi_n'}{K_n|\psi_n'}=-n$. This can only happen if each of the n terms $H_n^i$ has energy $E_n^i = \braket{\psi_n'}{H_n^i| \psi_n'}=1$ or equivalently, $\braket{\phi_n}{Z_i| \phi_n}=1$ where we define $\ket{\phi_n}=U_n^\dagger \ket{\psi_n'}$. We conclude that for every $i$, $\ket{\phi_n}$ is a +1 eigenstate of $Z_i$. Therefore,  $\ket{\phi_n}=\ket{0^n}$ and by applying the unitary $U_n$ on both sides we  get $\ket{\psi_n'}=U_n \ket{\phi_n}=U_n \ket{0^n}= \ket{\psi_n}$. \\
		Observe that the terms $\{H_n^i\}_i$ are pairwise commuting since:
		\[
		[H_n^i,H_n^j]=[Z_i,Z_j]=(-1)^{\delta_{i,j}}
		\] 
		We now prove that $\{K_n\}$ is gapped: Indeed, the second energy level of $\{K_n\}$ is met when exactly one local term $H_n^i$ has energy -1. In such a case, 
		`
		\[
		E_n^{(2)} =- [(n-1)\cdot 1 + 1\cdot (-1)]= n-2 
		\]
		and we get a constant spectral gap $\Delta E=n-(n-2)=2$.\\
		
	\end{proof}

	\section{Appendix B: Non-trivial unique groundstates}
	\label{appB}
	
	Here we show a generic construction (proposed to us By I. Arad \cite{arad}) 
	of a non-gapped Hamiltonian with a unique non-trivial groundstate.\\

	We use the quantum verifier to Hamiltonian construction of Kitaev in \cite{KitaevQMA} applied on a family of circuits $\{U_i\}$ generating the i-th CAT state:
	\[
	\ket{CAT^+}=\frac{1}{\sqrt{2}}(\ket{0}^n+\ket{1}^n)
	\]
	The n-th circuit $U_n$ can be described as a product of $n^2$ layers of unitaries $U_n=V_n^{n^2}\cdot ...\cdot V_n^1$ 
	where we define:
	\[
	\forall i,1\leq i\leq n: \: V_n^i=CNOT_{i,i+1}\otimes I_{[n]\backslash\{i,i+1\}}
	\]
	\[
	\forall i,n+1\leq i\leq n^2: \: V_n^i=I_[n]
	\]
	
	Observe that this is a depth $n^2$ circuit that generates the $\ket{CAT^+}$ state on the input $\ket{init}_n=\ket{+}\otimes \ket{0}^{n-1}$
	The actual state we will be looking at is the computation history state defined on $n+log(n^2)$ qubits as:
	\[
	\ket{\varphi_n}=\frac{1}{n}\sum_{i=1}^{n^2} U_n^i \ket{init_n} \otimes \ket{i}
	\]
	
	where $U_n^i=V_n^i\cdot ...\cdot V_n^1$.
	This state is the unique ground state of the $O(1)$-local Hamiltonian:
	\[
	H= H_{out} + J_{in} H_{in} + J_{prop} H_{prop}
	\]
	defined in chapter 4 of \cite{KitaevQMA}. 
	It is straigntforward from Kitaev's construction that $H$ has unique ground state $\ket{\varphi_n}$ (since it is the only state satisfying the local constraints of $H$ all together, describing the unique computation of $U_n$ on $\ket{init}_n$).\\
	Kitaev also proves that the spectral gap of $H$ satisfies $\Delta(H)=O(\frac{1}{n^2})$ so that $H$ is not gapped.\\
	We show that $\ket{\varphi_n}$ is also non trivial: To see that, we look at the extended CAT state: $\ket{\tilde{CAT_n}^+} = \ket{CAT_n^+}\otimes \ket{\theta_{n}}$ where $\ket{\theta_{n}}$ is the state $\ket{\theta_{n}} = \frac{1}{\sqrt{n^2-n}} \sum_{i=n+1}^{n^2} \ket{i}$ on $log(n^2)$ qubits.
	\begin{eqnarray}
	|\braket{\varphi_n}{\tilde{CAT}^+}| & = & |\frac{1}{n} \sum_{i=1}^{n^2} (\bra{init_n} U_n^{i\dagger} \otimes \bra{i}) (\ket{CAT_n^+}\otimes \ket{\theta_{n}})|\\
	& \geq & |\frac{1}{n} \sum_{i=n+1}^{n^2} (\bra{init_n} U_n^{i\dagger} \otimes \bra{i}) (\ket{CAT_n^+}\otimes \ket{\theta_{n}})| - |\frac{1}{n} \sum_{i=1}^{n} (\bra{init_n} U_n^{i\dagger} \otimes \bra{i}) (\ket{CAT_n^+}\otimes \ket{\theta_{n}})|\\
	& = & \frac{1}{n} |\sum_{i=n+1}^{n^2} (\bra{init_n} U_n^{\dagger} \otimes \bra{i}) (\ket{CAT_n^+}\otimes \ket{\theta_{n}})|\\
	& = & \frac{1}{n} |(\bra{init_n} U_n^{\dagger} \otimes \sum_{i=n+1}^{n^2}\bra{i}) (\ket{CAT_n^+}\otimes \ket{\theta_{n}})|\\
	& = & \frac{1}{n} |(\bra{init_n} U_n^{\dagger} \otimes \sqrt{n^2-n} \ket{\theta_n}) (\ket{CAT_n^+}\otimes \ket{\theta_{n}})|\\
	& = & \frac{\sqrt{n^2-n}}{n} |\braket{init_n|U_n^\dagger}{CAT^+_n}|\\
	& = & \frac{\sqrt{n^2-n}}{n}\\
	\end{eqnarray}
	
	And therefore, $lim_{n\rightarrow \infty} \braket{\varphi_n}{\tilde{CAT}^+} = 1$. 
	Since $\tilde{CAT}^+$ exhibit long range correlations between its first $n$ qubits (as it is basically the usual $\ket{CAT^+}$ state on those qubits) so does $\ket{\psi_n}$. Since we know that the output of a constant depth circuit exhibits no such correlations, we conclude that $\ket{\psi_n}$ is non trivial.

	\section{Appendix C: k-UDA implies k-UDP}
	\label{appD}
	Assume the n qubit state $\ket{\psi}$ is the unique ground state of some k-local hamiltonian $H=\sum_i H_i$, and suppose by contradiction that $\rho \neq \ketbra{\psi}{\psi}$ satisfies that for all $K\subseteq [n]$, $|K|\leq k$. $Tr_{\bar{K}}\ketbra{\psi}{\psi} = Tr_{\bar{K}}\rho$ then:
	
	\begin{eqnarray}
	Tr(H_n\rho) & = & Tr(\sum_i H_i\rho)\\
	& = & \sum_i Tr(H_i\rho)\\
	& = & \sum_i Tr(H_i Tr_{\bar{K_i}}\rho)\\
	& = & \sum_i Tr(H_i Tr_{\bar{K_i}}\ketbra{\psi}{\psi})\\
	& = & \sum_i Tr(H_i\ketbra{\psi}{\psi})\\
	& = & Tr(H_n\ketbra{\psi}{\psi})\\
	\end{eqnarray}
	
	Therefore, $\rho \neq \ketbra{\psi}{\psi}$ have the same energy. 
	Now write $\rho = \sum_i p_i\ketbra{\psi_i}{\psi_i}$ where $p_i\geq 0$ and $\sum_i p_i=1$. We have:
	
	\[
	Tr(\rho H)=\sum_i p_i\braket{\psi_i}{H|\psi_i}
	\]
	Since $\rho$ has minimal energy, every element $\braket{\psi_i}{H|\psi_i}$ in the above weighted sum must be minimal and it follows that $\ket{\psi_i}=\ket{\psi}$ for all i. Hence, $\rho = \ketbra{\psi}{\psi}$ and we get a contradiction.

\end{document}